\documentclass[format=acmsmall, review=false, screen=true]{acmart_edit}

\usepackage{booktabs} 

\acmJournal{CSUR}
\acmVolume{v}
\acmNumber{n}
\acmArticle{0}
\acmYear{2019}
\acmMonth{4}
\copyrightyear{2019}

\setcopyright{acmlicensed}


\usepackage{latexsym,amsmath,amssymb,color}
\usepackage[shortlabels]{enumitem}
\usepackage{xspace}
\usepackage{wrapfig}
\setlist{nolistsep}

\newenvironment{myquote}[1][\parindent]%
  {\list{}{\leftmargin=#1\rightmargin=#1}\item[]\it }%
  {\endlist}

\DeclareSymbolFont{frenchscript}{OMS}{ztmcm}{m}{n}
\DeclareMathSymbol{\A}{\mathord}{frenchscript}{65}    
\DeclareMathSymbol{\Ce}{\mathord}{frenchscript}{67}   
\DeclareMathSymbol{\Ge}{\mathord}{frenchscript}{71}   
\DeclareMathSymbol{\He}{\mathord}{frenchscript}{72}   
\newcommand{\I}{\mathcal{I}}                          
\DeclareMathSymbol{\Lab}{\mathord}{frenchscript}{76}  
\DeclareMathSymbol{\Pow}{\mathord}{frenchscript}{80}  
\DeclareMathSymbol{\T}{\mathord}{frenchscript}{84}    
\DeclareMathSymbol{\V}{\mathord}{frenchscript}{88}    
\DeclareMathAlphabet{\mathbbm}{U}{bbm}{m}{n}          
\newcommand{\IT}{\mathbbm{T}}               
\newcommand{\IN}{\mathbbm{N}}               
\newcommand{\Sec}[1]{Section~\ref{sec:#1}}
\newcommand{\Sects}{Sections}
\newcommand{\App}[1]{Appendix~\ref{app:#1}}
\newcommand{\df}[1]{Definition~\ref{df:#1}}
\newcommand{\thm}[1]{Theorem~\ref{thm:#1}}
\newcommand{\pr}[1]{Proposition~\ref{pr:#1}}

\newcommand{\ex}[1]{Example~\ref{ex:#1}}
\newcommand{\tab}[1]{Table~\ref{tab:#1}}
\newcommand{\fig}[1]{Figure~\ref{#1}}
\newcommand{\exs}{Examples\xspace}
\makeatletter
\def\comesfrom{\@transition\leftarrowfill}
\def\goesto{\@transition\rightarrowfill}
\def\ngoesto{\@transition\nrightarrowfill}
\def\Goesto{\@transition\Rightarrowfill}
\def\nGoesto{\@transition\nRightarrowfill}
\def\xmapsto{\@transition\mapstofill}
\def\nxmapsto{\@transition\nmapstofill}
\def\@transition#1{\@@transition{#1}}
\newbox\@transbox
\newbox\@arrowbox
\newbox\@downbox
\def\@@transition#1#2%
   {\setbox\@transbox\hbox
      {\vrule height 1.5ex depth .8ex width 0ex\hskip0.25em$\scriptstyle#2$\hskip0.25em}
   \ifdim\wd\@transbox<1.5em
      \setbox\@transbox\hbox to 1.5em{\hfil\box\@transbox\hfil}\fi
   \setbox\@arrowbox\hbox to \wd\@transbox{#1}
   \ht\@arrowbox\z@\dp\@arrowbox\z@
   \setbox\@transbox\hbox{$\mathop{\box\@arrowbox}\limits^{\box\@transbox}$}
   \dp\@transbox\z@\ht\@transbox 10pt
   \mathrel{\box\@transbox}}
\def\nrightarrowfill{$\m@th\mathord-\mkern-6mu%
  \cleaders\hbox{$\mkern-2mu\mathord-\mkern-2mu$}\hfill
  \mkern-6mu\mathord\not\mkern-2mu\mathord\rightarrow$}
\def\Rightarrowfill{$\m@th\mathord=\mkern-6mu%
  \cleaders\hbox{$\mkern-2mu\mathord=\mkern-2mu$}\hfill
  \mkern-6mu\mathord\Rightarrow$}
\def\nRightarrowfill{$\m@th\mathord=\mkern-6mu%
  \cleaders\hbox{$\mkern-2mu\mathord=\mkern-2mu$}\hfill
  \mkern-6mu\mathord\not\mathord\Rightarrow$}
\def\mapstofill{$\m@th\mathord\mapstochar\mathord-\mkern-6mu%
  \cleaders\hbox{$\mkern-2mu\mathord-\mkern-2mu$}\hfill
  \mkern-6mu\mathord\rightarrow$}
\def\nmapstofill{$\m@th\mathord\mapstochar\mathord-\mkern-6mu%
  \cleaders\hbox{$\mkern-2mu\mathord-\mkern-2mu$}\hfill
  \mkern-6mu\mathord\not\mkern-2mu\mathord\rightarrow$}
\makeatother 
\newcommand{\goto}[1]{\stackrel{#1}{\longrightarrow}} 
\newcommand{\Tr}{\textit{Tr}}                         
\newcommand{\source}{\textit{source\/}}               
\newcommand{\target}{\textit{target\/}}               
\newcommand{\instr}{\textit{instr\/}}                 
\newcommand{\comp}{\textit{comp\/}}                   
\newcommand{\cmp}{\textit{cp\/}}                      
\newcommand{\ac}{\textit{npc\/}}                      
\newcommand{\pc}{\textit{afc\/}}                      
\newcommand{\defis}{\stackrel{{\it def}}{=}}
\newcommand{\X}{x}
\newcommand{\Y}{y}
\newcommand{\W}{{\rm W}}
\newcommand{\St}{{\rm S}}
\newcommand{\Ac}{{\rm A}}
\newcommand{\Ts}{{\rm T}}
\newcommand{\D}{{\rm D}}
\newcommand{\In}{{\rm I}}
\newcommand{\Sy}{{\rm Z}}
\newcommand{\Ex}{{\rm Ext}}
\renewcommand{\Pr}{{\rm Pr}}
\newcommand{\Fu}{{\rm Fu}}
\newcommand{\E}{{\rm E}}
\newcommand{\Cp}{{\rm C}}
\newcommand{\Gr}{{\rm G}}
\newcommand{\J}{{\rm J}}
\newcommand{\dom}{{\it dom}}                          
\newcommand{\Left}{\textsc{l}}
\newcommand{\R}{\textsc{r}}
\newcommand{\M}{\textsc{m}}
\newcommand{\cT}{{\rm T}}                             
\newcommand{\aconc}{\mathrel{\mbox{$\smile\hspace{-1.15ex}\raisebox{2.5pt}{$\scriptscriptstyle\bullet$}$}}}
\newcommand{\naconc}{\mathrel{\mbox{$\,\not\!\smile\hspace{-1.15ex}\raisebox{2.5pt}{$\scriptscriptstyle\bullet$}$}}}
\newcommand{\conc}{\smile}                            
\newcommand{\nconc}{\,\not\!\smile}
\newcommand{\dcup}{\stackrel{\mbox{\huge .}}{\cup}}   
\newcommand{\plat}[1]{\raisebox{0pt}[0pt][0pt]{#1}}   
\newcommand{\init}[2]{{\bf initialise}\ \ensuremath{#1}{\ \bf to\ }\ensuremath{#2}}
\theoremstyle{acmdefinition}
\newtheorem{exam}{Example }

\begin{document}
\makeatletter
\@printpermissionfalse
\@printcopyrightfalse
\@acmownedfalse
 \acmPrice{}
 \def\@formatdoi#1{}
\makeatother
\title{Progress, Justness and Fairness}  

\author{Rob van Glabbeek}
\author{Peter H\"ofner}
\affiliation{%
  \institution{Data61, CSIRO and
  UNSW%
  \country{Australia}}
  }

\begin{abstract}
Fairness assumptions are a valuable tool when reasoning about systems. 
In this paper, we classify several fairness properties found in the literature and argue 
that most of them are too restrictive for many applications.
As an alternative we introduce the concept of \mbox{justness}.
\end{abstract}

\begin{CCSXML}
<ccs2012>
	<concept>
		<concept_id>10003752.10003753.10003761.10003764</concept_id>
		<concept_desc>\Theory of computation~Process calculi</concept_desc>
		<concept_significance>300</concept_significance>
	</concept>
	<concept>
		<concept_id>10003752.10010124.10010131.10010134</concept_id>
		<concept_desc>Theory of computation~Operational semantics</concept_desc>
		<concept_significance>300</concept_significance>
	</concept>
	<concept>
		<concept_id>10003752.10010124.10010138</concept_id>
		<concept_desc>Theory of computation~Program reasoning</concept_desc>
		<concept_significance>300</concept_significance>
	</concept>
	<concept>
		<concept_id>10011007.10010940.10010971.10010980</concept_id>
		<concept_desc>Software and its engineering~Software system models</concept_desc>
		<concept_significance>300</concept_significance>
	</concept>
	<concept>
		<concept_id>10011007.10010940.10010992.10010993</concept_id>
		<concept_desc>Software and its engineering~Correctness</concept_desc>
		<concept_significance>300</concept_significance>
	</concept>
	<concept>
		<concept_id>10011007.10011006.10011039.10011311</concept_id>
		<concept_desc>Software and its engineering~Semantics</concept_desc>
		<concept_significance>300</concept_significance>
	</concept>
	<concept>
		<concept_id>10002944.10011122.10002945</concept_id>
		<concept_desc>General and reference~Surveys and overviews</concept_desc>
		<concept_significance>100</concept_significance>
	</concept>
</ccs2012>
\end{CCSXML}

\ccsdesc[300]{Theory of computation~Process calculi}
\ccsdesc[300]{Theory of computation~Operational semantics}
\ccsdesc[300]{Theory of computation~Program reasoning}
\ccsdesc[300]{Software and its engineering~Software system models}
\ccsdesc[300]{Software and its engineering~Correctness}
\ccsdesc[300]{Software and its engineering~Semantics}
\ccsdesc[100]{General and reference~Surveys and overviews}

\keywords{Fairness, Justness, Liveness, Labelled Transition Systems}

\thanks{%
  Authors' addresses: 
  R.J.~van~Glabbeek, P.~H\"ofner,
  DATA61, CSIRO, Locked Bag 6016, UNSW, Sydney,  NSW 1466, Australia.
}

\maketitle
\newcommand{\ie}{i.e.\xspace}
\section{Introduction}
\emph{Fairness properties} reduce the sets of infinite potential runs of systems.
They are used in specification and verification.

As part of a system specification, a fairness property augments a
core behavioural specification, given for instance in process algebra
or pseudocode. The core behavioural specification typically generates
a transition system, and as such determines a set of finite and infinite potential
runs of the specified system.
A fairness property disqualifies some
of the infinite runs. These `unfair' runs are unintended by the overall specification.
It is then up to the implementation to ensure that only fair runs can
occur. This typically involves design decisions from which the
specification chooses to abstract.

As part of a verification, a fairness property models an assumption on
the specified or implemented system.  Fairness assumptions are normally used when verifying
\emph{liveness properties}, saying that something good will eventually
happen. Without making the fairness assumption, the liveness property
may not hold.  When verifying a liveness property of a specification
under a fairness assumption, this guarantees that the liveness property
holds for any implementation that correctly captures the fairness
requirement. When verifying a liveness property of an implemented
system under a fairness assumption, the outcome is a conditional
guarantee, namely that the liveness property holds as long as the
system behaves fairly;  when the system does not behave fairly, all
bets are off. Such a conditional guarantee tells us that 
there are no
design flaws in the system other than the---often necessary---reliance on fairness.

\emph{Progress properties} reduce the sets of finite potential runs of
systems. They play the same role as fairness assumptions do for
infinite runs. In the verification of liveness properties, progress
assumptions are essential. Although many interesting liveness
properties hold without making fairness assumptions of any kind, no
interesting liveness property holds without assuming progress.  On the
other hand, whereas a fairness assumption may be far-fetched, in the
sense that run-of-the-mill implementations tend not to be fair,
progress holds in almost any context, and run-of-the-mill
implementations almost always satisfy the required progress
properties. For this reason, progress assumptions are often made
implicitly, without acknowledgement that the proven liveness property
actually depends on the validity of such an assumption.

One contribution of this paper is a taxonomy of progress and fairness
properties found in the literature. These 
are ordered by their
strength in ruling out potential runs of systems, and thereby their
strength as an aid in proving liveness properties.
Our classification includes the  `classical' notions of fairness, 
such as weak and strong fairness, fairness of actions, transitions, instructions and components,
as well as extreme fairness. We also include \emph{probabilistic fairness}
(cf.~\cite{Pn83}) and what we call \emph{full fairness} \cite{BK86,BBK87a}.
These are methods to obtain liveness properties, that---like fairness
assumptions---do not require the liveness property to hold for each
infinite potential run. They 
differ from `classical' fairness assumptions in that
no specific set of potential runs is ruled out.

Another contribution is the introduction of the concept 
of \emph{justness}.  Justness assumptions form a middle ground  between
progress and fairness assumptions. They rule out a collection of
infinite runs, rather than finite ones, but based on a criterion that
is a progress assumption in a distributed context taken to its
logical conclusion.  Justness is weaker than most fairness assumptions, in the sense
that fewer runs are ruled out. It is more fundamental than fairness, in
the sense that many liveness properties of realistic systems do not
hold without assuming at least justness, whereas a fairness assumption
would be overkill.  As for progress, run-of-the-mill implementations
almost always satisfy the required justness properties.

The paper is organised as follows.
\Sec{liveness} briefly recapitulates transitions systems and  the concept of liveness (properties). 
\Sec{progress} recalls the assumption of progress, saying that a system will not get stuck without reason.
We recapitulate the classical notions of weak and strong fairness in \Sec{fairness}, and formulate them in the general setting of transition systems.
In \Sec{taxonomy} we capture as many as twelve notions of fairness found in the literature as instances
of a uniform definition, and order them in a hierarchy.
We evaluate these notions of fairness against three criteria from the literature, and one new one, in \Sec{criteria}. 
In \Sec{sw} we extend our taxonomy by strong weak fairness, an intermediate concept between weak and strong fairness. 
We argue that such a notion is needed as in realistic scenarios weak fairness can be too weak, and strong fairness too strong.
\Sec{ltl} recalls the formulations of weak and strong fairness in linear-time temporal logic. 
In \Sec{full} we present the strongest notion of fairness conceivable, and call it
full fairness. \Sec{SFTransitions} proves that on finite-state transition systems, strong
fairness of transitions, one of the notions classified in \Sec{taxonomy}, is as strong as full fairness.
\Sects~\ref{sec:probabilistic} and \ref{sec:extreme} relate our taxonomy to two further
  concepts: probabilistic and extreme fairness. We show that 
  probabilistic fairness is basically identical to strong fairness of transitions,
  whereas extreme fairness is the strongest possible notion of fairness that (unlike full fairness)
  fits the unifying definition proposed in \Sec{fairness}.
\Sec{extreme} concludes the comparison of fairness notions found in the literature. 
We then argue, in \Sec{justness}, that the careless use of any fairness assumption in verification tasks can yield false results. 
To compensate for this deficiency, we discuss a stronger version of progress, called justness, which can be used without qualms.
\Sec{J-fairness} develops fairness notions based on justness, and adds them to the hierarchy of fairness notions.
\Sec{events} discusses fairness of events, another notion of fairness from the literature, and shows that, although
  defined in a rather different way, it coincides with justness.
Sections~\ref{sec:liveness}--\ref{sec:events} deal with closed systems,
having no run-time interactions with the environment. This allows us a simplified treatment
  that is closer to the existing literature.
Section~\ref{sec:reactive} generalises almost all definitions and results to reactive systems,
interacting with their environments through synchronisation of actions.
This makes the fairness notions more widely applicable. 
The penultimate \Sec{eval} evaluates the notions of fairness discussed in the paper against the criteria for appraising fairness properties.
We relate our hierarchy to earlier classifications of fairness assumptions found in the literature in \Sec{conclusion}, and close with a short 
discussion about future work.

\section{Transition Systems and Liveness Properties}\label{sec:liveness}

In order to formally define liveness as well as progress, justness and fairness
properties,  we take transition systems as our system model.
Most other specification formalisms, such as process algebras, pseudocode, or Petri nets, have a
straightforward interpretation in terms of transition systems. Using this, any liveness,
progress, justness or fairness property defined on transition systems is also defined in terms of
those specification formalisms.

Depending on the type of property, the transition systems need various augmentations.
We introduce these as needed, starting with basic transition systems.

\renewcommand{\L}{\ell}
\begin{definition}\label{df:TS}
A \emph{transition system} is a tuple $G=(S, \Tr, \source,\target,I)$ with $S$ and $\Tr$ sets
(of \emph{states} and \emph{transitions}),
$\source,\target:\Tr\rightarrow S$,
and $I\subseteq S$ (the \emph{initial} states). 
\end{definition}
Later we will augment this definition with various attributes of transitions, such that
different transitions between the same two states may have different attributes.
It is for this reason that we do not simply 
introduce transitions as ordered pairs of states.
One of the simplest forms of such an augmentation is a \emph{labelled transition system}, which
features an additional function $\L$ mapping transitions into some set of labels.

\begin{exam}\label{ex:progress}
The program $x:=1;~y:=3$ is represented by the labelled transition system
$(\{1,2,3\}, \{t_1,t_2\}, \source, \target, \{1\})$, where
$\source(t_1)=1$, $\target(t_1)=2=\source(t_2)$, \mbox{$\target(t_2)=3$}; the labelling is given by 
$\L(t_1) = (x:=1)$ and $\L(t_2) = (y:=3)$. Any transition system (including any augmentation) can be
  represented diagrammatically. The program $x:=1;~y:=3$ is depicted by the following diagram.
\\
{\expandafter\ifx\csname graph\endcsname\relax
   \csname newbox\expandafter\endcsname\csname graph\endcsname
\fi
\ifx\graphtemp\undefined
  \csname newdimen\endcsname\graphtemp
\fi
\expandafter\setbox\csname graph\endcsname
 =\vtop{\vskip 0pt\hbox{%
\pdfliteral{
q [] 0 d 1 J 1 j
0.576 w
0.576 w
28.8 -4.32 m
28.8 -6.70587 26.86587 -8.64 24.48 -8.64 c
22.09413 -8.64 20.16 -6.70587 20.16 -4.32 c
20.16 -1.93413 22.09413 0 24.48 0 c
26.86587 0 28.8 -1.93413 28.8 -4.32 c
S
Q
}%
    \graphtemp=.5ex
    \advance\graphtemp by 0.060in
    \rlap{\kern 0.340in\lower\graphtemp\hbox to 0pt{\hss {\scriptsize 1}\hss}}%
\pdfliteral{
q [] 0 d 1 J 1 j
0.576 w
0.072 w
q 0 g
12.96 -2.52 m
20.16 -4.32 l
12.96 -6.12 l
12.96 -2.52 l
B Q
0.576 w
0 -4.32 m
12.96 -4.32 l
S
100.8 -4.32 m
100.8 -6.70587 98.86587 -8.64 96.48 -8.64 c
94.09413 -8.64 92.16 -6.70587 92.16 -4.32 c
92.16 -1.93413 94.09413 0 96.48 0 c
98.86587 0 100.8 -1.93413 100.8 -4.32 c
S
Q
}%
    \graphtemp=.5ex
    \advance\graphtemp by 0.060in
    \rlap{\kern 1.340in\lower\graphtemp\hbox to 0pt{\hss {\scriptsize 2}\hss}}%
\pdfliteral{
q [] 0 d 1 J 1 j
0.576 w
0.072 w
q 0 g
84.96 -2.52 m
92.16 -4.32 l
84.96 -6.12 l
84.96 -2.52 l
B Q
0.576 w
28.8 -4.32 m
84.96 -4.32 l
S
Q
}%
    \graphtemp=\baselineskip
    \multiply\graphtemp by -1
    \divide\graphtemp by 2
    \advance\graphtemp by .5ex
    \advance\graphtemp by 0.060in
    \rlap{\kern 0.840in\lower\graphtemp\hbox to 0pt{\hss $x:=1$\hss}}%
\pdfliteral{
q [] 0 d 1 J 1 j
0.576 w
172.8 -4.32 m
172.8 -6.70587 170.86587 -8.64 168.48 -8.64 c
166.09413 -8.64 164.16 -6.70587 164.16 -4.32 c
164.16 -1.93413 166.09413 0 168.48 0 c
170.86587 0 172.8 -1.93413 172.8 -4.32 c
h q 0.7 g
B Q
Q
}%
    \graphtemp=.5ex
    \advance\graphtemp by 0.060in
    \rlap{\kern 2.340in\lower\graphtemp\hbox to 0pt{\hss {\scriptsize 3}\hss}}%
\pdfliteral{
q [] 0 d 1 J 1 j
0.576 w
0.072 w
q 0 g
156.96 -2.52 m
164.16 -4.32 l
156.96 -6.12 l
156.96 -2.52 l
B Q
0.576 w
100.8 -4.32 m
156.96 -4.32 l
S
Q
}%
    \graphtemp=\baselineskip
    \multiply\graphtemp by -1
    \divide\graphtemp by 2
    \advance\graphtemp by .5ex
    \advance\graphtemp by 0.060in
    \rlap{\kern 1.840in\lower\graphtemp\hbox to 0pt{\hss $y:=3$\hss}}%
    \hbox{\vrule depth0.120in width0pt height 0pt}%
    \kern 2.400in
  }%
}%

\centerline{\raisebox{3ex}{\box\graph}~.}}\\[1ex]
\noindent
Here the short arrow without a source state indicates an initial state; we do not present the names of the transitions.
\end{exam}
Progress and fairness assumptions are often made when verifying liveness properties.
A \emph{liveness property} says that ``something [good] must happen'' eventually \cite{Lam77}.
One of the ways to formalise this in terms of transition systems is to postulate a set of good
states $\Ge \subseteq S$. In \ex{progress}, 
for instance, 
the good thing could be $y=3$, so that $\Ge$
consists of state 3 only---indicated by shading.  The liveness property given by $\Ge$ is
now 
defined to hold iff each system run---a concept yet to be formalised---reaches a state~$g\in\Ge$.

We now formalise a potential run of a system as a \emph{rooted path} in its representation as a
transition system.

\begin{definition}
A \emph{path} in a transition system $G=(S,\Tr,$ $\source,\target,I)$ is an alternating
sequence $\pi = s_0\,t_1\,s_1\,t_2\,s_2\cdots$
of states and transitions, starting with a state and either being infinite or ending with a state,
such that
$\source(t_i)=s_{i-1}$ and $\target(t_i)=s_i$ for all relevant $i$; it is \emph{rooted} if $s_0\in I$.
\end{definition}
Example \ref{ex:progress} has three potential runs, represented by the rooted paths $1\,t\,2\,u\,3$, $1\,t\,2$ and $1$,
where $t$ and $u$ denote the two transitions corresponding to the instructions $x:=1$ and $y:=3$.
The rooted path $1\,t\,2$ models a run consisting of the
execution of $x:=1$ only, without ever following this up with the instruction $y:=3$. In that run the
system stagnates in state 2. Likewise, the rooted path $1$ models a run where nothing ever happens.

Including these potential system runs as actual runs leads to a model of concurrency in which no
interesting liveness property ever holds, e.g. the liveness property $\Ge$ for \ex{progress}.%
\footnote{A \emph{partial run} is an initial part of a system run. Naturally, the paths
  $1\,t\,2$ and $1$ always model partial runs in \ex{progress}, even when we exclude them as runs.
  Partial runs play no role in this paper.}

Progress, justness and fairness assumptions are all used to
exclude some of the rooted paths as system runs. Each of them
can be formulated as, or gives rise to, a predicate on paths in transition systems.
We call a path \emph{progressing}, \emph{just} or \emph{fair}, if it meets the progress, justness or
fairness assumption currently under consideration. We call it \emph{complete} if it meets all
progress, justness and fairness assumptions we currently impose, so that
a rooted path is complete iff it models a run of the represented system that can actually occur.

\section{Progress}\label{sec:progress}

If the rooted path $1\,t\,2$ of \ex{progress} is not excluded and models an actual system 
run, the system is able to stagnate in state~2, and the program does not satisfy the
liveness property $\Ge$.
The assumption of \emph{progress}
excludes this type of behaviour.\footnote{{\sc Misra} \cite{Mis88,Mis01} calls this the `minimal
  progress assumption', attributed to {\sc Dijkstra}. The related `maximal progress assumption'
  \cite{Wa91}, often used in timed models, says that `if an agent can proceed by performing $\tau$-actions, then it should never wait unnecessarily'.
  The term `progress' appears in {\sc Gouda, Manning \& Yu} \cite{GMY84}, where a `progress state'
  is a state in which further transitions are enabled.

  In \cite{Mis01} {\sc Misra} uses `progress' as a synonym for `liveness'. 
  In session types, `progress' and 
  `global progress' are used as names of particular liveness properties \cite{CDPY13}; this use
  has no relation with ours.}
For closed systems, it says that
\begin{myquote}{
a (transition) system in a state that admits a transition will eventually progress, \ie perform a transition.}
\end{myquote}
In other words, a run will never get stuck in a state with outgoing transitions.

\begin{definition}
A path in a transition system representing a closed system is \emph{progressing} if
either it is infinite or its last state is the source of no transition.
\end{definition}
When assuming progress---one only considers progressing paths---the program of \ex{progress} satisfies $\Ge$.

Progress is assumed in almost all work on process algebra that deals with liveness properties, mostly implicitly. 
{\sc Milner} makes an explicit progress assumption for the process algebra CCS in \cite{Mi80}.
A progress assumption is built into the temporal logics LTL \cite{Pn77}, CTL \cite{EC82} and CTL* \cite{EH86},
namely by disallowing states without outgoing transitions and evaluating temporal formulas
by quantifying over infinite paths only.%
\footnote{Exceptionally, states without outgoing transitions are allowed, and then
    quantification is over all \emph{maximal} paths, \ie\ paths that are infinite or end in a state
    without outgoing transitions \cite{DV95}.}
In \cite{KdR83} the `multiprogramming axiom' is a
progress assumption, whereas in \cite{AFK88} progress is assumed as a `fundamental liveness property'.
In many other papers on fairness a progress assumption is made implicitly~\cite{AO83}.

An assumption related to progress is that `atomic actions always terminate' \cite{OL82}.  Here
we have built this assumption into our definition of a path. To model runs in which the last
transition never terminates, one could allow  paths to end in a transition rather than a state.

\section{Fairness}\label{sec:fairness}

\begin{exam}\label{ex:weak}
Consider the following program \cite{LPS81}.
\[
\begin{array}{@{}c@{}}
\mbox{}\hspace{-4.5cm}\init{y}{1}\\[0.8ex]
\left.
\begin{array}{@{}l@{}}
y:=0
\end{array}
~~~
\right \|
~~~
\left.
\begin{array}{@{}l@{}}
{\bf while~}(y>0){\bf ~do~~~}y:=y+1{\bf ~~~od}
\end{array}
\right.
\end{array}
\]

\noindent This program runs two parallel threads:
the first simply sets $y$ to $0$, while the second one repeatedly increments $y$, as long as $y>0$.
We assume that in the second thread, the evaluation of the guard $y>0$ and the assignment
 $y:=y+1$ happen in one atomic step, so that it is not possible that first the guard is evaluated (positively),
 then $y:=0$ is executed, and subsequently $y:=y+1$.
 
{\makeatletter
\let\par\@@par
\par\parshape0
\everypar{}\begin{wrapfigure}[4]{r}{0.45\textwidth}
 \vspace{-.5ex}%
  \input{weak}%
  \centerline{\hspace{5mm}\raisebox{1ex}{\box\graph}}%
   \vspace{-15ex}
  \end{wrapfigure}
The transition system on the right describes the behaviour of this program.
Since \init{y}{1} is an
initialisation step, it is not shown in the transition system.
In the first state $y$ has a positive value, whereas in the second it is $0$.
\par}

A different transition system describing the same program is shown below.\\
{\footnotesize \expandafter\ifx\csname graph\endcsname\relax
   \csname newbox\expandafter\endcsname\csname graph\endcsname
\fi
\ifx\graphtemp\undefined
  \csname newdimen\endcsname\graphtemp
\fi
\expandafter\setbox\csname graph\endcsname
 =\vtop{\vskip 0pt\hbox{%
\pdfliteral{
q [] 0 d 1 J 1 j
0.576 w
0.576 w
26.208 -3.528 m
26.208 -5.476461 24.628461 -7.056 22.68 -7.056 c
20.731539 -7.056 19.152 -5.476461 19.152 -3.528 c
19.152 -1.579539 20.731539 0 22.68 0 c
24.628461 0 26.208 -1.579539 26.208 -3.528 c
S
Q
}%
    \graphtemp=.5ex
    \advance\graphtemp by 0.049in
    \rlap{\kern 0.000in\lower\graphtemp\hbox to 0pt{\hss ~\hss}}%
\pdfliteral{
q [] 0 d 1 J 1 j
0.576 w
0.072 w
q 0 g
11.952 -1.728 m
19.152 -3.528 l
11.952 -5.328 l
11.952 -1.728 l
B Q
0.576 w
0.72 -3.528 m
11.952 -3.528 l
S
82.872 -3.528 m
82.872 -5.476461 81.292461 -7.056 79.344 -7.056 c
77.395539 -7.056 75.816 -5.476461 75.816 -3.528 c
75.816 -1.579539 77.395539 0 79.344 0 c
81.292461 0 82.872 -1.579539 82.872 -3.528 c
S
Q
}%
\color{red}%
\pdfliteral{
q [] 0 d 1 J 1 j
0.576 w
26.208 -3.528 m
75.816 -3.528 l
S
Q
}%
    \graphtemp=\baselineskip
    \multiply\graphtemp by -1
    \divide\graphtemp by 2
    \advance\graphtemp by .5ex
    \advance\graphtemp by 0.049in
    \rlap{\kern 0.709in\lower\graphtemp\hbox to 0pt{\hss $y:=y{+}1$\hss}}%
\color{black}%
\pdfliteral{
q [] 0 d 1 J 1 j
0.576 w
0.072 w
q 0 g
68.616 -1.728 m
75.816 -3.528 l
68.616 -5.328 l
68.616 -1.728 l
B Q
0.576 w
74.376 -3.528 m
68.616 -3.528 l
S
26.208 -26.496 m
26.208 -28.444461 24.628461 -30.024 22.68 -30.024 c
20.731539 -30.024 19.152 -28.444461 19.152 -26.496 c
19.152 -24.547539 20.731539 -22.968 22.68 -22.968 c
24.628461 -22.968 26.208 -24.547539 26.208 -26.496 c
h q 0.5 g
B Q
0.072 w
q 0 g
24.48 -15.768 m
22.68 -22.968 l
20.88 -15.768 l
24.48 -15.768 l
B Q
0.576 w
22.68 -7.056 m
22.68 -15.768 l
S
Q
}%
    \graphtemp=.5ex
    \advance\graphtemp by 0.209in
    \rlap{\kern 0.315in\lower\graphtemp\hbox to 0pt{\hss $\hspace{28pt}y:=0$\hss}}%
\pdfliteral{
q [] 0 d 1 J 1 j
0.576 w
139.608 -3.528 m
139.608 -5.476461 138.028461 -7.056 136.08 -7.056 c
134.131539 -7.056 132.552 -5.476461 132.552 -3.528 c
132.552 -1.579539 134.131539 0 136.08 0 c
138.028461 0 139.608 -1.579539 139.608 -3.528 c
S
Q
}%
\color{red}%
\pdfliteral{
q [] 0 d 1 J 1 j
0.576 w
82.944 -3.528 m
132.552 -3.528 l
S
Q
}%
    \graphtemp=\baselineskip
    \multiply\graphtemp by -1
    \divide\graphtemp by 2
    \advance\graphtemp by .5ex
    \advance\graphtemp by 0.049in
    \rlap{\kern 1.496in\lower\graphtemp\hbox to 0pt{\hss $y:=y{+}1$\hss}}%
\color{black}%
\pdfliteral{
q [] 0 d 1 J 1 j
0.576 w
0.072 w
q 0 g
125.352 -1.728 m
132.552 -3.528 l
125.352 -5.328 l
125.352 -1.728 l
B Q
0.576 w
131.112 -3.528 m
125.352 -3.528 l
S
82.872 -26.496 m
82.872 -28.444461 81.292461 -30.024 79.344 -30.024 c
77.395539 -30.024 75.816 -28.444461 75.816 -26.496 c
75.816 -24.547539 77.395539 -22.968 79.344 -22.968 c
81.292461 -22.968 82.872 -24.547539 82.872 -26.496 c
h q 0.5 g
B Q
0.072 w
q 0 g
81.144 -15.768 m
79.344 -22.968 l
77.544 -15.768 l
81.144 -15.768 l
B Q
0.576 w
79.344 -7.056 m
79.344 -15.768 l
S
Q
}%
    \graphtemp=.5ex
    \advance\graphtemp by 0.209in
    \rlap{\kern 1.102in\lower\graphtemp\hbox to 0pt{\hss $\hspace{28pt}y:=0$\hss}}%
\pdfliteral{
q [] 0 d 1 J 1 j
0.576 w
196.272 -3.528 m
196.272 -5.476461 194.692461 -7.056 192.744 -7.056 c
190.795539 -7.056 189.216 -5.476461 189.216 -3.528 c
189.216 -1.579539 190.795539 0 192.744 0 c
194.692461 0 196.272 -1.579539 196.272 -3.528 c
S
Q
}%
\color{red}%
\pdfliteral{
q [] 0 d 1 J 1 j
0.576 w
139.608 -3.528 m
189.216 -3.528 l
S
Q
}%
    \graphtemp=\baselineskip
    \multiply\graphtemp by -1
    \divide\graphtemp by 2
    \advance\graphtemp by .5ex
    \advance\graphtemp by 0.049in
    \rlap{\kern 2.283in\lower\graphtemp\hbox to 0pt{\hss $y:=y{+}1$\hss}}%
\color{black}%
\pdfliteral{
q [] 0 d 1 J 1 j
0.576 w
0.072 w
q 0 g
182.016 -1.728 m
189.216 -3.528 l
182.016 -5.328 l
182.016 -1.728 l
B Q
0.576 w
187.776 -3.528 m
182.016 -3.528 l
S
139.608 -26.496 m
139.608 -28.444461 138.028461 -30.024 136.08 -30.024 c
134.131539 -30.024 132.552 -28.444461 132.552 -26.496 c
132.552 -24.547539 134.131539 -22.968 136.08 -22.968 c
138.028461 -22.968 139.608 -24.547539 139.608 -26.496 c
h q 0.5 g
B Q
0.072 w
q 0 g
137.88 -15.768 m
136.08 -22.968 l
134.28 -15.768 l
137.88 -15.768 l
B Q
0.576 w
136.08 -7.056 m
136.08 -15.768 l
S
Q
}%
    \graphtemp=.5ex
    \advance\graphtemp by 0.209in
    \rlap{\kern 1.890in\lower\graphtemp\hbox to 0pt{\hss $\hspace{28pt}y:=0$\hss}}%
\pdfliteral{
q [] 0 d 1 J 1 j
0.576 w
253.008 -3.528 m
253.008 -5.476461 251.428461 -7.056 249.48 -7.056 c
247.531539 -7.056 245.952 -5.476461 245.952 -3.528 c
245.952 -1.579539 247.531539 0 249.48 0 c
251.428461 0 253.008 -1.579539 253.008 -3.528 c
S
Q
}%
\color{red}%
\pdfliteral{
q [] 0 d 1 J 1 j
0.576 w
196.272 -3.528 m
245.88 -3.528 l
S
Q
}%
    \graphtemp=\baselineskip
    \multiply\graphtemp by -1
    \divide\graphtemp by 2
    \advance\graphtemp by .5ex
    \advance\graphtemp by 0.049in
    \rlap{\kern 3.071in\lower\graphtemp\hbox to 0pt{\hss $y:=y{+}1$\hss}}%
\color{black}%
\pdfliteral{
q [] 0 d 1 J 1 j
0.576 w
0.072 w
q 0 g
238.68 -1.728 m
245.88 -3.528 l
238.68 -5.328 l
238.68 -1.728 l
B Q
0.576 w
244.512 -3.528 m
238.68 -3.528 l
S
196.272 -26.496 m
196.272 -28.444461 194.692461 -30.024 192.744 -30.024 c
190.795539 -30.024 189.216 -28.444461 189.216 -26.496 c
189.216 -24.547539 190.795539 -22.968 192.744 -22.968 c
194.692461 -22.968 196.272 -24.547539 196.272 -26.496 c
h q 0.5 g
B Q
0.072 w
q 0 g
194.544 -15.768 m
192.744 -22.968 l
190.944 -15.768 l
194.544 -15.768 l
B Q
0.576 w
192.744 -7.056 m
192.744 -15.768 l
S
Q
}%
    \graphtemp=.5ex
    \advance\graphtemp by 0.209in
    \rlap{\kern 2.677in\lower\graphtemp\hbox to 0pt{\hss $\hspace{28pt}y:=0$\hss}}%
\pdfliteral{
q [] 0 d 1 J 1 j
0.576 w
309.672 -3.528 m
309.672 -5.476461 308.092461 -7.056 306.144 -7.056 c
304.195539 -7.056 302.616 -5.476461 302.616 -3.528 c
302.616 -1.579539 304.195539 0 306.144 0 c
308.092461 0 309.672 -1.579539 309.672 -3.528 c
S
Q
}%
\color{red}%
\pdfliteral{
q [] 0 d 1 J 1 j
0.576 w
253.008 -3.528 m
302.616 -3.528 l
S
Q
}%
    \graphtemp=\baselineskip
    \multiply\graphtemp by -1
    \divide\graphtemp by 2
    \advance\graphtemp by .5ex
    \advance\graphtemp by 0.049in
    \rlap{\kern 3.858in\lower\graphtemp\hbox to 0pt{\hss $y:=y{+}1$\hss}}%
\color{black}%
\pdfliteral{
q [] 0 d 1 J 1 j
0.576 w
0.072 w
q 0 g
295.416 -1.728 m
302.616 -3.528 l
295.416 -5.328 l
295.416 -1.728 l
B Q
0.576 w
301.176 -3.528 m
295.416 -3.528 l
S
253.008 -26.496 m
253.008 -28.444461 251.428461 -30.024 249.48 -30.024 c
247.531539 -30.024 245.952 -28.444461 245.952 -26.496 c
245.952 -24.547539 247.531539 -22.968 249.48 -22.968 c
251.428461 -22.968 253.008 -24.547539 253.008 -26.496 c
h q 0.5 g
B Q
0.072 w
q 0 g
251.28 -15.768 m
249.48 -22.968 l
247.68 -15.768 l
251.28 -15.768 l
B Q
0.576 w
249.48 -7.056 m
249.48 -15.768 l
S
Q
}%
    \graphtemp=.5ex
    \advance\graphtemp by 0.209in
    \rlap{\kern 3.465in\lower\graphtemp\hbox to 0pt{\hss $\hspace{28pt}y:=0$\hss}}%
    \graphtemp=.5ex
    \advance\graphtemp by 0.049in
    \rlap{\kern 4.488in\lower\graphtemp\hbox to 0pt{\hss $\!\!\cdots$\hss}}%
    \hbox{\vrule depth0.417in width0pt height 0pt}%
    \kern 4.488in
  }%
}%

\centerline{\box\graph}}
\vspace{1.5ex}

\noindent
A liveness property is that eventually $y=0$.
It is formalised by letting $\Ge$ consist of the rightmost state in the first,
or of all the bottom states in the second transition system.
When assuming progress only, this liveness property does not hold; the only counterexample is
the red coloured path.
This path might be considered `unfair', because the left-hand thread is never executed.
A suitable fairness assumption rules out this path as a possible system run.
\end{exam}
To formalise fairness we use transition systems $G=(S,\Tr,\source,\target,I,\T)$
that are augmented with a set $\T\subseteq\Pow({\Tr})$ of \emph{tasks}
$T\subseteq {\Tr}$, each being a set of transitions.%

\begin{definition}\label{df:fair}
For a given transition system $G\mathbin=(S,\Tr,\source,\target,I,\T)$, a task $T\mathbin\in\T$
is~\emph{enabled} in a state $s\in S$ if there exists a transition $t\in T$ with $\source(t)=s$.
The task is said to be \emph{perpetually enabled} on a path $\pi$ in $G$, if it is enabled in every state of
$\pi$. It is \emph{relentlessly enabled} on $\pi$, if each suffix of $\pi$ contains a state
in which it is enabled.\footnote{This is the case if the task is enabled in infinitely many states
of $\pi$, in a state that occurs infinitely often in $\pi$, or in the last state of a finite $\pi$.}
It \emph{occurs} in $\pi$ if $\pi$ contains a transition $t\in T$.

A path $\pi$ in $G$ is \emph{weakly fair} if, for every suffix $\pi'$ of $\pi$,
each task that is perpetually enabled on $\pi'$, occurs in $\pi'$.

A path $\pi$ in $G$ is \emph{strongly fair} if, for every suffix $\pi'$ of $\pi$,
each task that is relentlessly  enabled on $\pi'$,
occurs~in~$\pi'$.
\end{definition}

The purpose of these properties is to rule out certain paths from consideration (the unfair ones),
namely those with a suffix in which a task is enabled perpetually (or, in the strong case, relentlessly)
yet never occurs.
To avoid the quantification over suffixes, these properties can be reformulated:

A path $\pi$ in $G$ is \emph{weakly fair} if each task that from some state onwards is perpetually enabled
on $\pi$, occurs infinitely often in $\pi$. 

A path $\pi$ in $G$ is \emph{strongly fair} if each task that is relentlessly  enabled on $\pi$,
occurs infinitely often in $\pi$.

Clearly, any path that is strongly fair, is also weakly fair.

\emph{Weak} [\emph{strong}] \emph{fairness} is the assumption that only weakly [strongly] fair
rooted paths represent system runs.
When applied to pseudocode, process algebra expressions or other system specifications that translate
to transition systems, the concepts of weak and strong fairness are parametrised by the way to
extract the collection $\T$ of tasks from the specification. For one such extraction,
called \emph{fairness of directions} in \Sec{taxonomy}, 
weak and strong fairness were introduced in \cite{AO83}. Weak fairness was introduced independently,
under the name \emph{justice}, in \cite{LPS81}, and strong fairness, under the name \emph{fairness},
in \cite{GFMdR85} and \cite{LPS81}.
Strong fairness is called \emph{compassion} in \cite{MP89,MP95}.

Recall that the transition systems of \ex{weak}, when merely assuming progress, do not satisfy
the liveness property $\Ge$.
To change that verdict, we can assume fairness, by defining a collection $\T$ of tasks.
We can, for instance, declare two tasks: $T_1$ being the set of all transitions labelled $y:=y+1$ and
$T_2$ the set of all transitions labelled $y:=0$. Now the red path in the second transition system, as well as all its prefixes,
becomes unfair, since task $T_2$ is perpetually enabled, yet never occurs.
Thus, in \ex{weak}, $\Ge$ does hold when assuming weak fairness, with $\T=\{T_1,T_2\}$.
In fact, it suffices to take $\T=\{T_2\}$.
Assuming strong fairness gives the same result.

\begin{exam}\label{ex:mutex}
\hfuzz 3pt
Consider the following basic mutual exclusion protocol \cite{LPS81}.
\vspace{.3ex}
\[
\begin{array}{@{}c@{}}
\init{y}{1}\\[1ex]
\left.
\begin{array}{@{}l@{}}
{\bf while}~(true)\\
\left\{\begin{array}{ll}
\ell_0 & {\bf noncritical~section}\\
\ell_1 & {\bf await} (y>0) \{ y:=y{-}1\}\\ 
\ell_2 & {\bf critical~section}\\
\ell_3 & y:=y{+}1
\end{array}\right.
\end{array}
~~~
\right \|
~~~
\begin{array}{@{}l@{}}
{\bf while}~(true)\\
\left\{\begin{array}{ll}
m_0 & {\bf noncritical~section}\\
m_1 & {\bf await} (y>0) \{ y:=y{-}1\}\\
m_2 & {\bf critical~section}\\
m_3 & y:=y{+}1
\end{array}\right.
\end{array}
\end{array}
\vspace{1ex}
\]

\noindent Here instructions $\ell_1$ and $m_1$ wait until $y>0$ (possibly
forever) and then atomically execute the instruction $y:=y-1$,
meaning without allowing the other process to change the value of $y$
between the evaluation $y>0$ and the assignment $y:=y-1$.
{\sc Dijkstra} \cite{Dijkstra65} abbreviates this instruction as P$(y)$; it
models entering of a critical section, with $y$ as semaphore.
Instructions $\ell_3$ and $m_3$ model leaving the critical section,
and are  abbreviated V$(y)$.
{\makeatletter
\let\par\@@par
\par\parshape0
\everypar{}\begin{wrapfigure}[6]{r}{0.35\textwidth}
 \vspace{-1ex}
 \input{mutex}
  \centerline{\raisebox{1ex}{\box\graph}}
 \end{wrapfigure}
The induced transition system is depicted on the right, when
ignoring lines $\ell_0$ and $m_0$, which play no significant role.

Let $\Ge$ contain the single marked state; so the good thing we hope to accomplish is the occurrence
of $\ell_{2}$, saying that the left-hand process executed its critical section.
Our tasks could be $L=\{\ell_1,\ell_2,\ell_3\}$ and $M=\{m_1,m_2,m_3\}$.
Here weak fairness is insufficient to ensure $\Ge$,
but under the assumption of strong fairness this liveness property holds.
\par}
\end{exam}

\noindent In \cite{GFMdR85,LPS81}, the assumptions of weak and strong fairness restrict the set of infinite runs only.
These papers consider a path $\pi$ in $G$ weakly fair if either it is finite, or each task that from
some state onwards is perpetually enabled on $\pi$, occurs infinitely often in $\pi$. 
Likewise, a path is considered strongly fair if either it is finite, or each task that is
relentlessly  enabled on $\pi$ occurs infinitely often in~$\pi$.
This makes it necessary to assume progress in addition to fairness when evaluating the validity of
liveness properties. 

In this paper we have dropped the exceptions for finite paths.
The effect of this is that the progress assumption is included in weak or strong fairness, at least
when $\bigcup_{T\in\T}T=\Tr$. The latter condition can always be realised by adding the task $\Tr$ to
$\T$---this has no effect on the resulting notions of weak and strong fairness for infinite paths.
Moreover, dropping the exceptions for finite paths poses no further restrictions on finite paths
beyond progress.

\section{A Taxonomy of Fairness Properties}\label{sec:taxonomy}

For a given transition system, the concepts of strong and weak fairness
are completely determined by the collection $\T$ of tasks.
We now address the question how to obtain $\T$.
The different notions of fairness found in the literature can be cast
as different answers to that question.

As a first classification we distinguish local and
global fairness properties. 

A \emph{global fairness} property is
formulated for transition systems that are not naively equipped with
a set $\T$, but may have some other auxiliary structure, inherited
from the system specification (such as program code or a process
algebra expression) that gave rise to that transition system. The set
$\T$ of tasks, and thereby the concepts of strong and weak fairness,
is then extracted in a systematic way from this auxiliary structure.
An example of this is \emph{fairness of actions}. Here transitions\pagebreak[3]
are labelled with \emph{actions}, activities that happen when the transition is taken.
In \ex{weak}, the actions are $\{y\mathbin{:=}y{+}1,~ y\mathbin{:=}0\}$, and in \ex{mutex}
$\{\ell_0,\ell_1,\ell_2,\ell_3,m_0,m_1,m_2,m_3\}$.
Each action constitutes a task, consisting of all the transitions
labelled with that action.
Formally, $\T = \{T_{a}\mid a\in Act\}$, where $Act$ is the set of actions and
$T_{a} = \{t\mid \mbox{$t$ is labelled by $a$}\}$.
This is in fact the notion of fairness
employed in \ex{weak}, and using it for \ex{mutex} would give
rise to the same set of fair paths as the set of tasks employed there.

{\makeatletter
\let\par\@@par
\par\parshape0
\everypar{}\begin{wrapfigure}[3]{r}{0.35\textwidth}
 \vspace{-2.9ex}
  \input{local}
  \centerline{\raisebox{1ex}{\box\graph}}
   \vspace{-10ex}
  \end{wrapfigure}
A \emph{local fairness} property on the other hand is created in an ad hoc
manner to add new information to a system specification. In the transition system
on the right,
for instance, we may want to make sure that incurring the cost $c$ is always
followed by reaping the benefit~$b$. 
This can be achieved by explicitly adding
a fairness property to the specification, ruling out any path that
gets stuck in the $d$-loop.
This can be accomplished by declaring $\{b\}$ a task.
At the same time, it may be acceptable  that we never escape from the $a$-loop.
Hence there is no need to declare a task $\{c\}$. Whether we also
declare tasks $\{a\}$ and $\{d\}$ makes no difference, provided that progress is
assumed---as we do.
A global fairness property seems inappropriate for this example, as it is
hard to imagine why it would treat the $a$-loop with the $c$-exit
differently from the $d$-loop with the $b$-exit.
\par}

Under local fairness, a system specification consists of two parts: a
core part that generates a transition system, and a fairness component.
This is the way fairness is incorporated in the specification language TLA$^+$ \cite{La02}.
It is also the form of fairness we have chosen to apply in \citep[Sect.~9]{TR13}
for the formal specification of the Ad hoc On-demand Distance Vector (AODV)
protocol~\cite{rfc3561}, a wireless mesh network routing protocol,
using the process algebra AWN \cite{FGHMPT12a} for the first part and Linear-time
Temporal Logic (LTL)~\cite{Pn77} for the second.
Under global fairness, a system specification consists of the core part
only, generating a transition system with some auxiliary structure,
together with just the choice of a notion of fairness, that extracts
the set of tasks---or directly the set of fair paths---from this
augmented transition system. Below we discuss some of these
global notions of fairness, and cast them as a function from the
appropriate auxiliary structure of transition systems to a set of tasks.

For the forthcoming examples we use the process algebra CCS~\cite{Mi80}.
Given a set of \emph{action names} $a$, $b$, $c$, \dots,
  where each action $a$ has a complement $\bar a$, 
the set of CCS processes is built from the following constructs:
${\bf 0}$ is the empty process (deadlock) offering no continuation;
the process $a.E$ can perform an action $a$ and continue as $E$;%
\footnote{Assuming progress, it \emph{has}  to perform the action.}\,%
\footnote{We often abbreviate $a.{\bf 0}$ by $a$.}
the process $\tau.E$ can perform the internal action $\tau$ and continue as $E$;
the process $E+F$ offers a choice and can proceed either as $E$ or as $F$; 
the parallel composition $E|F$ allows $E$ and $F$ to progress independently; moreover, 
in case $E$ can perform an action $a$ 
and $F$ can perform its complement $\bar a$ (or vice versa),
the two processes can synchronise
by together performing an internal step $\tau$; 
the process $E\backslash a$ modifies $E$ by inhibiting the
actions $a$ and $\bar a$; and
the relabelling $E[f]$ uses the function $f$ to rename action names in $E$.\vspace{1pt}
Infinite processes can be specified by process variables $X$ that are bound by
\emph{defining equations} \plat{$X\defis E$}, where $E$ is a process
expression constructed from the elements described above.
CCS expressions generate transition systems where transitions are labelled with the actions a process performs---in case of synchronisation the label is $\tau$.
The complete formal syntax and semantics is presented in \App{CCS}.

{\makeatletter
\let\par\@@par
\par\parshape0
\everypar{}\begin{wrapfigure}[4]{r}{0.3\textwidth}
 \vspace{-2ex}
   \input{uniform}
   \centerline{\raisebox{1ex}{\box\graph}}
   \vspace{-10ex}
  \end{wrapfigure}
We illustrate the following definitions of global fairness by a uniform example.
Consider the process $X | Y$ where  \plat{$X \defis a.X+b.X$} and \plat{$Y \defis a.Y+\bar b.Y$}.\,\, 
Its transition system is depicted on the right, where the names $t_i$ differentiate transitions.
The specification contains two occurrences of $a$, which we will call $a_1$ and $a_2$.
\pagebreak[3]

\begin{description}
\item{{\bf Fairness of actions} (A)} 
\emph{Fairness of actions} is already defined above; the auxiliary structure it requires is a
labelling $\ell:\Tr{\,\rightarrow} Act$ of transitions by actions.
For our running example we have $\ell(t_1)=\ell(t_2)=a$, $\ell(t_3)=b$, $\ell(t_4)=\bar b$ and $\ell(t_5)=\tau$.
This notion of fairness appears, for instance, in \cite{La00} and \citep[Def.~3.44]{BK08}.
It has been introduced, under the name {fairness of transitions}, in \citep[Def.~1]{QS83}.

\item{{\bf Fairness of transitions} (T)}
Taking the tasks to be the transitions, rather than their labels,
 is another way to define a collection $\T$ of tasks for a given transition system $G$. 
 $\T$ consists of all singleton sets of transitions of $G$.
In the first transition system of \ex{weak} $\T$ would consist of two elements, 
whereas the second transition system would yield an infinite set $\T$.
This notion of fairness does not require any
auxiliary structure on $G$. Fairness of transitions appears in \citep[Page 127]{Fr86} under the name
{$\sigma$-fairness}. A small variation, obtained by identifying all transitions that have the same
source and target, originates from \citep[Def.~2]{QS83}.
\end{description}
We now look at a transition system where each transition is labelled
with the nonempty set of \emph{instructions} in the program text that gave rise
to that transition.
So the auxiliary structure is a function $\instr:\Tr\rightarrow\Pow(\I)$ with
$\instr(t)\mathbin{\neq}\emptyset$ for all $t\mathbin\in\Tr$, where 
 $\I$ is a set (of \emph{instructions}).
The set $\instr(t)$ has two elements for a handshake communication and
N for N-way communication.
For our example we have $\I=\{a_1,a_2,b,\bar b\}$, with $\instr(t_1)=\{a_1\}$, $\instr(t_2)=\{a_2\}$, $\instr(t_3)=\{b\}$, $\instr(t_4)=\{\bar b\}$ and 
$\instr(t_5)=\{b,\bar b\}$.
\begin{description}

\item{{\bf Fairness of instructions} (I)}
This notion assigns a task to each single instruction;
a transition belongs to that task if that instruction contributed to it.
So $\T\mathrel{:=} \{T_I \mid I\mathop\in\I\}$ with $T_I\mathbin{:=}\{t\mathop\in\Tr \mid I\mathop\in\instr(t)\}$.
In the example, $T_{a_1}=\{t_1\}$, $T_{a_2}=\{t_2\}$, $T_{b}=\{t_3,t_5\}$ and $T_{\bar b}=\{t_4,t_5\}$.
This type of fairness appears in \cite{KdR83} under the name {guard fairness}.

\item{{\bf Fairness of synchronisations} (\Sy)}
\phantomsection{}
\label{sy_synchronisation}
We can also use $\instr{}$ to assign a task to each set of instructions; a transition belongs to that task if it is
obtained through synchronisation of exactly that set of instructions:
$\T:= \{T_Z \mid Z\subseteq\I\}$ with $T_Z:=\{t\in\Tr \mid \instr(t)=Z\}$.
Here a local action, involving one component only, counts as a singleton synchronisation.
In the example, $T_{\{a_1\}}=\{t_1\}$, $T_{\{a_2\}}=\{t_2\}$, $T_{\{b\}}=\{t_3\}$, $T_{\{\bar b\}}=\{t_4\}$ and $T_{\{b, \bar b\}}=\{t_5\}$;
all other tasks, such as $T_{\{a_1,a_2,b\}}$, are empty.
This type of fairness appears in \cite{KdR83} under the name {channel fairness},
and in \cite{AFK88} under the name {communication fairness}.

\end{description}
For the remaining two notions of fairness
we need a transition system where each transition is labelled 
with the parallel \emph{component} of the represented system that gave rise
to that transition.\footnote{\label{minimal}%
A parallel component $c$ may itself have further parallel
components $c_1$ and $c_2$, so that the collection of all parallel components in  a
given state of the system may be represented as a tree (with the entire system in that state as root). A transition that can be
attributed to component $c_1$ thereby implicitly also belongs to $c$.
We label such a transition with $c_1$ only, thus employing merely the leaves in such a tree.
The number of components of a system is dynamic, as a single component may split into
multiple components only after executing some instructions.}
When allowing synchronisation, each transition stems from a nonempty set of
components. So the auxiliary structure is a function $\comp:\Tr\rightarrow\Pow(\Ce)$ with
$\comp(t)\mathbin{\neq}\emptyset$ for all $t\mathbin\in\Tr$.
Our example has two components only, named $\Left$ (left) and $\R$ (right), respectively.
We have $\comp(t_1)=\comp(t_3)=\{\Left\}$, $\comp(t_2)=\comp(t_4)=\{\R\}$ and $\comp(t_5)=\{\Left,\R\}$.
\begin{description}
\item{{\bf Fairness of components} (C)}
Each component determines a task. A transition belongs to that
task if that component contributed to it.
So $\T:= \{T_C \mid C\in\Ce\}$ with $T_C:=\{t\in\Tr \mid C\in\comp(t)\}$.
In our example we have $T_{\Left}=\{t_1,t_3,t_5\}$ and $T_{\R}=\{t_2,t_4,t_5\}$.
This is the type of fairness studied in \cite{CS87,CDV06c}; it also
appears in \cite{KdR83,AFK88} under the name {process fairness}.

\item{{\bf Fairness of groups of components} (G)}
This is like fairness of components, except that each \emph{set} of
components forms a task, and a transition belongs to that task if it is
obtained through synchronisation of exactly that set of components:
$\T:= \{T_G \mid G\subseteq\Ce\}$ with $T_G:=\{t\mathbin\in\Tr \mid \comp(t)\mathbin=G\}$.
In our example we have $T_{\{\Left\}}=\{t_1,t_3\}$, $T_{\{\R\}}=\{t_2,t_4\}$ and $T_{\{\Left,\R\}}=\{t_5\}$. 
This type of fairness appears in \cite{AFK88} under the name {channel fairness} when
allowing only handshake communication, and under the name {group fairness} 
when allowing N-way communication.
\end{description}

Under each of the above notions of fairness, the first transition system of \ex{weak} has two tasks,
each consisting of one transition. The transition system of \ex{mutex} has six tasks of one
transition each for fairness of actions, transitions, instructions or synchronisations,
but two tasks under fairness of components or groups of components. For \ex{mutex}, $\Ge$ holds in all cases when
assuming strong fairness, but not when assuming weak fairness.

For transition systems derived from specification formalisms that do not feature synchronisation \cite{AO83,GFMdR85,LPS81},
fairness of instructions (I) and of synchronisations (Z) coincide.
This combined notion of fairness, which we call 
{\bf fairness of directions} (D), is the central notion of fairness in {\sc Francez} \cite{Fr86};
it is also the notion of fairness introduced in \cite{AO83,GFMdR85}.
As for the notions (I) and (Z),
we need a transition system where each transition is labelled
with the \emph{instructions} in the program text that gave rise
to that transition; however, here we assume that each transitions stems from only one
instruction (called \emph{direction} in \cite{Fr86}), so that the function $\instr$ can be
understood to have type $\Tr\rightarrow\I$. Each instruction gives rise to a task: $\T:= \{T_I \mid I\in\I\}$ with
$T_I:=\{t\in\Tr \mid \instr(t)=I\}$.
If we would take $Act := \I$ and $\ell := \instr$ then fairness of actions becomes the same as
fairness of directions. Since $\D$-fairness coincides with both $\In$ and $\Sy$-fairness whenever it is defined, 
\ie if $\instr:\Tr\rightarrow\Pow(\I)$ maps to singleton sets only,
we do not examine this notion separately. 

As for (I) and (Z), fairness of components (C) coincides with fairness of groups of components (G) for
transition systems derived from specification formalisms that do not feature synchronisation}.
This combination of (C) and (G) is the notion of fairness introduced in \cite{LPS81}.

In \App{fragment} we define the fragment of CCS used in this paper, and augment
the transition systems generated by this fragment
with the functions $\instr:\Tr\rightarrow\Pow(\I)$ and $\comp:\Tr\rightarrow\Pow(\Ce)$. 
Based on these formal definitions, the remainder of this section presents examples that distinguish each fairness notion from all the others.
In these examples we often consider a liveness property $\Ge$, 
indicated by the shaded states in the depicted transition systems.

Let $\X\Y$ be the fairness assumption with $\X \in \{\W,\St\}$ indicating weak or strong and
$\Y\in\{\Ac,\Ts,\In,\Sy,\linebreak[1]\Cp,\Gr\}$ one of the notions of fairness above.

\vspace{-1mm}
\begin{exam}\label{ex:SA}
Consider the process $a | X$ where $X \defis a.X$. Its

{\makeatletter
\let\par\@@par
\par\parshape0
\everypar{}\begin{wrapfigure}[3]{r}{0.35\textwidth}
 \vspace{-5ex}
  \input{SA}
  \centerline{\raisebox{1ex}{\box\graph}}
   \vspace{-10ex}
  \end{wrapfigure}
\noindent  transition system is depicted on  the
 right, where the names $t_i$ differentiate transitions.
The process specification contains two occurrences of the action $a$, which we will call $a_1$ and $a_2$.
These will be our instructions: $\instr:\Tr\rightarrow\Pow(\I)$ is given by
$\instr(t_1)=\instr(t_3)\mathbin=\{a_2\}$ and $\instr(t_2)\mathbin=\{a_1\}$.
The process has three components, namely $a$ (named $\Left$), $X$ (named $\R$), and the entire expression.
Now $\comp:\Tr\rightarrow\Pow(\Ce)$ is given by
$\comp(t_1)=\comp(t_3)=\{\R\}$ and $\comp(t_2)=\{\Left\}$.
Each of the notions of fairness I, \Sy, C and G yields two tasks: $\{t_1,t_3\}$ and $\{t_2\}$.
On the only infinite path that violates liveness property $\Ge$, task $\{t_2\}$ is perpetually
enabled, yet never occurs. Hence, when assuming $\X\Y$-fairness with $\X \in \{\W,\St\}$ and
$\Y\in\{\In,\Sy,\Cp,\Gr\}$, this path is unfair, and the liveness property $\Ge$ is satisfied.
The same holds when assuming $\X\Ts$-fairness, even though this notion gives rise to three tasks.
However, under the fairness assumption ${\St\Ac}$ (and thus certainly under WA) property $\Ge$ is not
satisfied, as all three transitions form one task.
\par}
\end{exam}

\vspace{-1.4mm}
\begin{exam}\label{ex:ST}
Consider \hfill  the \hfill  process \hfill  $a | X$ \hfill  where \hfill  $X \defis b_0.(X[f])$. \hfill  Here \hfill  $f$ \hfill  is \hfill  a \hfill  relabelling \hfill operator
{\makeatletter
\let\par\@@par
\par\parshape0
\everypar{}\begin{wrapfigure}[4]{r}{0.41\textwidth}
 \vspace{-2ex}
  \input{ST}
  \centerline{\raisebox{1ex}{\box\graph}}
  \end{wrapfigure}
\noindent with $f(b_i)=b_{i+1}$ for $i\geq 0$ and $f(a)=a$.
When taking $\Y\in\{A,I,\Sy,C,G\}$ there is a task $\{t_0,t_2,t_4,\dots\}$, which is perpetually enabled
until it occurs. So, under $\X\Y$-fairness
the process does satisfy $\Ge$. Yet under $\X\Ts$-fairness each transition is a separate task, and 
$\Ge$ is not satisfied.
\par
}
\end{exam}

\begin{exam}\label{ex:SG}
\hspace{-.1mm}Consider the process $X$ where $X \defis a.X+b$. Assuming $\X\Y$-fairness with  $\Y{\in}\{\Ac,\Ts,\In,\Sy\}$
{\makeatletter
\let\par\@@par
\par\parshape0
\everypar{}\begin{wrapfigure}[3]{r}{0.4\textwidth}
 \vspace{-2.3ex}
  \input{SG}
  \centerline{\raisebox{1ex}{\box\graph}}
  \end{wrapfigure}
\noindent there is a task
 $\{t_2\}$, so that $\Ge$ is satisfied.
Yet, since there is only one parallel component, $\St\Gr$- and $\St\Cp$-fairness allow an infinite path without
transition $t_2$, so that $\Ge$ is not satisfied.
\par}
\end{exam}

\begin{exam}\label{ex:SC} 
Consider the process $(X|Y)\backslash b$ where $X \defis a.X+b$\vspace{1pt}
{\makeatletter
\let\par\@@par
\par\parshape0
\everypar{}\begin{wrapfigure}[4]{r}{0.3\textwidth}
 \vspace{-5.2ex}
  \input{SC}
  \centerline{\raisebox{1ex}{\box\graph}}
  \end{wrapfigure}
\noindent and \plat{$Y \defis c.Y+\bar b$}. The instructions 
are the action occurrences in the specification: $\I\mathbin=\{a,b,c,\bar b\}$.
We have $\instr(t_1)\mathbin=\{a\}$, $\instr(t_2)=\{b,\bar b\}$ and $\instr(t_3)=\{c\}$.
There are three components, namely 
$X$ (called $\Left$), $Y$ (called $\R$), and the entire expression$(X|Y)\backslash b$, with
$\comp(t_1)=\{\Left\}$, $\comp(t_2)=\{\Left,\R\}$ and $\comp(t_3)=\{\R\}$.
So under fairness of synchronisations, or groups of components, there is a task $\{t_2\}$, making a
path that never performs $t_2$ unfair. The same holds under fairness of actions or transitions, as
well as instructions. Hence $\Ge$ is satisfied. However, under fairness of components, the tasks are
$\{t_1,t_2\}$ and $\{t_2,t_3\}$ and $\Ge$ is not satisfied.
\par}
\end{exam}

In the remainder of this section we establish a hierarchy of fairness properties; 
for this we introduce a partial order on fairness assumptions.

\begin{definition}
Fairness assumption F is \emph{stronger} than fairness assumption H, denoted by ${\rm H} \preceq {\rm F}$,
iff {\rm F} rules out at least all paths that are ruled out by H.
\end{definition}

\noindent Figure~\ref{taxonomy} classifies the above progress and fairness assumptions by strength, with
arrows pointing towards the weaker assumptions. P stands for progress. Arrows derivable by reflexivity or
\begin{wrapfigure}[15]{r}{0.44\textwidth}
\input{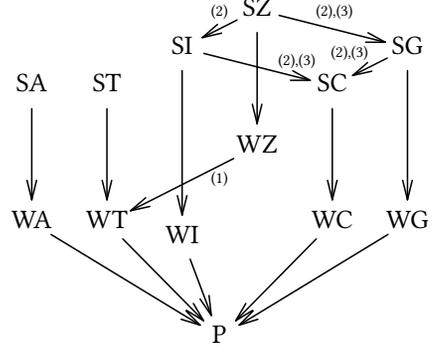}
\centerline{\raisebox{1ex}{\box\graph}}
\vspace{-2pt}
\caption{A classification of progress and fairness}
\label{taxonomy}
\end{wrapfigure}
transitivity of $\preceq$ are suppressed.
The numbered arrows are valid only under the following assumptions:
\begin{enumerate}[(1)]
\item For each synchronisation $Z\mathbin\subseteq \I$, and for each state $s$,
  there is at most one transition $t$ with $\instr(t)\mathbin=Z$ enabled in~$s$.%
  \label{unique synchronisation}
\item $\I$ is finite.
  \label{finite}
\item There is a function $\cmp\!:\I\rightarrow\Ce$ such that
  $\comp(t)\mathbin=\{\cmp(I)\mathbin{\mid} I\mathop\in \instr(t)\}$ for all $t\mathop\in\Tr$.
  \label{cmp}
\end{enumerate}
These assumptions hold for the transition systems derived from the fragment of CCS presented in
\App{fragment}. All other arrows have full generality. The validity of the arrows is shown below
(Props.~\ref{pr:P<W<S}--\ref{pr:SC-SI}).
The absence of any further arrows follows from \exs~\ref{ex:mutex}--\ref{ex:WC}:
  Example~\ref{ex:mutex} separates weak from strong fairness; 
  it shows that $\St\Y \not\preceq \W\Y'$ for all $\Y,\Y' \in \{\Ac,\Ts,\In,\Sy,\Cp,\Gr\}$.
  Example~\ref{ex:SA} shows that there are no further arrows from SA, and thus none from WA\@.
  Example~\ref{ex:ST} shows that there are no further arrows from ST, and thus none from WT.
  Together with transitivity, these two conclusions also imply that
  there are  no further arrows from P\@.
  Examples~\ref{ex:SG},~\ref{ex:SC} and the forthcoming \ex{SI} show that there are no
  further arrows from \St\Gr, SC and SI, respectively, and thus none from WC\@.
Examples~\ref{ex:SS-ST},~\ref{ex:SS-WA} and~\ref{ex:WC} show that there are no further
  arrows into ST, WA and WC, respectively, and thus, using transitivity, no further from S\Sy\@.
  Example~\ref{ex:WS} shows that there are no further arrows from W\Sy.
  Further arrows from WI or WG are already ruled out by transitivity of~$\preceq$.\qed
\pagebreak[4] 

\begin{exam}\label{ex:SI}
Consider the process $X | Y$ where $X \defis a.X$ and\vspace{1pt}
{\makeatletter
\let\par\@@par
\par\parshape0
\everypar{}\begin{wrapfigure}[3]{r}{0.34\textwidth}
  \vspace{-6ex}
  \input{SI}
  \centerline{\raisebox{1ex}{\box\graph}}
  \end{wrapfigure}
\noindent \plat{$Y \defis \bar a.Y$}. The infinite path $(a \bar a)^{\infty}$ is $\X\In$-fair and
$\X\Cp$-fair, but not $\X\Y$-fair for $\Y\mathbin\in\{\Ac,\Ts,\Sy,\Gr\}$.
\par}
\end{exam}

\begin{exam}\label{ex:SS-ST}
Consider\hfill  the\hfill  process\hfill  $X | Y$\hfill  where\hfill  $X \defis a.X$\hfill  and\hfill  $Y \defis b.c.Y$.\hfill Under\hfill  ST-fairness,\hfill the 
{\makeatletter
\let\par\@@par
\par\parshape0
\everypar{}\begin{wrapfigure}[5]{r}{0.44\textwidth}
 \vspace{-2ex}
  \input{SS-ST}
  \centerline{\raisebox{1ex}{\box\graph}}
  \end{wrapfigure}
\noindent path $(a b c)^\infty$
 is unfair, because task $\{t_3\}$ is infinitely often
enabled, yet never taken. However, under $\X\Y$-fairness for $\Y \in\{\Ac,\In,\Sy,\Cp,\Gr\}$ there is
no task $\{t_3\}$ and the path is fair.
\par}
\end{exam}

\begin{exam}\label{ex:SS-WA}
Consider\hfill the\hfill process\hfill $X$\hfill where\hfill $X \defis a.X+c.X+b.(X[f])$,\hfill using\hfill a\hfill relabelling\hfill oper-
{\makeatletter
\let\par\@@par
\par\parshape0
\everypar{}\begin{wrapfigure}[6]{r}{0.44\textwidth}
  \input{SS-WA}
  \centerline{\raisebox{1ex}{\box\graph}}
  \end{wrapfigure}
\noindent ator with $f(a)\mathbin=c$, $f(c)\mathbin=a$ and $f(b)\mathbin=b$. 
On the right, as always, we depict the unique transition
  system generated by the operational semantics of CCS\@. There exists, however, a 2-state
  transition system that is strongly bisimilar with it.
Under $x\Ac$-fairness, the path $(ab)^\infty$ is unfair, because task $c$ is perpetually
enabled, yet never taken. However, there are only three instructions (and one component), each of
which gets its fair share. So under $\X\Y$-fairness for $\Y \in\{\In,\Sy,\Cp,\Gr\}$ the path is fair.
It is also $\X\Ts$ fair, as no transition is perpetually or relentlessly enabled.%
\footnote{In the transition system consisting of 2 states the path is $\St\Ts$ fair; see also \Sec{unfolding}.}
\par}
\end{exam}

\begin{exam}\label{ex:WS} 
Consider the process $(a | X | Y)\backslash a$ where $X \defis \bar a.\bar a.X$ and\vspace{1pt}
{\makeatletter
\let\par\@@par
\par\parshape0
\everypar{}\begin{wrapfigure}[5]{r}{0.25\textwidth}
 \vspace{-6ex}
  \input{WS}
  \centerline{\raisebox{1ex}{\box\graph}}
   \vspace{-10ex}
  \end{wrapfigure}
\noindent  \plat{$Y \defis a.Y$}. 
Under $\W\Y$-fairness with $\Y\mathbin\in\{\In,\Cp,\Gr\}$ property $\Ge$ is satisfied, but under
$\W\Sy$- and $\W\Ts$-fairness it is not,
because each synchronisation is enabled merely in every other state, not perpetually.
Since there is only a single action $\tau$, 
under $\X\Ac$-fairness property $\Ge$ is not satisfied.
\par}
\end{exam}

\begin{exam}\label{ex:WC}
Consider the process $((a{+}c) | X | Y | Z)\backslash\hspace{-1pt} a\backslash\hspace{-1pt} b\backslash\hspace{-1pt} c\backslash\hspace{-1pt} d$\vspace{1pt}

{\makeatletter
\let\par\@@par
\par\parshape0
\everypar{}\begin{wrapfigure}[2]{r}{0.36\textwidth}
 \vspace{-6.2ex}
  \input{WC}
  \centerline{\raisebox{1ex}{\box\graph}}
  \end{wrapfigure}
\noindent
where \plat{$X \defis \bar a.\bar b.X$}, \plat{$Y \defis \bar c.\bar d.Y$} and \plat{$Z \defis a.b.c.d.Z$}.
Under $\W\Cp$-fairness $\Ge$ is satisfied, but under $\W\Y$-fairness with $\Y\mathbin\in\{\Ac,\Ts,\In,\Sy,\Gr\}$ it is not.
\par}
\end{exam}

\begin{proposition}\rm\label{pr:P<W<S}
${\rm P} \preceq \W\Y \preceq \St\Y$ for $\Y \in \{\Ac,\Ts,\In,\Sy,\Cp,\Gr\}$.
\end{proposition}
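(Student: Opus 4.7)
The plan is to prove the two inequalities $\mathrm{P} \preceq \mathrm{W}Y$ and $\mathrm{W}Y \preceq \mathrm{S}Y$ separately, for each $Y \in \{\mathrm{A},\mathrm{T},\mathrm{I},\mathrm{Z},\mathrm{C},\mathrm{G}\}$. Unpacking the ordering, the first amounts to showing that every $\mathrm{W}Y$-fair path is progressing, and the second to showing that every $\mathrm{S}Y$-fair path is $\mathrm{W}Y$-fair.

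For $\mathrm{W}Y \preceq \mathrm{S}Y$ I would argue directly from Definition~\ref{df:fair}. Let $\pi$ be strongly fair and take any suffix $\pi'$ of $\pi$ together with a task $T$ perpetually enabled on $\pi'$. Since enablement in every state of $\pi'$ in particular forces enablement in some state of every suffix of $\pi'$, the task $T$ is also relentlessly enabled on $\pi'$. Strong fairness then yields that $T$ occurs in $\pi'$, giving weak fairness of $\pi$. This argument is independent of $Y$, since it works purely at the level of the abstract task collection $\T$; the observation is already flagged in the paragraph following Definition~\ref{df:fair}.

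For $\mathrm{P} \preceq \mathrm{W}Y$, let $\pi$ be a weakly fair path. If $\pi$ is infinite it is progressing by definition, so assume $\pi$ is finite and ends in state $s$. Suppose for contradiction that $s$ is the source of some transition $t$. The key observation, to be verified by a one-line inspection of each of the six constructions of $\T$, is that every transition of $G$ belongs to at least one task: for $Y=\mathrm{A}$ the transition $t$ lies in $T_{\ell(t)}$; for $Y=\mathrm{T}$ it lies in the singleton $\{t\}$; for $Y=\mathrm{I}$ or $Y=\mathrm{C}$ the non-emptiness of $\mathit{instr}(t)$ resp.\ $\mathit{comp}(t)$ guarantees membership in some $T_I$ resp.\ $T_C$; and for $Y=\mathrm{Z}$ or $Y=\mathrm{G}$ the transition $t$ lies in $T_{\mathit{instr}(t)}$ resp.\ $T_{\mathit{comp}(t)}$. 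Pick such a task $T \ni t$. Then $T$ is enabled in $s$, hence perpetually enabled on the one-state suffix $\pi'=s$ of $\pi$, yet no transition occurs in $\pi'$, contradicting weak fairness. So $s$ has no outgoing transition and $\pi$ is progressing.

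There is no real obstacle; the only point of care is the uniform verification that each of the six task-extraction recipes in Section~\ref{sec:taxonomy} covers every transition, which is what makes the progress assumption absorbable into weak fairness after we dropped the finite-path exception in Section~\ref{sec:fairness}.
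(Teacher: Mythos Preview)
Your proof is correct and follows essentially the same approach as the paper. The paper's proof is a two-liner: $\W\Y \preceq \St\Y$ ``follows from \df{fair}'' (your perpetual-implies-relentless argument), and ${\rm P} \preceq \W\Y$ ``follows since for each $\Y$ we have $\bigcup_{T\in\T}T=\Tr$''---which is precisely the covering property you verify case by case; you have simply spelled out what the paper compresses into that one equation.
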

\begin{minipage}{.6\textwidth}
\begin{proof}
That $\W\Y \preceq \St\Y$ follows from \df{fair}.
That \linebreak[3]\hspace*{-10pt}%
${\rm P} \preceq \W\Y$ follows since for each $\Y$ we have $\bigcup_{T\in\T}T=\Tr$.
\end{proof}
\end{minipage}

\begin{proposition}\rm
When assuming \ref{unique synchronisation}, $\W\Ts \preceq \W\Sy$.
\end{proposition}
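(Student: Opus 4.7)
The plan is to show directly that every WZ-fair path is WT-fair, from which $\W\Ts\preceq\W\Sy$ follows by the definition of $\preceq$. So I fix a path $\pi$ that is weakly fair in the sense of synchronisations, pick an arbitrary transition $t\in\Tr$, and assume its singleton task $\{t\}$ is perpetually enabled on some suffix $\pi'$ of $\pi$. I need to conclude that $t$ itself occurs in $\pi'$.

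The first step is to lift the enabledness from the singleton $\{t\}$ to the synchronisation task containing $t$. Setting $Z:=\instr(t)$, we have $t\in T_Z$, so whenever $t$ is enabled, $T_Z$ is enabled. Hence $T_Z$ is perpetually enabled on $\pi'$, and WZ-fairness of $\pi$ applied to this suffix yields some transition $t'\in T_Z$ occurring in $\pi'$; say $t'$ is the transition leaving some state $s$ appearing in $\pi'$.

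The key step is to identify $t'$ with $t$, and this is precisely where hypothesis~\ref{unique synchronisation} is used. At the state $s$, both $t$ (which is enabled there because $\{t\}$ is perpetually enabled on $\pi'$) and $t'$ (which actually fires from $s$) satisfy $\instr(\cdot)=Z$. Assumption~\ref{unique synchronisation} says that at most one transition with $\instr=Z$ is enabled in any given state, so $t=t'$, and therefore $t$ occurs in $\pi'$. This shows $\pi$ is WT-fair.

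I expect no serious obstacle: the argument is essentially a one-line application of assumption~\ref{unique synchronisation}. The only conceptual point worth being careful about is that WZ-fairness only guarantees the occurrence of \emph{some} element of the task $T_Z$, not of the specific $t$ we care about; without~\ref{unique synchronisation} the conclusion would fail, as the task $T_Z$ could contain several transitions enabled at $s$ among which a different one is chosen forever. Once this is spelled out, the proof is complete.
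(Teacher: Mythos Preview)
Your proof is correct and follows essentially the same approach as the paper: lift the perpetual enabledness of $\{t\}$ to that of $T_{\instr(t)}$, apply WZ-fairness to obtain some $t'\in T_{\instr(t)}$ occurring at a state $s$, and use assumption~\ref{unique synchronisation} at $s$ to conclude $t'=t$. The only cosmetic difference is that the paper first observes a general reduction (it suffices to check the fairness condition on $\pi$ itself rather than on every suffix, since every suffix of a WZ-fair path is WZ-fair), whereas you work directly with an arbitrary suffix $\pi'$; both are equally valid.
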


\begin{proof}
\newcommand{\pwt}{\mathfrak{p}_{\scalebox{.5}{\W\Ts}}}
We show that any $\W\Sy$-fair path is also $\W\Ts$-fair.
Let $\pwt(\pi)$ be the property of a path $\pi$ that ``each $\W\Ts$-task that is perpetually enabled on $\pi$, occurs in $\pi$.''
  By \df{fair} we have to show that ``if a path $\pi$ is $\W\Sy$-fair, then each suffix $\pi'$ of $\pi$ satisfies $\pwt(\pi')$''.
  To this end it suffices to show that ``if $\pi$ is $\W\Sy$-fair, then $\pwt(\pi)$''.
  Namely, if $\pi$ is $\W\Sy$-fair, then each suffix $\pi'$ of $\pi$ is $\W\Sy$-fair, and thus
  satisfies $\pwt(\pi')$.

  So let $\pi$ be a {$\W\Sy$}-fair path. Each $\W\Ts$-task has the form $\{t\}$, and is
  perpetually enabled on $\pi$ iff $t$ is. Assume that a transition $t$ is
  perpetually enabled on $\pi$. We need to show that $t$ occurs in $\pi$.

Since $t$ is perpetually enabled on $\pi$, $T_{\instr(t)}$, as defined on Page~\pageref{sy_synchronisation},
is a W\Sy-task that is perpetually enabled on $\pi$.
So a transition in $T_{\instr(t)}$ must eventually occur, say in state $s$ of $\pi$.
By \ref{unique synchronisation} $t$ is the only transition from $T_{\instr(t)}$ enabled in $s$.
So $t$ will occur in $\pi$.
\end{proof}

\noindent
In future proofs, the reduction from proving a property for each suffix $\pi'$ of $\pi$ to
  simply proving the property for $\pi$, spelled out in the first paragraph of the above proof,
  will be taken for granted.

\begin{proposition}\rm\label{pr:SS-SI}
If  $\I$ is finite (Property \ref{finite}) then $\St\In \preceq \St\Sy$.
\end{proposition}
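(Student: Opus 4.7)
The plan is to show directly that every $\St\Sy$-fair path $\pi$ is also $\St\In$-fair. So I would fix such a $\pi$ and an instruction $I \in \I$ whose associated task $T_I = \{t \mid I \in \instr(t)\}$ is relentlessly enabled on $\pi$, and argue that some transition from $T_I$ must occur in $\pi$.

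The key structural observation is that $T_I = \bigcup_{Z \subseteq \I,\, I \in Z} T_Z$, so that $T_I$ is enabled at a state $s$ if and only if $T_Z$ is enabled at $s$ for some $Z \ni I$. Consequently, if no $\Sy$-task $T_Z$ with $I \in Z$ were relentlessly enabled on $\pi$, then for each such $Z$ there would be a suffix $\pi_Z$ of $\pi$ in which $T_Z$ is nowhere enabled. This is where I would invoke the hypothesis that $\I$ is finite: the collection $\{Z \subseteq \I \mid I \in Z\}$ is then finite, so I can take the ``latest'' of these suffixes, i.e.\ a common suffix $\pi^*$ of all the $\pi_Z$. In $\pi^*$ no $T_Z$ with $I \in Z$ is ever enabled, so $T_I$ is not enabled anywhere in $\pi^*$, contradicting the fact that relentless enabledness of $T_I$ on $\pi$ is inherited by the suffix $\pi^*$.

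Hence some $Z_0 \ni I$ exists for which $T_{Z_0}$ is relentlessly enabled on $\pi$. By $\St\Sy$-fairness of $\pi$, a transition $t \in T_{Z_0}$ occurs in $\pi$; since $\instr(t) = Z_0 \ni I$, this transition also lies in $T_I$, giving the desired occurrence.

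The only real subtlety is the ``latest suffix'' step, which is the reason finiteness of $\I$ is required: without it, the intersection of the suffixes $\pi_Z$ could be empty and the contradiction would fail. Everything else is unpacking of definitions, together with the easy fact (already used silently in the preceding proposition's proof) that relentless enabledness on $\pi$ passes to every suffix of $\pi$.
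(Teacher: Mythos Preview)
Your proposal is correct and follows essentially the same approach as the paper: both exploit that each $\St\In$-task $T_I$ is a \emph{finite} union of $\St\Sy$-tasks $T_Z$, so relentless enabledness of $T_I$ forces relentless enabledness of some $T_{Z_0}\subseteq T_I$, after which $\St\Sy$-fairness supplies the required occurrence. The paper phrases the pigeonhole step as ``infinitely often enabled (or in the last state)'' directly, whereas you unwind it via the contrapositive with the ``latest common suffix'' $\pi^*$; these are the same argument.
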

\begin{proof}
  We show that any $\St\Sy$-fair path is also $\St\In$-fair.
  By \ref{finite} there are only finitely many SI- and S\Sy-tasks.
  Moreover, each SI-task is the union of a collection of S\Sy-tasks.
  Let $\pi$ be a S\Sy-fair path on which an SI-task $T$ is infinitely often enabled (or in the last state).
  Then an S\Sy-task $T'\subseteq T$ must be be infinitely often enabled on $\pi$ (or in the last state).
  So $T'$, and hence $T$, will occur in $\pi$.
\end{proof}

\begin{proposition}\rm\label{pr:SC-SI}
If $\I$ is finite (Property \ref{finite}) and 
there is a function $\cmp\!:\I\rightarrow\Ce$ such that
  $\comp(t)\mathbin=\{\cmp(I)\mathbin{\mid} I\mathop\in \instr(t)\}$ for all $t\mathop\in\Tr$
  (Property \ref{cmp}) 
then $\St\Cp \preceq \St\In$ and $\St\Cp \preceq \St\Gr \preceq \St\Sy$.
\end{proposition}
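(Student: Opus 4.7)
My plan is to mirror the argument used in \pr{SS-SI}: in each of the three inclusions I will exhibit the tasks of the weaker fairness notion as finite unions of tasks of the stronger one, and then use the fact that if a finite union of task sets is relentlessly enabled on a path, at least one of the constituents must be relentlessly enabled.

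First, for $\St\Cp \preceq \St\In$, I would observe that by Property~\ref{cmp}, for every component $C \in \Ce$ and transition $t$, we have $C \in \comp(t)$ iff there exists $I \in \instr(t)$ with $\cmp(I) = C$. Consequently $T_C = \bigcup_{I \in \cmp^{-1}(C)} T_I$. Since $\I$ is finite by Property~\ref{finite}, this is a finite union. If $T_C$ is relentlessly enabled on a path $\pi$, then some $T_I$ with $\cmp(I)=C$ must be relentlessly enabled on $\pi$ (pigeonhole); under $\St\In$-fairness this $T_I$ occurs in $\pi$, so $T_C$ does too.

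Next, for $\St\Gr \preceq \St\Sy$, I would use that $\comp(t) = G$ iff $\{\cmp(I) \mid I \in \instr(t)\} = G$, which gives $T_G = \bigcup_{Z \subseteq \I,\, \cmp(Z)=G} T_Z$ (writing $\cmp(Z) := \{\cmp(I) \mid I \in Z\}$). Since $\I$ is finite, this too is a finite union of S\Sy-tasks, and the same relentless-enabledness pigeonhole argument applies.

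Finally, for $\St\Cp \preceq \St\Gr$, the decomposition is $T_C = \bigcup_{G \ni C} T_G$. A priori $\Ce$ need not be finite, so the main subtlety I anticipate is ensuring this union is effectively finite. The key observation is that, by Property~\ref{cmp}, $\comp(t) \subseteq \cmp(\I)$ for every transition $t$, so $T_G = \emptyset$ whenever $G \not\subseteq \cmp(\I)$; since $\cmp(\I)$ is finite (as $\I$ is), only finitely many $G$ contribute a nonempty task, and the pigeonhole argument goes through as before. I expect this finiteness step to be the main point requiring care; the rest is a straightforward reuse of the schema of \pr{SS-SI}.
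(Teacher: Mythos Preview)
Your proposal is correct and follows essentially the same approach as the paper: decompose each task of the weaker fairness notion as a finite union of tasks of the stronger one (using Properties~\ref{finite} and~\ref{cmp}), then apply the pigeonhole argument from \pr{SS-SI}. Your explicit handling of the finiteness of the relevant $\St\Gr$-tasks via $\comp(t)\subseteq\cmp(\I)$ is exactly the content behind the paper's terse remark that there are only finitely many SC- and SG-tasks.
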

\begin{proof}
  By \ref{finite} and \ref{cmp} there are only finitely many SI-, S\Sy-, SC- and SG-tasks.
  Moreover, each SC-task is the union of a collection of SI-tasks;
  each SG-task is the union of a collection of S\Sy-tasks;
  and each SC-task is the union of a collection of SG-tasks.
  The rest of the proof proceeds as for \pr{SS-SI}.
\end{proof}

\section{Criteria for Evaluating Notions of Fairness}\label{sec:criteria}

In this section we review four criteria for evaluating fairness properties---the first three taken
from \cite{AFK88}---and evaluate the notions of fairness presented so far.
We treat the first criterion---feasibility---as prescriptive, in that we discard from further
consideration any notion of fairness not meeting this criterion. The other criteria---equivalence
robustness, liveness enhancement, and preservation under unfolding---are merely 
arguments in selecting a particular notion of fairness for an application.
In \Sec{J-fairness} we propose a fifth criterion---\emph{preservation under refinement of
actions}---and argue that none of the weak notions of fairness satisfies that latter criterion.

\subsection{Feasibility}\label{sec:feasibility}
The purpose of fairness properties is to reduce the set of infinite potential runs of systems,
without altering the set of finite partial runs in any way. Hence a natural requirement 
is that any finite partial run can be extended to a fair run. This appraisal criteria
on fairness properties has been proposed by {\sc Apt, Francez \& Katz} in \cite{AFK88} and called
\emph{feasibility}. It also appears in {\sc Lamport}~\cite{La00} under the name \emph{machine closure}.
We agree with \textsc{Apt et al.} and \textsc{Lamport} that this is a necessity for any sensible
notion of fairness. By means of the following theorem we show that this criterion is satisfied by all
fairness properties reviewed so far, when applied to the fragment of CCS from \App{fragment}.

\begin{theorem}\rm\label{thm:feasibility}
  If, in a transition system, only countably many tasks are enabled in each state
then the resulting notions of weak and strong fairness are feasible.
\end{theorem}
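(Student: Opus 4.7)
The plan is to prove feasibility by exhibiting, for any finite partial run $\pi$ ending in a state $s$, an extension $\rho$ that is strongly fair and is either infinite or terminates in a dead state. Since every strongly fair path is weakly fair (and both incorporate progress once $\bigcup_{T\in\T}T=\Tr$), this settles the theorem for both notions simultaneously.

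The construction is a diagonal argument. Fix once and for all a surjection $f\colon\mathbb{N}\to\mathbb{N}\times\mathbb{N}$ in which every pair $(i,j)$ recurs infinitely often. I build $\pi=\pi_0\subseteq\pi_1\subseteq\cdots$ inductively and let $\rho=\bigcup_k\pi_k$. At each stage $k$ the countability hypothesis lets me maintain a list $T_1,T_2,\ldots$ of all tasks that have so far been enabled at some state of $\pi_k$, ordered by first appearance; this list is well-defined and countable because $\pi_k$ is finite and each of its states enables only countably many tasks. Write $f(k)=(i,j)$ and let $s_k$ be the last state of $\pi_k$. If $s_k$ has no outgoing transition, halt. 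Else, if $T_i$ is already in the list, has fired strictly fewer than $j$ times in $\pi_k$, and some finite path from $s_k$ ends in a transition of $T_i$, then append a shortest such path to $\pi_k$; in every other case, append a single arbitrary outgoing transition from $s_k$. Because each non-halting stage extends $\pi_k$ by at least one transition, $\rho$ is either infinite or a finite path ending in a dead state, and in the latter case it is vacuously strongly fair (no task is enabled in the final state).

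To verify strong fairness in the infinite case, suppose some task $T$ is relentlessly enabled on $\rho$ but occurs only $j_0<\infty$ times. Then $T$ is enabled at some state of $\rho$, so it eventually enters the list as some $T_i$. Choose a stage $k$ with (a) every occurrence of $T$ in $\rho$ already inside $\pi_k$, (b) $T$ listed as $T_i$ in $\pi_k$, and (c) $f(k)=(i,j_0+1)$; the last condition is satisfied by the recurrence property of $f$. At stage $k$ the construction either succeeds in extending $\pi_k$ by a finite path that fires $T$, producing an occurrence of $T$ in $\pi_{k+1}\subseteq\rho$ beyond the $j_0$ already counted --- a contradiction --- or no finite path from $s_k$ fires $T$, so no state reachable from $s_k$ enables $T$. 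But every state of $\rho$ past $s_k$ is reachable from $s_k$, so the second alternative contradicts relentless enabledness of $T$ on $\rho$. Weak fairness follows a fortiori, since every perpetually enabled task is relentlessly enabled.

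The main obstacle will be the apparent circularity of the enumeration: a task $T$ is visible to the construction only after it has been enabled on the prefix we have built, so no static enumeration of tasks is available in advance. Countability dissolves this by reducing the problem to a dovetailed sweep over $\mathbb{N}\times\mathbb{N}$, which guarantees that any index ever appearing in the list is revisited for every target count $j$ from some stage onward; the \emph{infinitely often} property of $f$ is what makes this revisitation argument work even though the list grows over time.
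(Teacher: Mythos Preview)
Your argument is correct, and it establishes strong fairness (hence weak fairness) of the constructed extension. One small wrinkle: the phrase ``ordered by first appearance'' does not literally give an $\mathbb{N}$-indexed list when a single state enables infinitely many tasks, since a task first enabled at a later state would then sit beyond all of them. What you actually need---and clearly intend---is any enumeration that assigns each task a stable natural-number index once it appears; a dovetailing of (state index, within-state index) does this. With that understood, the contradiction argument via relentless enabledness and reachability goes through exactly as you wrote it.

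Your construction is genuinely different from the paper's. The paper extends the path by \emph{one transition at a time}, always choosing a task that is enabled \emph{right now} at the current last state, using a priority-queue discipline on a matrix of (column, row) entries: column $i$ lists the tasks enabled at the end of $\pi_i$, and at each step the earliest uncrossed entry whose task is currently enabled gets served and crossed out. Fairness then follows because each matrix entry has only finitely many predecessors, so it eventually reaches the front of the queue and is served the next time its task is enabled. By contrast, you extend by a \emph{shortest path} that fires the target task, using reachability rather than immediate enabledness, and you dovetail over pairs (task index, target occurrence count). What your approach buys is a very direct verification of ``infinitely often'': the second coordinate $j$ explicitly forces the $j$-th occurrence, and the reachability branch of your case split cleanly contradicts relentless enabledness. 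What the paper's approach buys is locality and simplicity of the construction itself---it never looks beyond the current state and never appends more than one transition---at the cost of a slightly more implicit correctness argument (the priority-queue clearing needs an induction on entry rank that the paper only sketches).
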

\begin{proof}
  We present an algorithm for extending any given finite path $\pi_0$ into a fair path $\pi$.
  We build an $\IN\times\IN$-matrix with a column for the---to be constructed---prefixes $\pi_{i}$
  of $\pi$, for $i\geq 0$.
  The columns $\pi_i$ will list the tasks enabled in the last state of $\pi_i$, leaving empty most
  slots if there are only finitely many.
  An entry in the matrix is either (still) empty, filled in with a task, or crossed out.
  Let $f:\IN\rightarrow \IN\times\IN$ be an enumeration of the entries in this matrix.
  
  At the beginning only $\pi_0$ is known, and all columns of the matrix are empty.
  At each step $i\geq 0$ we proceed as follows:
  
  Since $\pi_i$ is known, we fill the $i$-th column by listing all enabled tasks.
  In case no task is enabled in the last state of $\pi_i$, the algorithm terminates, with output $\pi_i$.
  Otherwise, we take $j$ to be the smallest value such that entry $f(j)\in\IN\times\IN$ is already
  filled in, but not yet crossed out, and the task $T$ listed at $f(j)$ also occurs in
  column $i$ (\ie is enabled in the last state of $\pi_i$). We now extend $\pi_i$ into $\pi_{i+1}$
  by appending a transition from $T$ to it, while crossing out entry $f(j)$. 
  
  Obviously, $\pi_{i}$ is a prefix of $\pi_{i+1}$, for $i\geq 0$.
  The desired fair path $\pi$ is the limit of all the $\pi_i$.
  It is strongly fair (and thus weakly fair), because each task that is even once enabled
  will appear in the matrix, which acts like a priority queue. Since there are only finitely many
  tasks with a higher priority, at some point this task will be scheduled as soon as it occurs again.
\end{proof}

Since our fragment of CCS yields only finitely many instructions and components, there are only
finitely many tasks according to the fairness notions $\In,\Sy,\Cp$ and $\Gr$.
Moreover, there are only
countably many transitions in the semantics of this fragment, and consequently only countably many
action labels. Hence all fairness notions reviewed so far, applied to this fragment of CCS, are feasible.

\subsection{Equivalence robustness}\label{sec:robustness}
\begin{exam}\label{ex:robustness}
Consider the process $(X | Y | Z)\backslash e$ where \vspace{1pt}
{\makeatletter
\let\par\@@par
\par\parshape0
\everypar{}\begin{wrapfigure}[10]{r}{0.41\textwidth}
 \vspace{-6.5ex}
  \input{ER}
  \centerline{\raisebox{1ex}{\box\graph}\quad}
  \end{wrapfigure}
\noindent
\plat{$X \defis a.b.X + e$}, \plat{$Y \defis \bar e$} and \plat{$Z \defis c.d.Z + e$}.
Under fairness of components, the four $\tau$-transitions form one task, as these are the transitions
that synchronise with component $Y$. Thus, under WC-fairness, the path $(a b c d)^\infty$ is unfair, because
that task is continuously enabled, yet never taken. However, the path $(a c b d)^\infty$ is fair,
since this task is infinitely often not enabled. In \emph{partial order semantics} \cite{Pr86}
these two runs are considered \emph{equivalent}: both model the parallel composition of
runs $(a b)^\infty$ of component $X$ and $(c d)^\infty$ of component $Z$, with no causal dependencies
between any actions  occurring in the latter two runs.
For this reason WC-fairness is is not robust under equivalence, as proposed in \cite{AFK88}.
\par}
\end{exam}

In general, {\sc Apt, Francez \& Katz} call a fairness notion \emph{equivalence robust} if
``for two infinite sequences which differ by a possibly infinite number of interchanges
of independent actions (\ie\ equivalent sequences), either both are fair according to the given
definition, or both are unfair.'' \cite{AFK88}

\begin{exam}\label{ex:robustness strong}
{\newcommand{\YP}{Z}
\hspace{2pt}Consider\hspace{2pt} the\hspace{2pt} process\hspace{2pt} $(X | \YP)\backslash e$\hspace{2pt} where\vspace{1pt}
{\makeatletter
\let\par\@@par
\par\parshape0
\everypar{}\begin{wrapfigure}[8]{r}{0.41\textwidth}
 \vspace{-4ex}
  \input{ERS}
  \centerline{\mbox{}\hspace{5.5pt}\raisebox{1ex}{\box\graph}\quad}
  \end{wrapfigure}
\noindent
\plat{$X \defis a.b.X + e$} and \plat{$\YP \defis c.d.\YP + \bar e$}.
Under fairness of actions, transitions, instructions, synchronisations or groups of components,
the $\tau$-transition forms a separate task. Thus, under $S\Y$-fairness, for
$\Y \in\{\Ac,\Ts,\In,\Sy,\Gr\}$, the path $(a c b d)^\infty$ is unfair, because
that task is infinitely often enabled, yet never taken. However, the path $a (c b a d)^\infty$ is fair,
since from its second state onwards this task is never enabled. Yet these two runs are partial-order equivalent, so
also $S\Y$-fairness for $\Y \in\{\Ac,\Ts,\In,\Sy,\Gr\}$ is not equivalence robust.
\par}
}
\end{exam}

That WC-, SG- and SZ-fairness are not equivalence robust was established by {\sc Apt et al.\ } \cite{AFK88}.
There it is further shown that on a fragment of the process algebra CSP \cite{Ho78}, featuring only
2-way synchronisation, SC-fairness is equivalence robust. We do not know if this result extends to our
fragment of CCS, as it is in some ways less restrictive than the employed fragment of CSP\@.
In the presence of $N$-way synchronisation (with $N\mathord>2$) SC-fairness is not equivalence robust  \cite{AFK88};
this is shown by a variant of Example~\ref{ex:robustness strong} that features an extra component
$Z$, performing a single action that synchronises with both $e$ and $\bar e$.
{\sc Apt et al.}\ also show that, on their fragment of CSP, WG- and WZ-fairness are  equivalence robust.
We do not know if these results hold in full generality, or for our fragment of CCS\@.
Example~\ref{ex:robustness} shows that WA- and WI-fairness are not equivalence robust.

\subsection{Liveness enhancement}\label{sec:enhancement}

A notion of fairness F is \emph{liveness enhancing} if there exists a liveness property that holds
for some system when assuming F, but not without assuming F\@. This criterion for appraising fairness notions
stems from \cite{AFK88}, focusing on the liveness property of termination; there,  F is liveness
enhancing iff there exists a system such that all its fair runs terminate, but not all its unfair runs.

When taken literary as worded above, the assumption of progress is liveness enhancing, since the
program from Example \ref{ex:progress} does satisfy the liveness property $\Ge$ when assuming
progress, but not without assuming progress. Since all fairness properties reviewed here
are stronger then progress, it follows that they are liveness enhancing as well.

The notion becomes more interesting in the presence of a background assumption, which is weaker than
the fairness assumption being appraised. Assumption F then becomes liveness enhancing if there exist
a liveness property that holds for some system when assuming F but not when merely making the
background assumption. In \cite{AFK88} \emph{minimal progress} is used as background assumption: `Every
process in a state with enabled \emph{local} actions will eventually execute some action.'
Here a process is what we call a component, and an action of that
component is \emph{local} if it it is not a synchronisation.
This is a weaker form of the assumption of \emph{justness}, to be introduced in
Section~\ref{sec:justness}. 

In \cite{AFK88} it is established that WC-, WG- and WZ-fairness are not liveness enhancing on the employed
fragment of CSP, whereas SC, SG- and SZ-fairness are. The negative results are due to the limited
expressive power of that fragment of CSP; in our setting all notions of fairness reviewed so far
are liveness enhancing w.r.t.\ the background assumption of minimal progress. This follows from
Examples \ref{ex:SG} and \ref{ex:WS}.

\subsection{Preservation under unfolding}\label{sec:unfolding}

A possible criterion on fairness notions is that it ought to be possible to unfold a transition
system $G$ into a tree---which yields a bijective correspondence between the paths in $G$ and those
in its unfolding---without changing the set of fair paths.

This criterion holds for all notions of fairness considered so far, except for fairness of transitions.
The reason is that any transition in the unfolding is inherited from a transition $t$ in $G$,
and thereby inherits properties like $\ell(t)$, $\instr(t)$ and $\comp(t)$.

When using fairness of transitions
in the first transition system of \ex{weak} the task $\T$ consists of two elements, and
$\Ge$ is ensured, whereas the second transition system yields an infinite set $\T$, and $\Ge$ is
not guaranteed, even though the second transition system is simply an unfolding of the first.
\ex{SS-WA} illustrates another scenario where a path is not $\St\Ts$-fair in a finite transition system, 
but $\St\Ts$-fair in its unfolding.
This could be regarded as a drawback of fairness of transitions.

In \cite{QS83}, where fairness of transitions is employed, care is taken to select a particular
transition system for any given program. The one chosen has exactly one state for each combination
of control locations, and values of variables. So in terms of \ex{weak} it is the variant of the
second transition system in which all final states are collapsed.

\section{Strong Weak Fairness}\label{sec:sw}

We present an example of a system for which none of the above fairness notions 
is appropriate: weak fairness is too weak and strong fairness too strong. We consider this
example not to be a corner case, but rather an illustration of the typical way fairness is
applied to real systems. Consequently, this example places serious doubts on the notions of fairness
reviewed so far. Failing to find a concept of fairness in the literature that completely fits the requirements of
this example in the literature, we propose the concept of \emph{strong weak fairness}, which is the logical
combination of two of the original fairness concepts contemplated in the literature.

\begin{exam}\label{ex:3q}
Let $X$ be a clerk that has to serve customers that queue
at three windows of an office. Serving a customer is a multi-step process.\vspace{1pt} 
During such a process the clerk is not able to serve any other customer.
A CCS specification of the clerk's behaviour
is \plat{$X\defis c_1.e.X + c_2.e.X + c_3.e.X$}. Here $c_i$, for
$i=1,2,3$, is the action of starting to serve a customer waiting at
window $i$, and $e$ the remainder of the serve effort.
If this were a section on reactive systems we could continue the
example by evaluating this clerk in any possible context.
But since we promised (at the end of the introduction) to deal, for
\vspace{1pt}
the time being, with closed systems only, we close the system by
specifying the three rows of customers.
Let \plat{$Y_1 \defis \bar c_1.Y_1$} model an never ending sequence of customers queuing at window $1$, with $\bar c_i$ being the action of being served at
window $i$. 
Likewise \plat{$Y_2 \defis \bar c_2.Y_2$}, but at window $3$ we place
only two customers that may take the action $g$ of going home when not
being served for a long time, and the action $r$ of returning to window 3 later on:
\[
\begin{array}{c@{~\defis~}l@{\qquad}c@{~\defis~}l@{\qquad}c@{~\defis~}l}
Y^{2,2}_3& \bar c_3.Y^{1,1}_3 + g.Y^{2,1}_3 &
Y^{2,1}_3& \bar c_3.Y^{1,0}_3 + g.Y^{2,0}_3 + r.Y^{2,2}_{3} &
Y^{2,0}_3& r.Y^{2,1}_{3}\\
Y^{1,1}_3& \bar c_3.Y^{0,0}_3 + g.Y^{1,0}_3 &
Y^{1,0}_3& r.Y^{1,1}_3 &
Y^{0,0}_3& {\bf 0} 
\end{array}
\]
Here $Y^{i,j}_3$ is the state in which there are $i$ potential customers for window 3, of which $j$
are currently queueing.
Our progress requirement says that if both customers have gone home, eventually one of them will
return to window~3.
The overall system is $(X | Y_1 | Y_2 | Y^{2,2}_3)\backslash c_1\backslash c_2\backslash c_3$.
An interesting liveness property $\Ge$ is that any customer, including the ones
waiting at window 3, eventually gets served. Two weaker properties are $\Ge_2$: eventually a
customer waiting at window~2 will get served, and $\Ge_3$: eventually either a customer waiting at
window~3 will get served, or both customers that were waiting at window 3 are at home.

As manager of the office, we stipulate that the clerk should treat all
windows fairly, without imposing a particular scheduling strategy.
Which fairness property do we really want to impose?

We can exclude fairness of actions here, because, due to synchronisation into $\tau$, no action
differentiates between serving different customers:
neither $\W\Ac$ nor $\St\Ac$ does ensure $\Ge_2$.

None of the weak fairness properties presented so far is strong enough.
For suppose that the clerk would only ever serve window 1.
Then the customers waiting at windows 2 and 3 represent tasks that
will never be scheduled. However, these tasks are not
perpetually enabled, because each time the clerk is between actions
$c_1$ and $e$, they are not enabled. For this reason, no weak fairness
property is violated in this scenario.
In short: $\W\Y$ for $\Y \in \{\Ts,\In,\Sy,\Cp,\Gr\}$ does not even ensure $\Ge_2$.

Thus it appears that as manager we have to ask the clerk to be
strongly fair: each component (window) that is relentlessly enabled
(by having a customer waiting there when the clerk is ready to serve a new customer), should eventually be served.
In particular, this fairness requirement (FR) does not allow skipping
window 3 altogether, even if its last customer goes home and returns
infinitely often.\footnote{However, it does allow skipping window 3 altogether if both customers are mostly at home,
and can be found at the window only when the clerk is active serving another customer.\label{scenario}}
Technically, FR is implied by imposing $\St\Y$ for some $\Y \in \{\Ts,\In,\Sy,\Gr\}$. However,
{\St\Cp} is too weak; it does not even ensure $\Ge_3$, because when the queue-length at window~3
alternates between 2 and 1, that component is deemed to get its fair share.

Suppose that the clerk implements FR by a
\emph{round robin} scheduling strategy. The clerk first
tries to serve a customer from window 1.
If there is no customer at window 1, or after interacting with a client from 
window 1,  she serves a customer from window 2 if there is one, and so on,
returning to window 1 after dealing with window 3.

At first sight, this appears to be an entirely fair scheduling strategy.
However, it does not satisfy the strong fairness requirement FR\@: 
suppose that the customers from window 3 spend most of their time at home,
with only one of them showing up at window 3 for a short time every day,
and always timed as unfortunate as possible, \ie arriving right after the clerk
started to serve window~1, and departing right before she finishes
serving window 2.\footnote{Thus avoiding the scenario of Footnote~\ref{scenario}.} In that case these customers will never be served, and
hence FR is not met.
Waiting at each window for the next customer to arrive does not work either as the clerk 
will starve after 2 rounds at window 3.

One possible reaction to this conclusion is to reject round robin scheduling;
an alternative is the ``queueing'' algorithm of {\sc Park}~\cite[Sect.~3.3.2]{Pa81}:
``at each stage, the earliest clause [window] with true guard [a waiting customer]
is obeyed [served], and moved to the end of the queue.''
However, another quite reasonable reaction is that round-robin
scheduling is good enough, and that the waiting/going home scheduling
of customer 3 contemplated above is so restrictive that as manager we
will not be concerned about customer 3 never being scheduled.
This customer has only himself to blame. This reaction calls for a
fairness requirement that is stronger than weak fairness but weaker
than FR, in that it ensures liveness properties $\Ge_2$ and $\Ge_3$, but not $\Ge$.
\end{exam}

{\sc Park}~\cite{Pa81} essentially rejects strong fairness because of the complexity of its
implementation: ``The problem of implementing strong fairness is disquieting.
It is not clear that there is any algorithm which is essentially more efficient
than the queueing algorithm of 3.3.2. [\dots] If the problem is essentially
as complex as this, then strong fairness in this form would seem an undesirable
ingredient of language specification''. In \cite{FP83},\ {\sc Fischer \& Paterson} confirm the state of
affairs as feared by {\sc Park}: any scheduling algorithm for the clerk above requires at least $n!$
storage states---where $n$ is the number of windows---whereas a weakly fair scheduling algorithm
requires only $n$ storage states.

\subsection{Response to insistence and persistence}

One of the very first formalisations of fairness found in the literature occur in \cite{GPSS80},
where two notions of fairness were proposed.
\emph{Response to insistence} 
``states that a permanent holding of a
condition or a request $p$ will eventually force a response $q$.''
\emph{Response to persistence} states ``that the infinitely repeating
occurrence of the condition $p$ will eventually cause $q$.''

The later notions of weak and strong fairness \cite{AO83,GFMdR85,LPS81}, reviewed in
Sections~\ref{sec:fairness} and~\ref{sec:taxonomy}, can be seen as instantiations of the
notions from \cite{GPSS80}, namely by taking the condition $p$ to mean that a task is enabled, and
$q$ that it actually occurs. However, the fairness notions of \cite{GPSS80} also allow different
instantiations, more compatible with \ex{3q}.
We can take $p_i$ to be the condition that a customer is waiting at window $i$, and $q_i$ the
response that such a customer is served. The resulting notion of response to insistence
is much stronger than weak fairness, for it disallows the scenario were the clerk only serves
window~1. In fact, it ensures liveness properties $\Ge_2$ and $\Ge_3$.
Yet, it is correctly implemented by a round robin scheduling strategy, and
does not guarantee the unrealistic liveness property $\Ge$.
Qua complexity of its implementation it sides with weak fairness, as there is no need for more than
$n$ storage states; see the last paragraph of the previous section.

In \cite{TR13} we use fairness properties for the formalisation of temporal properties of routing
protocols. The situation there is fairly similar to \ex{3q}. The fairness properties ($P_1$), ($P_2$)
and ($P_3$) that we propose there are local fairness properties rather than global ones (see
\Sec{taxonomy}) and instances of response to insistence.

Response to insistence
may fail the criterion of feasibility.
Without loss of generality, we can identify the response $q$ of
response to insistence with the task $T$ of \df{fair}.
Response to insistence says that the permanent holding of the request $p$ is enough to ensure
$q$, even if $q$ is, from some point onwards, never enabled.
This can happen in a variant of \ex{3q} where the clerk is given the possibility of serving a single
customer for an unbounded amount of time.
Such a scenario clearly runs contrary to the intentions of \cite{GPSS80}.
A more useful fairness notion, probably closer to the intentions of \cite{GPSS80}, is
\begin{quote}\it
  if a condition or request $p$ holds perpetually from some point onwards, and a response $q$ is
  infinitely often enabled, then $q$ must occur.
\end{quote}
This is the logical combination of strong fairness and response to insistence.
We propose to call it \emph{strong weak fairness}.

\newcommand{\ri}{response to insistence}

Footnote 60 in \cite{TR13} shows that the instances of {\ri} used in that paper
do meet the criterion of feasibility, and hence amount to instances of strong weak fairness.

For the same reason as above, response to persistence does not meet the criterion of
feasibility either. To make it more useful, one can combine it with strong fairness in the same
fashion as above. However, the resulting notion does not offer much that is not offered by
strong fairness alone.

\subsection{Strong weak fairness of instructions}\label{sec:swi}

We finish this section by isolating a particular form for strong weak fairness by choosing the
relevant conditions $p$ and responses $q$. This is similar to the instantiation of the notions of
strong and weak fairness from \Sec{fairness} by choosing the relevant set of tasks in \Sec{taxonomy}.
Here we only deal with \emph{strong weak fairness of instructions} (SWI), by following the choices made in 
\Sec{taxonomy} for strong and weak fairness of instructions.

To provide the right setting, we make the following two assumptions on our transition system, all of
which are satisfied on the studied fragment of CCS\@.
By property \ref{cmp} in \Sec{taxonomy}, each instruction~$I$ belongs to a component $\cmp(I)$.
Given a state $P$ of the overall system, a given component $C \in \Ce$, if active at all, is in a
state corresponding to a subexpression of $P$, denoted by $P_C$.

We say that an instruction $I$ is \emph{requested} in state $P$ if a transition $t$ with
$I\in\instr(t)$ is enabled in $P_{\cmp(I)}$. The instruction is \emph{enabled} in $P$ if a transition $t$
with $I\in\instr(t)$ is enabled in $P$. The instruction \emph{occurs} in a path $\pi$ if $\pi$ contains
a transition $t$ with $I\in\instr(t)$. We recall that a path $\pi$ is strongly fair according to
fairness of instructions if, for every suffix $\pi'$ of $\pi$, each task that is relentlessly
enabled on $\pi'$, occurs in $\pi'$.
\begin{definition}
  A path $\pi$ in a transition system is \emph{strongly weakly fair} if, for every suffix $\pi'$ of $\pi$,
  each instruction that is perpetually requested and relentlessly enabled on $\pi'$, occurs in $\pi'$.
\end{definition}
Applied to \ex{3q}, instruction $\bar c_2$ is requested in each state of the
system, and enabled in each state where none of the instructions $e$ is enabled.
Instruction $\bar c_2$ being requested signifies that a customer is waiting at window~2.
Thus, on any rooted path $\pi$ instruction $\bar c_2$ is perpetually requested and relentlessly enabled,
even though it is not perpetually enabled.
Consequently, by strong weak fairness of instructions, but not by weak fairness of instructions,
a synchronisation involving $\bar c_2$ will eventually occur, meaning that window~2 is served.
So $\Ge_2$ is ensured.
Likewise, $\Ge_3$ is ensured, but $\Ge$ is not, for the moment that both customers of
window~3 are at home, the condition of $\bar c_{3}$ being requested is interrupted.

\subsection{Adding strong weak fairness of instructions to our taxonomy}\label{sec:adding}

For our classification of SWI-fairness we require two assumptions:
\begin{enumerate}[(1)]
\setcounter{enumi}{3}
\item\label{swi1} If an instruction $I$ is enabled in a state $P$, it is also requested.
\item\label{swi2} If $I$ is requested in state $P$ and $u$ is a
transition from $P$ to $Q$ such that $\cmp(I) \notin \comp(u)$, then $I$ is still requested in $Q$.
\end{enumerate}
Both \ref{swi1} and \ref{swi2} hold on our fragment of CCS\@.

By \ref{swi1}, SWI is a stronger notion of fairness then {\W\In}. Strictness follows
by \ex{3q}, using $\Ge_2$.  By definition, SWI is weaker than {\St\In}.  Strictness follows by
\ex{3q}, using $\Ge$.  Interestingly, SWI is stronger than {\St\Cp}, even though SWI has the
characteristics of a weak fairness notion.

\begin{proposition}\rm\label{pr:SWI-SC}
When assuming \ref{finite}--\ref{swi2} then $\St\Cp \preceq \St\W\In$.
\end{proposition}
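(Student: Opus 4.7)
The plan is to prove that every SWI-fair path is SC-fair by direct argument, exploiting the asymmetry built into the definition of SWI (perpetually requested, relentlessly enabled) against the symmetric SC condition (just relentlessly enabled). Let $\pi$ be an SWI-fair path. Fixing a suffix $\pi'$ of $\pi$, I will show that every SC-task $T_C$ that is relentlessly enabled on $\pi'$ occurs in $\pi'$, arguing by contradiction: assume no transition $t$ with $C \in \comp(t)$ appears in $\pi'$.

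The first key step is a pigeonhole reduction from SC-tasks to individual instructions. By property \ref{cmp}, a transition $t$ enables $T_C$ in a state iff some $I \in \cmp^{-1}(C)$ with $I \in \instr(t)$ is enabled there. Property \ref{finite} makes $\cmp^{-1}(C)$ finite, so from the relentless enabling of $T_C$ on $\pi'$ one can extract a single instruction $I_0 \in \cmp^{-1}(C)$ that is itself relentlessly enabled on $\pi'$ (treating the finite-path case by taking $I_0$ enabled in the last state).

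The second and most delicate step is promoting $I_0$ from merely enabled-somewhere to \emph{perpetually requested} on a suitable suffix. Choose a state $P$ on $\pi'$ where $I_0$ is enabled and let $\pi''$ be the suffix of $\pi$ starting at $P$. By \ref{swi1}, $I_0$ is requested at $P$. Now every transition $u$ occurring in $\pi''$ satisfies $C \notin \comp(u)$ (since we assumed $T_C$ does not occur in $\pi' \supseteq \pi''$), so $\cmp(I_0) \notin \comp(u)$, and \ref{swi2} propagates the request of $I_0$ through $u$. By induction along $\pi''$, $I_0$ is perpetually requested on $\pi''$. Since $\pi''$ is a suffix of $\pi'$, $I_0$ remains relentlessly enabled on $\pi''$ as well.

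The conclusion is then immediate: applying the SWI-fairness of $\pi$ to the suffix $\pi''$ yields a transition $t$ with $I_0 \in \instr(t)$ occurring in $\pi''$, hence in $\pi'$; by \ref{cmp} we get $C = \cmp(I_0) \in \comp(t)$, so $T_C$ does occur in $\pi'$, contradicting our assumption. The main obstacle is precisely the perpetual/relentless gap in the hypothesis of SWI: the whole argument hinges on the combined use of \ref{swi1} (to initiate the request) and \ref{swi2} (to preserve it), together with the hypothetical absence of any $C$-touching transition in $\pi'$ that would otherwise break the preservation.
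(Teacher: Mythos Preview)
Your proof is correct and follows essentially the same approach as the paper: a contradiction argument that uses \ref{finite} and \ref{cmp} to extract a single relentlessly enabled instruction $I_0$ with $\cmp(I_0)=C$, then uses \ref{swi1} to initiate and \ref{swi2} to propagate the request of $I_0$ along the suffix (relying on the assumption that no $C$-touching transition occurs), yielding a perpetually requested and relentlessly enabled instruction to which SWI-fairness applies. The only difference is that you spell out the suffix quantification explicitly, whereas the paper invokes its earlier remark that this reduction is taken for granted.
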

\begin{proof}
  We show that any SWI-fair path is also SC-fair. The proof is by contradiction. 
  Suppose $\pi$  is  $\St\W\In$-fair but  not $\St\Cp$-fair.
  Then there exists a component $C$ such that $T_C$, which was given as $\{t\in\Tr \mid C\in\comp(t)\}$, is
  relentlessly enabled on $\pi$, yet never occurs in $\pi$.
  Since each \St\Cp-task is the union of finitely many \St\In-tasks
  (using \ref{finite} and \ref{cmp}; see \pr{SC-SI}),
  there must be an instruction $I$ with $\cmp(I)=C$ that is relentlessly enabled on $\pi$.
  Let $P$ be a state of $\pi$ in which $I$ is enabled.
  Then, by \ref{swi1}, $I$ is also requested in state $P$.
  But since no transition $t$ with $C\in\comp(t)$ is ever scheduled, by \ref{swi2},
  $I$ remains requested for that component
  in all subsequent states, and hence is perpetually requested. By SWI-fairness, $I$, and thereby $T_C$,
  will occur in $\pi$, contradicting the assumption.
\end{proof}
Hence the position of SWI in our hierarchy of fairness notions is as indicated in \fig{full taxonomy}.
There cannot be any further arrows into or out of SWI, because this would give rise to new arrows
between the other notions.

\section{Linear-time Temporal Logic}\label{sec:ltl}

Fairness properties can be expressed concisely in Linear-time Temporal Logic (LTL).
LTL formulas~\cite{Pn77} are built from a set AP of \emph{atomic propositions} $p,q,\dots$ by means of unary
operators \textbf{F} and \textbf{G}\footnote{In later work operators \textbf{X} and \textbf{U} are
  added to the syntax of LTL; these are not needed here.}\linebreak[2]
and the connectives of propositional logic. 
Informally, a path $\pi$ satisfies $\textbf{F}p$, for atomic proposition $p$, if there is a state of $\pi$ that satisfies $p$, meaning that 
$p$ is reached \emph{eventually}. 
A path $\pi$ satisfies $\textbf{G}p$ if all states of $\pi$ satisfy $p$, meaning that $p$ holds \emph{globally}.

Formally, all operators are interpreted on total transition systems $G=(S, \Tr, \source,\target,I)$ as in \df{TS} that
are equipped with a validity relation ${\models} \subseteq S \times \textrm{AP}$ that tells which
atomic propositions hold in which states. 
Here a transition system is \emph{total} iff each state has an outgoing transition.
It is inductively defined when an infinite path $\pi$ in $G$ \emph{satisfies} an LTL formula
$\varphi$---notation $\pi\models\varphi$:
\begin{itemize}
\item $\pi= s_0 t_1 s_1 t_2 s_2 \dots \models p$ iff $s_0 \models p$;
\item $\pi \models \varphi \wedge \psi$ iff $\pi \models \varphi$ and $\pi \models \psi$;
\item $\pi \models \neg\varphi$ iff $\pi \not\models \varphi$;
\item $\pi \models \textbf{F}\psi$ iff there is a suffix $\pi'$ of $\pi$ with $\pi'\models\phi$;
\item $\pi \models \textbf{G}\psi$ iff for each suffix $\pi'$ of $\pi$ one has $\pi'\models\phi$.
\end{itemize}
The transition system $G$ satisfies $\varphi$---notation $G\models\varphi$---if $\pi\models\varphi$ for each infinite rooted
path in $G$.

This definition of satisfaction has a progress assumption built-in,
for totality implies that a path is infinite iff it is progressing. LTL can be applied to transition systems that are not
necessarily total by simply replacing ``infinite'' by ``progressing'' in the above definition
(cf.~\Sec{progress}).

We apply LTL to transition systems $G$ were some relevant propositions $q$ are defined on the
\emph{transitions} rather then the states of $G$. This requires a conversion of $G$ into a related
transition system $G'$ where those propositions are shifted to the states, and hence can be used as
the atomic propositions of LTL\@. Many suitable conversions appear in the literature \cite{DV95,GV06}.
For our purposes, the following partial unfolding suffices:

\begin{definition}
Given an augmented transition system 
\[G=(S, \Tr, \source,\target,I,\textrm{AP}_S,\models_S,\textrm{AP}_\Tr\/,\models_\Tr)\]
where ${\models_S} \subseteq S \times \textrm{AP}_S$ supplies the validity of state-based
atomic propositions, and ${\models_\Tr} \subseteq \Tr \times \textrm{AP}_\Tr$ supplies the validity
of transition-based atomic propositions.

The converted transition system $G'$ is defined as $(S', \Tr', \source',\target',I,\textrm{AP},\models)$, where
\begin{itemize}
\item$S':=I \cup \Tr$\ ,
\item $\Tr':= \{(\source(t),t)\mid t\mathbin\in\Tr \wedge \source(t)\in I\} \cup \{(t,u)\in\Tr\times\Tr \mid \target(t)=\source(u)\}$\ ,
\item $\source'(t,u):=t$\ ,
\item $\target'(t,u):=u$\ ,
\item $\textrm{AP} := \textrm{AP}_S \uplus \textrm{AP}_\Tr$ , and
\item $t \models p$ iff $p \in \textrm{AP}_S \wedge(
   (t\in I \wedge t\models_S p) \vee (t\in\Tr\wedge \target(t) \models_S p))$\\
   \phantom{$t \models p$ iff }or 
$p \in \textrm{AP}_{\Tr} \wedge t\in\Tr\wedge t \models_{\Tr} p$\ .
\end{itemize}
\end{definition}

This construction simply makes a copy of each state for each transition that enters it, so that a
proposition pertaining to a transition can be shifted to the target of that transition, without
running into ambiguity when a state is the target of multiple transitions.
Instead of calling a state $(t,\target(t))$, we simply denote it by the transition $t$.
We say that $G \models \varphi$ iff $G' \models \varphi$, for $\varphi$ an LTL formula that may
use state-based as well as transition-based atomic propositions.

In \cite{GPSS80}, {\ri} was expressed in LTL as
$\textbf{G}(\textbf{G}p \Rightarrow \textbf{F}q)$.
Here $\textbf{G}p \Rightarrow \textbf{F}q$ ``states that a permanent holding of a
condition or a request $p$ will eventually force a response $q$.''
``Sometimes, the response $q$ frees the requester from being frozen at the requesting state. In this
case once $q$ becomes true, $p$ ceases to hold, apparently falsifying the hypothesis $\textbf{G}p$.
This difficulty is only interpretational and we can write instead the logically equivalent condition
$\neg\textbf{G}(p \wedge \neg q)$'' \cite{GPSS80}.
The outermost $\textbf{G}$ requires this condition to hold for all future behaviours.

Using that $\neg\textbf{G}(p \wedge \neg q)$ is equivalent to $\textbf{F}\neg p \vee \textbf{F} q$,
and $\textbf{G}(\textbf{F}\neg p \vee \textbf{F}q)$ is equivalent to
$\textbf{G}\textbf{F}\neg p \vee \textbf{G}\textbf{F}q$, it follows that 
{\ri} can equivalently be expressed as
$\textbf{F}\textbf{G}p \Rightarrow \textbf{G}\textbf{F}q$, saying that if from some point onwards
condition $p$ holds perpetually, then response $q$ will occur infinitely often.

Weak fairness, as formalised in \Sec{fairness}, can be expressed by the same LTL formula,
but taking $p$ to be the condition that $T$ is enabled, which holds for any state with an
outgoing transition from~$T$, and
$q$ to be the occurrence of a task $T$, a condition that holds when performing any
transition from~$T$. The whole notion of weak fairness (when given a collection~$\T$ of
tasks) is then the conjunction, for all $T\in\T$, of the formulas
$\textbf{G}(\textbf{G}(\textit{enabled}(T)) \Rightarrow \textbf{F}(\textit{occurs}(T)))$.

Likewise, \emph{Response to Persistence} is expressed as 
$\textbf{G}(\textbf{G}\textbf{F}p \Rightarrow \textbf{F}q)$, where
$\textbf{G}\textbf{F}p \Rightarrow \textbf{F}q$ is logically equivalent to
$\neg \textbf{G}(\textbf{F}p \wedge \neg q)$ \cite{GPSS80}.
\emph{Response to Persistence} can equivalently be expressed as
$\textbf{G}\textbf{F}p \Rightarrow \textbf{G}\textbf{F}q$, saying that
if condition $p$ holds infinitely often then response $q$ occurs infinitely often.
Again, strong fairness is expressed in the same way.

\section{Full Fairness}\label{sec:full}

In \citep[Def.~3]{QS83} a path $\pi= s_0\,t_1\,s_1\,t_2\,s_2\dots$ is regarded as unfair if there
exists a predicate $\He$ on the set of states such that a state in $\He$ is reachable from each state
$s_i$, yet $s_i\mathbin\in \He$ for only finitely many $i$.\linebreak[3] It is not precisely defined which sets of states
$\He$ can be seen as predicates; if we allow any set $\He$ then the resulting notion of fairness is not
feasible. 

\begin{exam}\label{ex:FFQ}
  Let $\pi$ be any infinite rooted path of the following program.
\[
\begin{array}{@{}c@{}}
\init{(x,y)}{(0,0)}\\[0.8ex]
\begin{array}{@{}l@{}}
{\bf while~}(\textit{true}){\bf ~do~}(x,y)\mathbin{:=}(x{+}1,0){\bf ~od}
\end{array}
~~
\|
~~
\begin{array}{@{}l@{}}
{\bf while~}(\textit{true}){\bf ~do~}(x,y)\mathbin{:=}(x{+}1,1){\bf ~od}
\end{array}
\end{array}
\]
  Then $\pi$ can be encoded as a function $f_\pi:\IN\rightarrow \{0,1\}$, expressing $y$ in
    terms of the current value of $x$.
  Let $\He:=\{(x,1{-}f_\pi(x)) \mid x \in\IN\}$.
  $\He$ is reachable from each state of $\pi$, but never reached.
  Consequently $\pi$ is unfair. This holds for all infinite rooted paths. So no path is fair.
\end{exam}

\noindent
Since we impose feasibility as a necessary requirement for a notion of fairness, this rules out the
form of fairness contemplated above. 
Nevertheless, this idea can be used to support liveness properties $\Ge$,
and in this form we propose to call it \emph{full fairness}.

\begin{definition}\label{df:AGEF}\cite{GV06}
A liveness property $\Ge$, modelled as a set of states in a transition system $G$, is an \emph{AGEF property}
iff (a state of) $\Ge$ is reachable from every state $s$ that is reachable from an initial state of $G$.\footnote{%
The name stems from the corresponding CTL-formula.}
\end{definition}

\begin{definition}\label{df:Fu}
  A liveness property $\Ge$ holds under the assumption of \emph{full fairness} (\Fu)
  iff $\Ge$ is an AGEF property.
\end{definition}
We cannot think of any formal definition that approximates the essence of the notion of fairness
  from \citep[Def.~3]{QS83} more closely then \df{Fu} above, while still making it feasible.

Full fairness is not a fairness assumption in the sense of \Sec{fairness}.
It does not eliminate a particular set of paths (the unfair ones) from consideration.
Nevertheless, like fairness assumptions, it increases the set of liveness properties that hold, and
as such can be compared with fairness assumptions.

Obviously no feasible fairness property can be strong enough to ensure a liveness property
that is not AGEF, for it is always possible to follow a finite path to a state from which it is
hopeless to still satisfy $\Ge$.
Hence full
fairness, as a tool for validating liveness properties, is the strongest notion of fairness conceivable.

\section{Strong Fairness of Transitions}\label{sec:SFTransitions}

This section provides an exact characterisation of the class of liveness properties that hold under
strong fairness of transitions (ST). It follows that on finite-state transition systems, 
strong fairness of transitions is as
 strong as full fairness.

\begin{theorem}\rm\label{thm:ST-fairness}
A liveness property $\Ge$, modelled as a set of states, holds in a transition system under the
assumption of strong fairness of transitions, iff $\Ge$ is an AGEF property and each infinite rooted path
that does not visit (a state of) $\Ge$ has a loop.
\end{theorem}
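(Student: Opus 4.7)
The plan is to prove the two implications separately, relying on feasibility of ST-fairness (\thm{feasibility}) for one direction and a distance function to $\Ge$ for the other.

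For the ($\Leftarrow$) direction, I would assume $\Ge$ is AGEF and that every infinite rooted path avoiding $\Ge$ contains a repeated state, and show that every ST-fair rooted path $\pi$ visits $\Ge$. If $\pi$ is finite, then ST-fairness entails progress (\pr{P<W<S}), so the final state $s$ of $\pi$ has no outgoing transition; AGEF then forces $s \in \Ge$. If $\pi$ is infinite and supposedly avoids $\Ge$, then by hypothesis $\pi$ has a loop, so the set $V$ of states occurring infinitely often is nonempty. Let $d(s)$ be the length of a shortest path from $s$ to $\Ge$; by AGEF, $d$ is finite on all states reachable from $I$. Pick $s^* \in V$ minimising $d(s^*)$. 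Since $s^* \notin \Ge$ we have $d(s^*)\ge 1$, so there is a transition $t$ with $\source(t)=s^*$ and $d(\target(t)) < d(s^*)$. Because $s^* \in V$, the transition $t$ is relentlessly enabled on $\pi$; ST-fairness thus forces $t$ to occur infinitely often, which puts $\target(t) \in V$ and contradicts the minimality of $d(s^*)$.

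For the ($\Rightarrow$) direction, assume every ST-fair rooted path visits $\Ge$. For the loop clause, suppose $\pi$ is an infinite rooted path that avoids $\Ge$ and contains no repeated state. Then for any transition $t$ the unique state where $t$ is enabled, namely $\source(t)$, occurs at most once in $\pi$, so some suffix of $\pi$ no longer contains $\source(t)$; hence no transition is relentlessly enabled on $\pi$, and $\pi$ is ST-fair vacuously, contradicting the assumption. For the AGEF clause, suppose a state $s$ is reachable from $I$ but $\Ge$ is not reachable from $s$. Pick a rooted path $\rho$ ending in $s$ that avoids $\Ge$ (treating states of $\Ge$ as absorbing, which is harmless since ``reach $\Ge$'' is insensitive to post-$\Ge$ behaviour). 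Since in the setting of \thm{feasibility} only countably many tasks are enabled in any state, ST-fairness is feasible, so $\rho$ extends to an ST-fair rooted path $\pi$; the extension stays in the set of states reachable from $s$, and therefore avoids $\Ge$, yielding an ST-fair rooted path that does not visit $\Ge$---a contradiction.

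The main obstacle is the AGEF clause, where one has to produce a rooted prefix $\rho$ to $s$ that itself avoids $\Ge$ before invoking feasibility; without the convention that $\Ge$-states are absorbing this can fail (an initial state in $\Ge$ would have every rooted path vacuously visit $\Ge$), so part of the proof is recognising that this convention is legitimate for the property under study. The other ingredients---progress from ST-fairness, the distance-function descent argument, and the observation that loopless infinite paths enable no transition relentlessly---are all straightforward once set up.
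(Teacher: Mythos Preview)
Your approach is essentially the same as the paper's: distance-to-$\Ge$ descent for the ($\Leftarrow$) direction, and the observation that loop-free infinite paths are vacuously ST-fair for the ($\Rightarrow$) direction. Two remarks are worth making.

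First, in the ($\Leftarrow$) direction, the step ``$\pi$ has a loop, so the set $V$ of states occurring infinitely often is nonempty'' is not immediate: a single repetition of a state does not by itself give a state visited infinitely often. The paper closes this by observing, just before the proof, that the hypothesis ``every infinite rooted path avoiding $\Ge$ has a loop'' is \emph{equivalent} to ``every such path has a state visited infinitely often''---because from a path with no infinitely-recurring state one can excise all loops and still obtain an infinite, now loop-free, rooted path avoiding $\Ge$. You should invoke this equivalence rather than the bare ``has a loop'' assumption.

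Second, your handling of the AGEF clause is in fact more careful than the paper's. The paper simply cites the earlier remark that no feasible fairness assumption can ensure a non-AGEF property, without spelling out the countability hypothesis behind \thm{feasibility} or the issue you raise about rooted paths to $s$ that might already pass through $\Ge$. Your absorbing-state convention is a clean way to dispose of the latter, and it is indeed harmless for the ``eventually reach $\Ge$'' property.
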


Note that the condition ``each infinite rooted path
that does not visit $\Ge$ has a loop'' is equivalent to
``each infinite rooted path that does not visit $\Ge$ contains a state that is visited infinitely often''.
For any infinite path $\pi$ that contains no state that is visited infinitely often can be transformed
into a path that has only states and transition from $\pi$, but skips all loops.

\begin{proof}
`$\Rightarrow$': Let $\Ge$ hold when assuming ST-fairness. 
  We first show, by contraposition,  that $\Ge$ must be an AGEF property. 
As pointed out above, if $\Ge$ is not an AGEF-property, it does not hold under any feasible fairness assumption, including ST.

Next we show that $\Ge$ satisfies the condition on infinite paths, again by contraposition.
Let $\pi$ be a loop-free infinite rooted path that does not visit (a state of) $\Ge$.
Since each state of $\pi$ is visited only finitely often, there is no transition that is enabled
relentlessly on $\pi$. Thus, $\pi$ is ST-fair, and consequently $\Ge$ does not hold under strong
fairness of transitions.

`$\Leftarrow$':
Let $\Ge$ be an AGEF-property and each infinite rooted path that does not visit
$\Ge$ has a loop. Let $\pi$ be an ST-fair rooted path.  It remains to show that $\pi$ visits $\Ge$.

  Define the \emph{distance} (from $\Ge$) of a state $s$ to be the length of the shortest path from
  $s$ to a state of $\Ge$. Since $\Ge$ is AGEF, any state $s$ of $\pi$ has a finite distance.
  For each $n\geq 0$ let $S_n$ be the set of states with distance $n$.
  Each state in $S_n$, with $n>0$, must have an outgoing transition to a state in $S_{n-1}$.
  \begin{itemize}
  \item Suppose that $\pi$ is finite. If its last state is in $S_n$ for some
  $n>0$, then a transition (to a state in $S_{n-1}$) must be
  enabled in its last state, contradicting the ST-fairness of $\pi$.
  Hence its last state is in $S_0$. So $\pi$ visits $\Ge$.
  \item Now suppose $\pi$ is infinite. Then, there is a state of $\pi$ that is visited infinitely often.
  Let $n\geq 0$ be such that a state $s$ in $S_n$ is visited infinitely often.
  Assume that $n>0$.
  Consider a transition $t$ from $s$ to a state in $S_{n-1}$. This transition is relentlessly
  enabled on $\pi$, and thus must occur in $\pi$ infinitely often. It follows that a state in
  $S_{n-1}$ is visited infinitely often as well. Hence, a trivial induction yields that a state in 
  $S_0$ is visited infinitely often. So $\pi$ visits $\Ge$.
\popQED
  \end{itemize}
\end{proof}

\begin{corollary}\rm\label{cor:ST-fairness}
A liveness property $\Ge$ holds in a finite-state transition system under the
assumption strong fairness of transitions iff $\Ge$ is an AGEF property.
\end{corollary}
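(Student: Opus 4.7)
The plan is to derive the corollary as an immediate consequence of \thm{ST-fairness}, by observing that the second condition in that theorem is automatic in the finite-state setting.

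For the forward direction, if $\Ge$ holds under the assumption of strong fairness of transitions, then by \thm{ST-fairness} $\Ge$ is in particular an AGEF property; no use of finiteness is needed here.

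For the reverse direction, I would assume that the transition system is finite-state and that $\Ge$ is AGEF. By \thm{ST-fairness} it suffices to verify that every infinite rooted path $\pi$ that does not visit $\Ge$ contains a loop. But since $S$ is finite, $\pi$ can visit only finitely many distinct states, so by the pigeonhole principle at least one state $s$ must occur at two different positions of $\pi$; the segment between these two occurrences is a loop. Hence the second hypothesis of \thm{ST-fairness} is trivially satisfied, and $\Ge$ holds under strong fairness of transitions.

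I do not anticipate any obstacle: the entire content of the corollary sits in the pigeonhole observation, and all the serious work has already been done in \thm{ST-fairness}. The proof is therefore just a short application of that theorem together with the finiteness of $S$.
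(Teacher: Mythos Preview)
Your proposal is correct and matches the paper's treatment exactly: the paper states the corollary immediately after \thm{ST-fairness} without a separate proof, relying on precisely the pigeonhole observation you spell out. There is nothing to add.
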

Thus, on finite-state systems strong fairness of transitions is the strongest notion of fairness conceivable.
That it is strictly stronger than the other notions of Section~\ref{sec:taxonomy} is illustrated below.

\begin{exam}\label{ex:PF}
 Consider the process $X | c.X$, where $X \defis a.b.c.X$.  Let $\Ge$ consist of those states where
 both processes are in the same local state.  If we consider the finite-state transition system that
 lists all $9$ possible configurations of the system then $\Ge$ is an AGEF property, and thus will
 be reached under ST-fairness. Clearly, $\Ge$ is not guaranteed under $\St\Sy$- and
 $\St\Ac$-fairness.
\end{exam}

Example~\ref{ex:ST} shows that on infinite-state systems strong fairness of transitions is strictly
weaker than full fairness.

\section{Probabilistic Fairness}\label{sec:probabilistic}
When regarding nondeterministic 
choices as probabilistic choices with
unknown probability distributions, one could say that a liveness property
\emph{holds under the assumption of probabilistic fairness}
if it holds with probability 1, regardless of these probability distributions.
Formally, a \emph{probabilistic interpretation} of a transition system $G$\footnote{In this section
  we restrict attention to transition systems with exactly one initial state. An arbitrary transition
  system can be transformed into one with this property by adding a fresh initial state with an
  outgoing transition to each of the old initial states. This restriction saves us the effort of
  also declaring a probability distribution over the initial states.} is a function
$p:\Tr\rightarrow(0,1]$ assigning a positive probability to each transition, such that for each
state $s$ that has an outgoing transition the probabilities of all transitions with source $s$ sum up to 1.
Given a probabilistic interpretation, the probability of (following) a finite path is the product of
the probabilities of the transitions in that path. The probability of reaching a set of states $\Ge$
is the sum over all finite paths from the initial state to a state in $\Ge$, not passing through any states of
$\Ge$ on the way, of the probability of that path. Now $\Ge$ holds under probabilistic fairness ($\Pr$) iff
this probability is 1 for \emph{each} probabilistic interpretation.
Probabilistic fairness is a meaningful concept only for countably branching transition
systems, since it is impossible to assign each transition a positive probability such that their sum is 1
in case a state has uncountably many outgoing transitions.

A form of probabilistic fairness was contemplated in \cite{Pn83}.
Similar to full fairness, probabilistic fairness is not a fairness assumption in the sense of
\Sec{fairness}, as it does not eliminate a particular set of paths.
However, similar to fairness assumptions, it is a method for proving liveness properties.
This allows us to compare its strength with proper notions of fairness: one method for proving
living properties is called \emph{stronger} than another iff it proves a larger set of liveness properties. 
We show that probabilistic fairness is equally strong as strong fairness of transitions.

\begin{theorem}\rm\label{thm:probfair}
On countably branching transition systems a liveness property holds under probabilistic fairness iff
it holds under strong fairness of transitions.
\end{theorem}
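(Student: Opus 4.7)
The plan is to reduce both directions to Theorem~\ref{thm:ST-fairness}, exploiting the fact that on a countably branching transition system with a single initial state (to which we may restrict by the footnote in Section~\ref{sec:probabilistic}), the set of reachable states, and hence the set of reachable transitions, is countable.

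For the $(\Leftarrow)$ direction, fix an arbitrary probabilistic interpretation $p$ and consider the law of a random infinite rooted path. By the (strong) Markov property, conditional on a reachable state $s$ being visited infinitely often, the successive exits from $s$ are i.i.d.\ with distribution $p(\cdot\mid s)$. Since $p(t)>0$ for every outgoing transition $t$ of $s$, the event ``$s$ is visited infinitely often but $t$ is taken only finitely often'' has probability zero, and a countable union over all reachable transitions still has probability zero. Hence almost every rooted path is ST-fair; if $\Ge$ holds under ST, every ST-fair path visits $\Ge$, so $\Ge$ is reached with probability one under $p$.

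For the $(\Rightarrow)$ direction I argue contrapositively. Assume $\Ge$ fails under ST. By Theorem~\ref{thm:ST-fairness} one of two cases holds. (i) $\Ge$ is not AGEF: some reachable state $s$ cannot reach $\Ge$. Under \emph{any} interpretation, any finite rooted path to $s$ has positive probability, and from $s$ the chance of ever reaching $\Ge$ is zero; hence $\Pr(\text{reach } \Ge)<1$. (ii) There is an infinite rooted path $\pi=s_0\,t_1\,s_1\,t_2\dots$ not visiting $\Ge$ in which all $s_i$ are distinct (loop-free; using the remark immediately after Theorem~\ref{thm:ST-fairness}). I tailor an interpretation to $\pi$: at $s_i$, assign $t_{i+1}$ the probability $1-2^{-(i+2)}$ and distribute the remaining mass as a strictly positive distribution over the (at most countably many) other outgoing transitions of $s_i$; on states off $\pi$, any positive distribution summing to one will do. Since $\sum_{i\geq 0} 2^{-(i+2)}<\infty$, the probability of following $\pi$ forever equals $\prod_{i\geq 0}(1-2^{-(i+2)})>0$, so $\Ge$ is not reached with probability one.

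The main obstacle is case (ii), where one must simultaneously respect the constraint $p(t)\in(0,1]$ for every transition (ruling out any ``follow $\pi$ deterministically'' strategy) yet keep the probability of staying on $\pi$ forever bounded away from zero. This is exactly the first Borel--Cantelli phenomenon: summability of the escape probabilities along a \emph{loop-free} path produces a positive no-escape probability. Were $\pi$ forced to traverse a fixed loop infinitely often, the corresponding infinite product of loop probabilities would collapse to zero, so the construction crucially relies on the loop-free witness supplied by Theorem~\ref{thm:ST-fairness}.
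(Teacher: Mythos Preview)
Your argument is correct, and for the $(\Rightarrow)$ direction it coincides with the paper's proof almost verbatim: both split into the non-AGEF case (any interpretation reaches a hopeless state with positive probability) and the loop-free-path case (tailor an interpretation so that the infinite product of on-path probabilities stays positive).

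The $(\Leftarrow)$ direction, however, takes a genuinely different route. You argue \emph{directly}: for any interpretation the induced Markov chain has the property that almost every rooted path is ST-fair (via the strong Markov property and a Borel--Cantelli argument over the countably many reachable transitions), and hence visits $\Ge$. The paper instead argues the \emph{contrapositive}: given an interpretation under which $\Pr(\text{reach }\Ge)<1$, it defines $p_s:=\Pr(\text{reach }\Ge\mid\text{start at }s)$, observes that $p_s$ is a weighted average of the values $p_{s'}$ over successors, and uses this to construct an infinite rooted path along which the values $p_{s_i}$ are non-increasing and drop strictly infinitely often; such a path avoids $\Ge$ and, after excising loops, yields the loop-free witness required by Theorem~\ref{thm:ST-fairness}. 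Your approach is shorter and leans on standard Markov-chain machinery; the paper's approach is self-contained and constructive, turning the probabilistic data into an explicit deterministic witness without invoking the strong Markov property.
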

\begin{proof}
We use the characterisation of strong fairness of transitions from Theorem~\ref{thm:ST-fairness}.
If a liveness property $\Ge$ is not AGEF, it surely does not hold under probabilistic fairness, since
there exists a state, reachable with probability $>0$, from which $\Ge$ cannot be reached.
Furthermore, if there exists an infinite loop-free rooted path that does not visit $\Ge$,
then the probabilities of the transitions leaving that path can be chosen in such a way that the
probability of remaining on this path forever, and thus not reaching $\Ge$, is positive.

Now let $\Ge$ be an AGEF-property that does not hold under probabilistic fairness.
It remains to be shown that there exists an infinite loop-free rooted path that does not visit $\Ge$.

Allocate probabilities to the transitions in such a way that the probability of reaching $\Ge$ from
the initial state is less than 1. For each state $s$, let $p_s$ be the probability of reaching $\Ge$
from that state. Since $p_s$ is the weighted average of the values $p_{s'}$ for all states $s'$
reachable from $s$ in one step, we have that if $s$ has a successor state $s'$ with $p_{s'}>p_s$,
then it also has a successor state $s''$ with $p_{s''}<p_s$.
\vspace{1ex}

\textsc{Claim:}
If, for a reachable state $s$, $p_s<1$ then there is a path $s t_1 s_1 t_2 s_2 \dots
s_n t_{n+1} s_{n+1}$ with $n\geq 0$ such that $p_{s_i}=p_s$ for $i=1,\dots,n$ and $p_{s_{n+1}}<p_s$.
\vspace{1ex}

\textsc{Proof of claim:}
Since $\Ge$ is an AGEF property, there is a path $\pi = s t_1 s_1 t_2 s_2 \dots s_k$
from $s$ to a state $s_k\in\Ge$. Clearly $p_{s_k}=1$. Let $n\in\IN$ be the smallest index for
which $p_{s_{n+1}} \neq p_{s_n}$. In case $p_{s_{n+1}} < p_{s_n}$ we are done. Otherwise, there must
be a successor state $s''$ of $s_n$ with $p_{s''}<p_{s_n}$, and we are also done.
\hfill \rule{1ex}{7pt}\vspace{1ex}

\textsc{Application of the claim:}
By assumption, $p_{s_0}<1$, for $s_0$ the initial state.
Using the claim, there exists an infinite path $\pi =s_0 t_1 s_1 t_2 s_2 \dots$ such that $p_{s_i}\leq p_s$
for all $i\in\IN$, and $\forall i\mathbin\in\IN.~ \exists j\mathbin>i.~p_{s_j}<p_{s_i}$. Clearly, this path does not
visit a state of $\Ge$, and no state $s$ can occur infinitely often in $\pi$. After cutting out
loops, $\pi$ is still infinite, and moreover loop-free, which finishes the proof.
\end{proof}

\section{Extreme Fairness}\label{sec:extreme}

In \cite{Pn83}, {\sc Pnueli} proposed the strongest possible
notion of fairness that fits in the
format of \df{fair}. 
A first idea (not {\sc Pnueli}'s) to define the strongest notion might be to admit \emph{any} set of transitions as a task.
However, the resulting notion of fairness would fail the criterion of feasibility.

\begin{exam}\label{ex:EF}
  The transition system for the program of Example~\ref{ex:FFQ}
  can be depicted as an infinite binary tree.
  For any path $\pi$ in this tree, let $T_\pi$ be the set of all transitions that do not occur in $\pi$. 
  On $\pi$, this task is perpetually enabled, yet does not occur.
  It follows that $\pi$ is unfair. As this holds for any path $\pi$, no path is fair, and the
  resulting notion of fairness is infeasible.
\end{exam}
Avoiding this type of example, {\sc Pnueli} defined the notion of \emph{extreme fairness} by admitting any
task (\ie\ any predicate on the set of transitions) that is definable in first-order logic.
This makes sense when we presuppose any formal language for defining predicates on transitions.
{\sc Pnueli} is not particularly explicit about the syntax and semantics of such a language, which is
justified because one merely needs to know that such formalisms can generate
only countably many formulas, and thus only countably many tasks.
With \thm{feasibility} it follows that extreme fairness is feasible.

Assuming a sufficiently expressive language for specifying tasks, any task according to fairness of
actions, transitions, instructions, etc., is also a task according to extreme fairness.
Hence the strong version of extreme fairness ($\Ex$)---the one considered in \cite{Pn83}---sits at the
top of the hierarchy of \fig{taxonomy}, but still below full fairness.

\begin{exam}\label{ex:FF}
  Under full fairness the following program will surely terminate.
\[
\begin{array}{@{}c@{}}
\init{y}{1}\\[0.8ex]
\left.
\begin{array}{@{}l@{}}
{\bf while~}(y>0){\bf ~do~~~}y:=y-1{\bf ~~~od}
\end{array}
~~~
\right \|
~~~
\begin{array}{@{}l@{}}
{\bf while~}(y>0){\bf ~do~~~}y:=y+1{\bf ~~~od}
\end{array}
\end{array}
\]
   Yet, under any other notion of fairness, including probabilistic fairness, it is possible that $y$
   slowly but surely increases (in the sense that no finite value is repeated infinitely often).
   Fairness (even extreme fairness) can
   cause a decrease of  $y$ once in a while, but is not strong enough to counter the upwards trend.
\end{exam}
{\sc Pnueli} invented extreme fairness as a true fairness notion
that closely approximates a variant of probabilistic fairness \cite{Pn83}.
His claim that for finite-state programs extreme fairness and probabilistic fairness coincides, now
follows from Corollary~\ref{cor:ST-fairness} and Theorem~\ref{thm:probfair}.

\section{Justness}\label{sec:justness}

Fairness assumptions can be a crucial ingredient in the verification of liveness properties
of real systems. A typical example is the verification of a communication
protocol that ensures that a stream of messages is relayed correctly, without loss or
reordering, from a sender to a receiver, while using an unreliable
communication channel.  The \emph{alternating bit protocol} \cite{Lyn68,BSW69}, an example of such
a communication protocol, works by means of acknowledgements,
and resending of messages for which no acknowledgement is received.

To prove correctness of such protocols one has to make the indispensable fairness assumption 
that attempts to transmit a message over the unreliable channel will not fail in perpetuity.
Without a fairness assumption,
no such protocol can be found correct, and one misses a chance to
check the protocol logic against possible flaws that have nothing to
do with a perpetual failure of the communication channel.
For this reason, fairness assumptions are made in many process-algebraic
verification platforms, and are deeply ingrained in their methodology \cite{BK86,BBK87a}.
The same can be said for other techniques, such as automata-based approaches, or temporal logic.

Fairness assumptions, however, need to be chosen with care,
since they can lead to false conclusions.

\begin{exam}\label{ex:phone}
Consider\hfill the\hfill  process\hfill  $P:=(X|b)\backslash b$\hfill  where\hfill  $X \defis a.X + \bar b.X$.\hfill  The\hfill  process\hfill  $X$\hfill  models 
{\makeatletter
\let\par\@@par
\par\parshape0
\everypar{}\begin{wrapfigure}[3]{r}{0.267\textwidth}
 \vspace{-1.6ex}
  \input{phone}
  \centerline{\raisebox{1ex}{\box\graph}}
   \vspace{-10ex}
  \end{wrapfigure}
\noindent the behaviour of relentless Alice, who either picks up her phone 
when Bob is calling ($\bar b$), or performs another activity ($a$), such as eating an apple.
In parallel, $b$ models Bob, constantly trying to call Alice; the action $b$ models the 
call that takes place when Alice answers the phone. Only
a successful call can end Bob's misery. Our liveness property $\Ge$ is to achieve a connection between Alice and Bob.
\par}
  Under each of the notions of fairness of \Sects~\ref{sec:taxonomy}--\ref{sec:extreme}, $P$ satisfies $\Ge$.
  Yet, it is perfectly reasonable that the connection is never
  established: Alice could never pick up her phone, as she is not in the mood of talking to Bob; maybe she
  totally broke up with him.
\end{exam}
The above example is not an anomaly; it describes the default situation.
A fairness assumption says, in essence, that if one tries something often enough, one will eventually
succeed. There is nothing in our understanding of the physical universe that supports such a belief.
Fairness assumptions are justified in exceptional situations, the verification of the alternating
bit protocol being a good example. However, by default they are unwarranted.

In the remainder of this section we investigate what is needed to verify realistic liveness properties when not making any fairness assumptions.
\begin{exam}\label{ex:just}
  Consider the process $P:=X|c$ where $X \defis b.X$. Here 
{\makeatletter
\let\par\@@par
\par\parshape0
\everypar{}\begin{wrapfigure}[2]{r}{0.267\textwidth}
 \vspace{-5.5ex}
  \input{just}
  \centerline{\raisebox{1ex}{\box\graph}}
   \vspace{-10ex}
  \end{wrapfigure}
\noindent $X$ is an (endless) series of calls between Alice and Bob in London (who finally speak to each other again), 
while $c$ is the action by Cateline of eating a croissant in Paris.
Clearly, there is no interaction between Cateline's desire and Alice's and Bob's chats.
The question arises whether she is guaranteed to succeed in eating her breakfast.
Using progress, but not fairness, the answer is negative,
for there is a progressing rooted path consisting of $b$-transitions only.
Nevertheless, as nothing stops Cateline from making progress, in reality $c$ will occur.
\par}
\end{exam}

\noindent
We therefore propose a strong progress assumption, called \emph{justness}:
\vspace{-2ex}

\begin{equation}\label{J}\tag{\,J}
\parbox{.9\textwidth}{\it
  Once a transition is enabled that stems from a set of parallel components,
  one (or more) of these components will eventually partake in a transition.}
  \end{equation}
In \ex{just}, justness would guarantee that Cateline will have breakfast.

To formalise justness, we employ a binary relation $\naconc$ between transitions,
with $t \naconc u$ meaning that the transition $u$ interferes with $t$, in the sense that
it affects a resource that is needed by $t$. 
In particular, $t$ cannot run concurrently with $u$.

\begin{definition}\label{df:justness}
  A path $\pi$ in a transition system representing a closed system is \emph{just} if for each
  transition $t$ with $s\mathop{:=}\source(t)\mathop{\in}\pi$, a transition $u$ occurs
  in $\pi$ past the occurrence of $s$,
  such that $t \naconc u$.
\end{definition}

\noindent
When thinking of the resources alluded to in the above explanation of $\naconc$ as components,
$t \naconc u$ means  that a component affected by the execution of $u$ is a necessary
participant in the execution of $t$.
The relation $\naconc$ can then be expressed in terms of two functions 
$\ac,\pc:\Tr\rightarrow\Pow(\Ce)$ telling for each transition $t$ which components
  are necessary participants in the execution of $t$, and
  which are affected by the execution of $t$, respectively.
  Then
\[ t \aconc u \quad\mbox{iff}\quad \ac(t)\cap\pc(u)=\emptyset\,.\]
We assume the following crucial property for our transition systems:
\begin{equation}\label{4}\tag{\#}
\parbox{.9\textwidth}{If $t,u\mathbin\in\Tr$ with $\source(t)=\source(u)$ and $\ac(t)\cap\pc(u)=\emptyset$ then
  there is a transition $v\in \Tr$ with $\source(v)=\target(u)$ and $\ac(v)=\ac(t)$.}
  \end{equation}
The underlying idea is that if a transition $u$ occurs that has no influence on components in $\ac(t)$,
then the internal state of those components is unchanged, so any synchronisation between these
components that was possible before $u$ occurred, is still possible afterwards.

We also assume that $\ac(t)\cap\pc(t)\neq\emptyset$ for all $t\in\Tr$, that is, $\naconc$ is reflexive.
Intuitively, this means that a least one component that is required by $t$ is also affected when taking the transition $t$.
Thus, one way to satisfy the justness requirement is by executing $t$.

Justness is thus precisely formalised for any transition system arising from a process algebra,
Petri net, or other specification formalism for which functions $\pc$ and $\ac$, satisfying \eqref{4} and
reflexivity of $\naconc$,
are defined. 
Many specification formalisms satisfy $\pc(t)=\ac(t)$ for all $t\mathbin\in\Tr$.
In such cases we write $\comp$ for $\pc$ and $\ac$, and $\conc$ for $\aconc$, this relation then being
symmetric.

This applies in particular to the fragment of CCS studied in this paper; 
in \App{fragment} we present a definition of $\comp$ which satisfies  \eqref{4} and reflexivity of $\not\conc$.
For a `local'
transition
$\comp$ returns a singleton set, 
whereas for a CCS-synchronisation yielding a $\tau$-transition two components will be present.
We only include minimal components (cf.\ Footnote~\ref{minimal}).
The transition \plat{$(P|a.Q)\backslash c\, |\bar a.R\linebreak[1] \goto{\tau} (P|Q)\backslash c\, |R$}
for example stems from a synchronisation of $a$ and $\bar a$ in the components $a.Q$ and $\bar a.R$.
Since $a.Q$ is a subcomponent of $(P|a.Q)\backslash c$, it can be argued that the component
$(P|a.Q)\backslash c$ also partakes in this transition $t$. Yet, we do not include this component in
$\comp(t)$.

The justness  assumption is fundamentally different from a fairness assumption: rather than
assuming that some condition holds perpetually, or infinitely often, we merely assume it to hold once.
In this regard, justness is similar to progress.
Furthermore, justness \eqref{J} implies progress: we have ${\rm P} \preceq \J$.

Assuming justness ensures $\Ge$ in \exs~\ref{ex:progress},~\ref{ex:SA},~\ref{ex:ST} and \ref{ex:just},
but not in \exs~\ref{ex:SG}, \ref{ex:SC}, \ref{ex:WS}, \ref{ex:WC} and \ref{ex:phone}.

In Examples~\ref{ex:weak} and~\ref{ex:mutex} 
there are three components: \Left(eft), \R(ight) and an implicit memory (\M),
where the value of $y$ is stored.
The analysis of \ex{weak} depends on the underlying memory model.
Clearly, any write to $y$ affects the memory component, \ie\ $\M\in\pc(y:=0)$
and $\M\in\pc(r)$, where $r$ is the atomic instruction ``${\bf while~}(y>0){\bf ~do~~~}y:=y+1{\bf ~~~od}$''.
Moreover, since $r$ involves reading from memory, {\M} is a necessary participant in this transition:
$\M\in\pc(r)$.
If we would assume that a write always happens immediately, regardless of the state of the memory,
one could argue that the memory is not a necessary participant in the transition $y:=0$.
This would make justness a sufficient assumption to ensure $\Ge$.
A more plausible memory model, however, might be that no two components can successfully write to the
same memory at the same time. For one component to write, it needs to `get hold of' the memory,
and while writing, no other component can get hold of it. Under this model
$\M\in\ac(y:=0)$, so that justness is not a sufficient assumption to ensure $\Ge$.

In \ex{mutex}, by the above reasoning, we have $\pc(\ell_1)=\ac(\ell_1)=\pc(m_1)=\ac(m_1)=\{\Left,\M\}$.
 Since $\ac(\ell_1)\cap\pc(m_1)\neq\emptyset$,
we have $\ell_1 \naconc m_1$, \ie\
the occurrence of $m_1$ can disable $\ell_1$.
Justness does not guarantee that $\ell_1$ will ever occur, so assuming justness is insufficient to
ensure $\Ge$.

The difference between $\ac$ and $\pc$ shows up in process algebras  featuring broadcast
communication, such as the one of \cite{GH15a}.
If $t$ models a broadcast transition,
the only necessary participant is the component that performs
the transmitting part of this  synchronisation.
However, all components that receive the broadcast are in~$\pc(t)$.
In this setting, one may expect that necessary participants in a synchronisation are always affected
($\ac(t)\subseteq\pc(t)$).
However, in a process algebra with signals \cite{Bergstra88,EPTCS255.2} we find transitions $t$ with
$\ac(t)\not\subseteq\pc(t)$. Let $t$ model the action of a driver seeing a red traffic light.
This transition has two necessary participants: the driver and the traffic light.
However, only the driver is affected by this transition; the traffic light is not.

In applications where $\ac\neq\pc$, two variants of fairness of components and fairness of groups
of components can be distinguished, namely by taking either $\ac$ or $\pc$ to be the function
$\comp:\Tr\rightarrow\Pow(\Ce)$ assumed in \Sec{taxonomy}. Since the specification formalisms
dealing with fairness found in the literature, with the exception of our own work \cite{TR13,GH15a,EPTCS255.2},
did not give rise to a distinction between necessary and affected
participants in a synchronisation, in our treatment of fairness we
assume that $\ac = \pc \ (= \comp)$,
leaving a treatment of the general case for future work.

Special cases of the justness assumption abound in the literature.
{\sc Kuiper \& de Roever} \cite{KdR83} for instance assume `fundamental liveness',
attributed to {\sc Owicki \& Lamport} \cite{OL82},
``ensuring that if a process is continuously able to proceed, it eventually will.''
Here `process' is what we call `component'.
This is formalised for specific commands of their process specification language, and in each case
appears to be an instance of what we call justness. However, there is no formalisation of
justness as one coherent concept. Likewise, the various liveness axioms in \cite{OL82}, starting
from Section~5, can all be seen as special cases of justness.
{\sc Apt, Francez \& Katz} \cite{AFK88} assume `minimal progress', also attributed to \cite{OL82}:
``Every process in a state with enabled \emph{local} actions will eventually execute some action.''
This is the special case of \df{justness} where $\ac(t)$ is a singleton.

In the setting of Petri nets the (necessary and affected) participants in a transition $t$ could be
taken to be the preplaces of $t$. Hence $t\nconc u$
means that the transition system transitions $t$ and $u$ stem from Petri net
transitions that have a common preplace. Here justness says that if a Petri net transition $t$ is
enabled, eventually a transition will fire that takes away a token from one of the preplaces of $t$.
In Petri-net theory this is in fact a common assumption \cite{Rei13}, which is often made
without giving it a specific name.

In \cite{TR13,GH15a} we introduced justness in the setting
of two specific process algebras. Both have features that resist a correct
translation into Petri nets, and hence our justness assumptions cannot be explained in terms of
traditional assumptions on Petri~nets.
The treatment above is more general, not tied to any specific process algebra.

In \cite{GH15b,EPTCS255.2} we argue that a justness assumption is essential for the validity of any
mutual exclusion protocol. It appears that in the correctness arguments of famous mutual exclusion
protocols appearing in the literature \cite{Pe81,bakery} the relevant justness assumption is taken
for granted, without explicit acknowledgement that the correctness of the protocol rests on the
validity of this assumption. We expect that such use of justness is widespread.

\section{Uninterrupted Justice}\label{sec:J-fairness}

In \ex{phone} weak fairness (of any kind) is enough to ensure $\Ge$.\vspace{1pt}
Now, let the action $a$ of calling anybody except Bob consist of two subactions
$a_1.a_2$, yielding the specification \plat{$X'\defis a_1.a_2.X' + b.X'$}.
Here $a_1$ could for instance be
the act of looking up a telephone number in the list of contacts, 
and $a_{2}$ the actual calling attempt.
Substituting $a_1.a_2$ for $a$ is a case of \emph{action refinement}~\cite{GG01}, and represents a
change in the level of abstraction at which actions are modelled.
After this refinement, the system no longer satisfies $\Ge$ when assuming weak fairness (only).
Hence weak fairness is not preserved under refinement of actions.

When the above is seen as an argument against using weak fairness, two alternatives come to mind.
One is to use strong fairness instead; the other is to tighten the definition of weak fairness in
such a way that that a rooted path is deemed fair only if it remains weakly fair after any action
refinement.

To this end we revisit the meaning of ``perpetually enabled'' in \df{fair}.
This way of defining weak fairness stems from {\sc Lehmann, Pnueli \& Stavi}~\cite{LPS81},
who used the words ``justice'' for weak fairness and ``continuously'' for perpetual.
Instead of merely requiring that a task is enabled in every state,
towards a more literal interpretation of ``continuously'' we additionally require that it
remains enabled between each pair of successive states, e.g.\ during the execution of the transitions.

Formalising this requires us to define enabledness during a transition.
If $\source(t)=\source(u)$ then $t \aconc u$ tells us that
the possible occurrence of $t$ is unaffected by the possible occurrence of $u$.
In other words, $t \naconc u$ indicates that the occurrence of $u$ ends or interrupts the enabledness
of $t$, whereas $t\aconc u$ indicates that $t$ remains enabled during the execution of $u$.

\begin{definition}\label{df:justice}
For a transition system $G\mathbin=(S,\Tr,\source,\target,I,\T)$, equipped with a relation
${\aconc} \subseteq \Tr\times\Tr$, a task $T\in\T$ is
\emph{enabled} during a transition $u\in \Tr$ if $t \aconc u$ for some $t \in T$ with
$\source(t)=\source(u)$.
It is said to be \emph{continuously enabled} on a path $\pi$ in $G$, if it is enabled in every state
\emph{and transition} of $\pi$.
A path $\pi$ in $G$ is \emph{\J-fair} if, for every suffix $\pi'$ of $\pi$,
each task that is continuously enabled on $\pi'$, occurs in $\pi'$.
\end{definition}

\noindent
As before, we can reformulate the property to avoid the quantification over suffixes:
A path $\pi$ in $G$ is \J-fair if each task that from some state onwards is continuously enabled on $\pi$, occurs infinitely often in $\pi$.

Letting $\X$ range over $\{\J,\W,\St\}$ extends the taxonomy of \Sec{taxonomy}
by six new entries, some of which coincide (\J\Sy, {\J\Gr} and \J). 
Figure~\ref{full taxonomy} shows the full hierarchy, including 
the entries SWI from \Sec{swi}, \Ex, $\Pr$ and $\Fu$ from \Sects~\ref{sec:full}--\ref{sec:extreme}
and J from \Sec{justness}.%
\begin{figure}[th]
\input{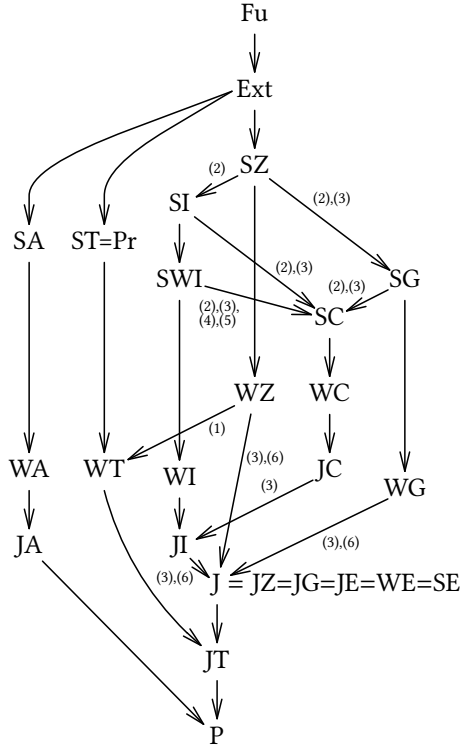}
\centerline{\raisebox{1ex}{\box\graph}}
\caption{A classification of progress, justness and fairness assumptions}
\label{full taxonomy}
\end{figure}
The  arrows are valid under the assumptions \ref{unique synchronisation}--\ref{cmp}
from \Sec{taxonomy}, \ref{swi1}--\ref{swi2} from \Sec{adding}, and the assumption that $\ac = \pc = \comp$, together with
\begin{enumerate}[(1)]
  \vspace{1ex}
  \setcounter{enumi}{5}
  \item
  if $t\mathbin{\conc} u$ with $\source(t)\mathbin=\source(u)$, then $\exists v\mathbin\in\Tr$
  with $\source(v)\mathbin=\target(u)$ and $\instr(v)\mathbin=\instr(t)$.\label{conc}%
  \vspace{1ex}
\end{enumerate}
This assumption says that if a transition $t$ is enabled that does not depend on any component that is affected by $u$, and $u$
occurs, then afterwards a variant of $t$, stemming from the same instructions, is still possible.
This assumption strengthens \eqref{4} and  holds for the transition systems derived from the
fragment of CCS presented in \App{fragment}.
 
Clearly, ${\rm P} \preceq \J\Y \preceq \W\Y \preceq \St\Y$ for $\Y \in \{\Ac,\Ts,\In,\Sy,\Cp,\Gr\}$.

\begin{proposition}\rm
When assuming \ref{cmp} then $\J\In \preceq \J\Cp$.
\end{proposition}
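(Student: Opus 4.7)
The plan is to prove by contradiction that every J\Cp-fair path is also J\In-fair, exploiting the observation that assumption \ref{cmp} makes every J\In-task a subset of some J\Cp-task in a very controlled way.

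First, I would unpack the task structure. By assumption \ref{cmp}, for each transition $t$ and each instruction $I\in\instr(t)$ we have $\cmp(I)\in\comp(t)$, so $T_I\subseteq T_{\cmp(I)}$ for every instruction $I$. Consequently, whenever $T_I$ is enabled in a state (resp.\ during a transition), $T_{\cmp(I)}$ is enabled there too, with the very same witnessing transition $t\in T_I\subseteq T_{\cmp(I)}$. In particular, if $T_I$ is continuously enabled on a suffix $\pi'$ of $\pi$, then so is $T_{\cmp(I)}$.

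Next, assume $\pi$ is \J\Cp-fair but not \J\In-fair. Using the reformulation of J-fairness, pick an instruction $I$ and a suffix $\pi'$ of $\pi$ such that $T_I$ is continuously enabled on $\pi'$ yet no transition of $T_I$ occurs in $\pi'$. The key step is to show that no transition of $T_{\cmp(I)}$ occurs in $\pi'$ either. Fix any transition $u$ in $\pi'$ with source state $s$. Since $T_I$ is enabled during $u$, there exists $t\in T_I$ with $\source(t)=s$ and $t\aconc u$. Because we work under the standing assumption $\ac=\pc=\comp$, $t\aconc u$ means $\comp(t)\cap\comp(u)=\emptyset$. But $I\in\instr(t)$ gives $\cmp(I)\in\comp(t)$ by \ref{cmp}, so $\cmp(I)\notin\comp(u)$, and therefore $u\notin T_{\cmp(I)}$. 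Hence $T_{\cmp(I)}$ does not occur anywhere in $\pi'$.

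Finally, combining both observations yields a contradiction: $T_{\cmp(I)}$ is continuously enabled on $\pi'$ but never occurs in $\pi'$, contradicting the \J\Cp-fairness of $\pi$ applied to the suffix $\pi'$.

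The main obstacle, and really the only subtle point, is recognising that the definition of ``enabled during a transition'' in \df{justice} is strong enough to force the witness $t$ to be $\aconc$-compatible with $u$, and that this then rules out $u$ ever touching the component $\cmp(I)$. Once this is noticed, the rest of the argument is a direct application of assumption \ref{cmp} together with $\ac=\pc=\comp$; no finiteness hypothesis (\ref{finite}) is required, mirroring the situation in \fig{full taxonomy}.
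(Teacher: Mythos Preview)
Your proof is correct and uses the same key observation as the paper: assumption~\ref{cmp} together with $\ac=\pc=\comp$ forces any transition $u$ during which $T_I$ is enabled to satisfy $\cmp(I)\notin\comp(u)$. The only cosmetic difference is where the contradiction lands: the paper first invokes \J\Cp-fairness to obtain a transition $t\in T_{\cmp(I)}$ in $\pi$ and then observes that $T_I$ cannot be enabled during this $t$ (contradicting continuous enabledness of $T_I$), whereas you run the same computation over \emph{every} transition of $\pi'$ to conclude that $T_{\cmp(I)}$ never occurs, contradicting \J\Cp-fairness directly. These are two orderings of the same argument.
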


\begin{proof}
We show that any $\J\Cp$-fair path is also $\J\In$-fair.
W.l.o.g.\ let $\pi$ be a $\J\Cp$-fair path on which a task $T_I$, for a given instruction $I\mathbin\in\I$, is
continuously enabled. Then, using assumption \ref{cmp}, on $\pi$ component $\cmp(I)$ is continuously
enabled. Hence, the task $T_{\cmp(I)}$ will occur in $\pi$. Let $t\mathop\in T_{\cmp(I)}$ be a transition in $\pi$.
Then $\cmp(I) \mathbin\in \comp(t) \cap \comp(u)$ for any $u\mathbin\in T_I$, so that \mbox{$t \nconc u$}.
Thus, no transition from $T_I$ is enabled during the execution of $t$, contradicting the
assumption that $T_I$ is continuously enabled on $\pi$.
Hence $\pi$ is $\J\In$-fair.
\end{proof}

\begin{proposition}\rm
When assuming \ref{cmp} and \ref{conc} then
$\J \preceq \J\Y$ for $\Y \in \{\In,\Sy,\Cp,\Gr\}$.
\end{proposition}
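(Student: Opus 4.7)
The plan is to show by contraposition that any $\J\Y$-fair path is just. So, let $\pi$ be a rooted path that is \emph{not} just: there exists a transition $t$ with $s:=\source(t)$ occurring in $\pi$ such that no transition $u$ past $s$ in $\pi$ satisfies $t \naconc u$. Equivalently, every transition $u$ in the suffix $\pi_s$ of $\pi$ starting at $s$ satisfies $t \conc u$, \ie\ $\comp(t)\cap\comp(u)=\emptyset$. The goal is to exhibit a task (under $\Y\in\{\In,\Sy,\Cp,\Gr\}$) that is continuously enabled on $\pi_s$ yet never occurs in~$\pi_s$, contradicting $\J\Y$-fairness.

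The key technical step is a trivial induction along $\pi_s$, using assumption \ref{conc}. Writing $\pi_s = s_0\,u_1\,s_1\,u_2\,s_2\cdots$ with $s_0=s$, I would inductively construct transitions $t_0,t_1,t_2,\dots$ with $\source(t_j)=s_j$ and $\instr(t_j)=\instr(t)$, starting from $t_0:=t$. Given $t_j$ enabled at $s_j$ with $\instr(t_j)=\instr(t)$, by assumption \ref{cmp} we have $\comp(t_j)=\{\cmp(I)\mid I\in\instr(t)\}=\comp(t)$, so $\comp(t_j)\cap\comp(u_{j+1})=\emptyset$, \ie\ $t_j\conc u_{j+1}$. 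Assumption \ref{conc} then yields a transition $t_{j+1}$ with $\source(t_{j+1})=\target(u_{j+1})=s_{j+1}$ and $\instr(t_{j+1})=\instr(t_j)=\instr(t)$, closing the induction.

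Now pick the task according to $\Y$: take $T_I$ (for any $I\in\instr(t)$) for $\Y=\In$, or $T_Z$ with $Z=\instr(t)$ for $\Y=\Sy$, or $T_C$ (for any $C\in\comp(t)$) for $\Y=\Cp$, or $T_G$ with $G=\comp(t)$ for $\Y=\Gr$. In each case, every $t_j$ belongs to that task (using \ref{cmp} for the component-based cases), hence the task is enabled at every state $s_j$ of $\pi_s$. Moreover, $t_j\conc u_{j+1}$ as shown above, so the task is also enabled \emph{during} each transition $u_{j+1}$ of $\pi_s$, in the sense of \df{justice}. Thus the task is continuously enabled on $\pi_s$. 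By $\J\Y$-fairness of $\pi$, the task must then occur in $\pi_s$: there is some transition $w$ in $\pi_s$ with a component or instruction in common with $t$, and hence $t\nconc w$, contradicting the choice of $t$.

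The argument is essentially bookkeeping once the right task is chosen; the main obstacle, and the place where \ref{cmp} and \ref{conc} are both used in an essential way, is verifying that the witnessing transitions $t_j$ really inhabit the chosen task \emph{and} that the $\conc$-relation is preserved between $t_j$ and the $\pi$-transition $u_{j+1}$, so that ``continuously enabled'' holds in the strong sense of \df{justice} (enabled during transitions, not merely in states).
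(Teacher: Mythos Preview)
Your argument is correct and follows essentially the same route as the paper's proof: assume a $\J\Y$-fair path and a transition $t$ witnessing failure of justness, then use assumption~\ref{conc} inductively to propagate a witness with $\instr$ equal to $\instr(t)$ along the suffix, so that the relevant task is continuously enabled; $\J\Y$-fairness then forces an occurrence that shares a component with $t$ (via~\ref{cmp}), contradicting the choice of $t$. The paper carries this out explicitly only for $\Y=\In$ and declares the remaining cases ``similar'', whereas you spell out the task choice for all four cases and make the induction producing the $t_j$ explicit; apart from that expository difference, the proofs coincide.
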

\begin{proof}
We first show that any $\J\In$-fair path is also just.
  W.l.o.g.\ let $\pi$ be a $\J\In$-fair path such that a transition $t$ is enabled in its first state and let $I\in\instr(t)$.
  To construct a contradiction, assume that $\pi$ contains no transition $u$ with $\comp(t)\cap\comp(u)\neq\emptyset$.
  Then, using \ref{conc}, for each transition $u$ occurring in $\pi$, a transition from $T_I$ is
  enabled during and right after $u$. Hence, by $\J\In$-fairness, the task $T_I$ must occur in
  $\pi$. So $\pi$ contains a transition $u$ with $I\in\instr(u)$, and thus
    $\cmp(I)\in\comp(t)\cap\comp(u)$, using \ref{cmp}, 
  but this contradicts our assumption.
It follows that $\pi$ is just.
  
  The proofs for $\Y \mathbin\in \{\Sy,\Cp,\Gr\}$ are similar.
\end{proof}

\begin{proposition}\rm
When assuming  \ref{cmp} then $\J\Y \preceq \J$ for $\Y \in \{\Ts,\Sy,\Gr\}$.
\end{proposition}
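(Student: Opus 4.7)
The plan is to argue by contradiction: on a just path $\pi$, no $\Y$-task can be continuously enabled on any suffix $\pi'$, so $\J\Y$-fairness holds vacuously. The crucial preparatory observation, a \emph{uniformity} property, is that every transition in a given $\Y$-task $T$ carries the same component set, which we denote $C_T$. For \Ts\ the task is a singleton; for \Gr\ this is the very definition of $T_G$; for \Sy\ it follows from assumption \ref{cmp}, since for $t \in T_Z$ we have $\comp(t) = \{\cmp(I) \mid I \in \instr(t)\} = \{\cmp(I) \mid I \in Z\}$ independently of the specific $t$.

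Assume, for contradiction, that a $\Y$-task $T$ is continuously enabled on a suffix $\pi'$ of a just path $\pi$, and let $s_0$ be the first state of $\pi'$. Enabledness of $T$ at $s_0$ supplies some $t_0 \in T$ with $\source(t_0) = s_0$, so $\comp(t_0) = C_T$. Justness applied to $t_0$ at the occurrence of $s_0$ opening $\pi'$ yields a transition $u$ past $s_0$ in $\pi$, hence in $\pi'$, with $t_0 \naconc u$, i.e.\ $C_T \cap \comp(u) \neq \emptyset$. On the other hand, continuous enabledness of $T$ \emph{during} $u$ supplies some $t' \in T$ with $\source(t') = \source(u)$ and $t' \aconc u$, that is $\comp(t') \cap \comp(u) = \emptyset$; by uniformity $\comp(t') = C_T$, so $C_T \cap \comp(u) = \emptyset$, the desired contradiction.

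The main obstacle, and the reason the proposition excludes \Ac\ and \In, is precisely the uniformity step. For those notions two transitions in the same task can have unrelated component sets, so the $u$ furnished by justness to interfere with $t_0$ need not interfere with the particular $t'$ witnessing enabledness of $T$ during $u$, and the contradiction collapses. Assumption \ref{cmp} is used solely to rescue the \Sy\ case by forcing the component set of a transition in $T_Z$ to depend only on $Z$.
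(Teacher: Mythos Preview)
Your proof is correct and matches the paper's argument in essence: the paper treats the $\Gr$ case directly and then reduces $\Sy$ and $\Ts$ to it via the inclusions $T_Z\subseteq T_G$ and $\{t\}\subseteq T_{\comp(t)}$, which is precisely your uniformity observation packaged as a case reduction. Like you, the paper also notes (in a footnote) that the contradiction arises already from continuous enabledness alone, without needing the assumption that the task never occurs.
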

\begin{proof}
We first show, by contradiction, that any just path is also $\J\Gr$-fair.
W.l.o.g.\ let $\pi$ be a just path on which a task $T_G$, for a given $G\subseteq\Ce$, is
continuously enabled, yet never taken.\footnote{We will arrive at a contradiction even without
  using that $T_G$ is never taken on $\pi$.}
Let $t\in T_G$ be enabled in the first state of $\pi$.
Then, by justness, $\pi$ contains a transition $u$ with $\emptyset\neq\comp(u)\cap\comp(t)=\comp(u)\cap G$.
It follows that no transition $v\in T_G$ is enabled during $u$, contradicting the assumption 
of continuously enabledness.

Next we prove, again by contradiction, that any just  path is also $\J\Sy$-fair.
For that we assume a task $T_Z$, for a given $Z\subseteq\I$, that is continuously enabled on $\pi$.
Using \ref{cmp}, let $G=\{\cmp(I) \mid I \in Z\}$. Then also $T_G$ is continuously enabled on $\pi$,
which led to a contradiction above.

Last we look at \J\Ts-fairness. We assume a transition $t$ is continuously enabled on a just path $\pi$.
Let $G=\comp(t)$. Then also $T_G$ is continuously enabled on $\pi$,
which led to a contradiction above.
\end{proof}

The absence of any further arrows follows from \exs~\ref{ex:mutex}--\ref{ex:phone}:
  \ex{phone} separates J-fairness from weak fairness; it
  shows that $\W\Y \not\preceq \J\Y'$ (and hence $\St\Y \not\preceq \J\Y'$)
  for all $\Y,\Y' \in \{\Ac,\Ts,\In,\Sy,\Cp,\Gr\}$.
  Example~\ref{ex:SA}, in which $\J\Ts$-fairness suffices to ensure $\Ge$,
  shows that there are no further arrows from SA, and thus none from WA and JA\@.
  Example~\ref{ex:ST}, in which justness or $\J\Ac$-fairness suffices to ensure $\Ge$,
  shows that there are no further arrows from ST, and thus none from WT, JT and P.
  Example~\ref{ex:WC}, in which $\J\Cp$-fairness suffices to ensure $\Ge$,
  shows that there are no further arrows into $\J\Cp$.
  So, using transitivity, and remembering the results from \Sects~\ref{sec:taxonomy}--\ref{sec:full},
  it suffices to show that $\J\Ac \not\preceq \St\Sy$,
  $\J\In \not\preceq \W\Sy$ and $\J\In \not\preceq \W\Gr$.
  This will be demonstrated by the following two examples.
 
\begin{exam}\label{ex:JA}
Consider the process $(X | Y | Z)\backslash b\backslash d$\vspace{1pt}
where  \plat{$X \mathop{\defis}$}
{\makeatletter
\let\par\@@par
\par\parshape0
\everypar{}\begin{wrapfigure}[8]{r}{0.318\textwidth}
 \vspace{-5.5ex}
  \input{JA}
  \centerline{\raisebox{1ex}{\box\graph}}
  \end{wrapfigure}
  \noindent \plat{$\bar b.\bar b.X[f] {+} a.X {+} c.X$},\plat{$Y \hspace{-.5pt}\defis\hspace{-.5pt} \bar d.\bar d.Y[f] {+} a.X {+} c.X$} and \plat{$Z \hspace{-.5pt}\defis\hspace{-.5pt} b.b.d.d.Z$},
using a relabelling operator with $f(a)=c$ and $f(c)=a$.
Under $x\Ac$-fairness, for $x\in\{\J,\W,\St\}$, the path $(a\tau)^\infty$ is unfair, because task $c$ is perpetually
enabled, yet never taken. However, there are only twelve instructions (and three components), each of
which gets its fair share. Also, each possible synchronisation will occur. 
So under $\X\Y$-fairness for $\Y \in\{\In,\Sy,\Cp,\Gr\}$ the path is fair.
\par}
\end{exam}

\begin{exam}\label{ex:JI}
Consider the process $(e | X | Y | Z)\backslash a\backslash b\backslash c\backslash d\backslash e$\vspace{1pt}
where \plat{$X \defis \bar a.\bar b.X + \bar e$}, \plat{$Y \defis \bar c.\bar d.Y + \bar e$} and \plat{$Z \defis a.b.c.d.Z$}.  Its
transition system is the same as for \ex{WC}.
Under $\X\In$- and under $\X\Cp$-fairness $\Ge$ is satisfied, but under $\W\Y$-fairness with $\Y\mathbin\in\{\Ac,\Ts,\Sy,\Gr\}$ it is not.
\end{exam}
 
\section{Fairness of Events}\label{sec:events}
 
An \emph{event} in a transition system can be defined as an equivalence class of transitions
stemming from the same synchronisation of instructions, except that each visit of an instruction
gives rise to a different event. Fairness of events can best be explained in terms of examples.

In \exs~\ref{ex:SA} and~\ref{ex:ST} $\Ge$ does hold: on the infinite path that violates $\Ge$
there is a single event corresponding with the action $a$, which is perpetually enabled.

In \ex{SG} $\Ge$ does not hold, for each round through the recursive equation counts as a separate
visit to the $b$-instruction,
so during the run $a^\infty$ no $b$-event is enabled
more than once. Similar reasoning applies to \exs~\ref{ex:SC}--\ref{ex:WC},
where $\Ge$ is not ensured and/or the indicated path is SE-fair.

Fairness of events is defined in {\sc Costa \& Stirling} \cite{CS84}---although on a
restriction-free subset of CCS where it coincides with fairness of components---and in
{\sc Corradini, Di Berardini \& Vogler}~\cite{CDV06a}, 
named fairness~of~actions.

Capturing fairness of events in terms of tasks, as in \Sec{taxonomy}, with the events as tasks, requires a semantics of CCS
different from the standard one given in \App{CCS}, yielding a partially unfolded transition system.
Example~\ref{ex:SG}, for instance, needs an infinite transition system to express that after each $a$-transition
a different $b$-event is enabled, and likewise for \exs~\ref{ex:SA}, \ref{ex:SC}, \ref{ex:WS} and \ref{ex:WC}.
Such a semantics, involving an extra layer of labelling, appears in \cite{CS84,CS87,CDV06a}.
What we need to know here about this semantics is that
\begin{enumerate}[(1)]
  \setcounter{enumi}{6}
  \item the events form a partition of the set of transitions;\label{partition}
  \item if an event $e$ is enabled in two states $s$ and $s'$ on a path $\pi$,
    then $e$ is enabled also in each state and during each transition that occurs between $s$ and $s'$;\label{event}
  \item if $t\conc u$ with $\source(t)\mathbin=\source(u)$ and $t \mathbin\in e$, then
    $\exists v\mathbin\in\Tr$ with $\source(v)\mathbin=\target(u)$ and $v\mathbin\in e$,\label{concevent}
  \item if $t,t'\in e$ then $\comp(t)=\comp(t')$. \label{compevent}
    So let $\comp(e):=\comp(t)$ when $t\in e$.
\end{enumerate}
Property \ref{event} implies immediately that strong fairness of events (SE), weak fairness of events (WE) and J-fairness of
events (JE) coincide.

\begin{theorem}\rm
A path is just iff it is JE-fair.
\end{theorem}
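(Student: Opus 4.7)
My plan is to prove both directions directly, exploiting properties \ref{partition}--\ref{compevent} together with the paper's standing assumption $\ac = \pc = \comp$ (so $t \naconc u$ is symmetric and equivalent to $\comp(t) \cap \comp(u) \neq \emptyset$, and $\comp(t) \neq \emptyset$ by reflexivity of $\naconc$). For the $(\Rightarrow)$ direction, suppose $\pi$ is just and an event $e$ is continuously enabled on some suffix $\pi'$ of $\pi$. Let $s$ be the first state of $\pi'$ and pick $t \in e$ with $\source(t) = s$, witnessing enabledness of $e$ at $s$. Justness applied to $t$ yields a transition $u$ of $\pi$ past $s$ --- hence a transition of $\pi'$ --- with $t \naconc u$. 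I claim $u \in e$: otherwise, continuous enabledness supplies some $t' \in e$ with $\source(t') = \source(u)$ and $t' \aconc u$; but $\comp(t') = \comp(t)$ by property \ref{compevent}, so $t \aconc u$, contradicting $t \naconc u$. Hence $u \in e$, and $e$ occurs in $\pi'$ as required.

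For the $(\Leftarrow)$ direction, suppose $\pi$ is JE-fair. Given a transition $t$ with $s := \source(t) \in \pi$, write $e$ for the unique event containing $t$ (property \ref{partition}) and $\pi'$ for the suffix of $\pi$ starting at $s$. Suppose, towards a contradiction, that no transition $u$ of $\pi'$ satisfies $t \naconc u$. I will show $e$ is continuously enabled on $\pi'$ by induction along its states: initially $e$ is enabled via $t$; if a witness $t_i \in e$ at state $s_i$ is available, then the next transition $u$ of $\pi'$ satisfies $\comp(u) \cap \comp(t) = \emptyset$ by hypothesis, hence $\comp(u) \cap \comp(t_i) = \emptyset$ by property \ref{compevent}, i.e., $t_i \aconc u$; property \ref{concevent} then produces a fresh witness $v \in e$ at $\target(u)$, and simultaneously $t_i$ witnesses enabledness of $e$ during $u$. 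JE-fairness now forces some $w \in e$ to occur in $\pi'$; but $\comp(w) = \comp(t) \neq \emptyset$, so $t \naconc w$, contradicting the standing hypothesis.

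The principal obstacle is the inductive step of the backward direction: I must keep propagating witnesses of $e$ along $\pi'$ using property \ref{concevent}, and the whole argument hinges on property \ref{compevent} to transport the non-interference of the original $t$ onto each successive witness $t_i$. Once that propagation is in place, the rest of the proof amounts to a direct unpacking of the definitions of justness, continuous enabledness, and JE-fairness, together with reflexivity of $\naconc$.
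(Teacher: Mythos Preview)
Your proof is correct and follows essentially the same route as the paper's: both directions use properties \ref{partition}, \ref{concevent} and \ref{compevent} exactly as you do, and your $(\Leftarrow)$ argument is just a more explicit rendering of the paper's one-line appeal to \ref{concevent}.

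One small remark on your $(\Rightarrow)$ direction: the sub-argument you give for ``$u \in e$'' never actually invokes the hypothesis $u \notin e$---continuous enabledness of $e$ during $u$ yields the witness $t' \in e$ with $t' \aconc u$ regardless, and the contradiction with $t \naconc u$ then follows from \ref{compevent} alone. So what your argument really establishes (and what the paper's proof makes explicit, noting that the ``never taken'' assumption is superfluous) is that no event can be continuously enabled on any suffix of a just path; JE-fairness then holds vacuously. Your conclusion $u \in e$ is still formally valid (ex falso), but the framing slightly obscures what is going on.
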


\begin{proof}
  `$\Rightarrow$':
  The proof is by contradiction.
  Let $\pi$ be a just path on which an event $e$ is continuously enabled, yet never taken.%
  \setcounter{footnote}{13}\footnotemark{}
  By justness, a transition $u$ occurs in $\pi$ such that $\comp(e)\cap\comp(u)\mathbin{\neq}\emptyset$.
  By assumption, event $e$ must be enabled during $u$. Hence there is a $t\in e$ with $t \conc u$, \ie\ $\comp(t)\cap\comp(u)=\emptyset$,
  in contradiction with \ref{compevent}.  Hence $\pi$ is JE-fair.
  
  `$\Leftarrow$':
  W.l.o.g.\ let $\pi$ be a JE-fair path such that a transition $t$ is enabled in its first state.
  Using \ref{partition}, let $e$ be the event with $t \in e$.
  To construct a contradiction, assume that $\pi$ contains no transition $u$ with $\comp(t)\cap\comp(u)\neq\emptyset$.
  Then, using \ref{concevent}, $e$ is continuously enabled on $\pi$.
  So, by JE-fairness, the task $e$ must occur in $\pi$, \ie $\pi$ contains a transition $v \in e$.
  By \ref{compevent}, $\comp(v)=\comp(t) \neq\emptyset$, but this contradicts the assumption. So $\pi$ is just.
\end{proof}

\section[Reactive Systems]{Reactive Systems}\label{sec:reactive}
Sections~\ref{sec:liveness}--\ref{sec:events} dealt with closed systems,
having no run-time interactions with the environment. We now generalise almost all definitions and
results to reactive systems, interacting with their environments through synchronisation of actions.

\begin{exam}
  Consider the CCS process $a.\tau$, represented by the transition system\\[1mm]  
{\expandafter\ifx\csname graph\endcsname\relax
   \csname newbox\expandafter\endcsname\csname graph\endcsname
\fi
\ifx\graphtemp\undefined
  \csname newdimen\endcsname\graphtemp
\fi
\expandafter\setbox\csname graph\endcsname
 =\vtop{\vskip 0pt\hbox{%
\pdfliteral{
q [] 0 d 1 J 1 j
0.576 w
0.576 w
28.8 -4.32 m
28.8 -6.70587 26.86587 -8.64 24.48 -8.64 c
22.09413 -8.64 20.16 -6.70587 20.16 -4.32 c
20.16 -1.93413 22.09413 0 24.48 0 c
26.86587 0 28.8 -1.93413 28.8 -4.32 c
S
Q
}%
    \graphtemp=.5ex
    \advance\graphtemp by 0.060in
    \rlap{\kern 0.340in\lower\graphtemp\hbox to 0pt{\hss {\scriptsize 1}\hss}}%
\pdfliteral{
q [] 0 d 1 J 1 j
0.576 w
0.072 w
q 0 g
12.96 -2.52 m
20.16 -4.32 l
12.96 -6.12 l
12.96 -2.52 l
B Q
0.576 w
0 -4.32 m
12.96 -4.32 l
S
100.8 -4.32 m
100.8 -6.70587 98.86587 -8.64 96.48 -8.64 c
94.09413 -8.64 92.16 -6.70587 92.16 -4.32 c
92.16 -1.93413 94.09413 0 96.48 0 c
98.86587 0 100.8 -1.93413 100.8 -4.32 c
S
Q
}%
    \graphtemp=.5ex
    \advance\graphtemp by 0.060in
    \rlap{\kern 1.340in\lower\graphtemp\hbox to 0pt{\hss {\scriptsize 2}\hss}}%
\pdfliteral{
q [] 0 d 1 J 1 j
0.576 w
0.072 w
q 0 g
84.96 -2.52 m
92.16 -4.32 l
84.96 -6.12 l
84.96 -2.52 l
B Q
0.576 w
28.8 -4.32 m
84.96 -4.32 l
S
Q
}%
    \graphtemp=\baselineskip
    \multiply\graphtemp by -1
    \divide\graphtemp by 2
    \advance\graphtemp by .5ex
    \advance\graphtemp by 0.060in
    \rlap{\kern 0.840in\lower\graphtemp\hbox to 0pt{\hss $a$\hss}}%
\pdfliteral{
q [] 0 d 1 J 1 j
0.576 w
172.8 -4.32 m
172.8 -6.70587 170.86587 -8.64 168.48 -8.64 c
166.09413 -8.64 164.16 -6.70587 164.16 -4.32 c
164.16 -1.93413 166.09413 0 168.48 0 c
170.86587 0 172.8 -1.93413 172.8 -4.32 c
h q 0.7 g
B Q
Q
}%
    \graphtemp=.5ex
    \advance\graphtemp by 0.060in
    \rlap{\kern 2.340in\lower\graphtemp\hbox to 0pt{\hss {\scriptsize 3}\hss}}%
\pdfliteral{
q [] 0 d 1 J 1 j
0.576 w
0.072 w
q 0 g
156.96 -2.52 m
164.16 -4.32 l
156.96 -6.12 l
156.96 -2.52 l
B Q
0.576 w
100.8 -4.32 m
156.96 -4.32 l
S
Q
}%
    \graphtemp=\baselineskip
    \multiply\graphtemp by -1
    \divide\graphtemp by 2
    \advance\graphtemp by .5ex
    \advance\graphtemp by 0.060in
    \rlap{\kern 1.840in\lower\graphtemp\hbox to 0pt{\hss $\tau$\hss}}%
    \hbox{\vrule depth0.120in width0pt height 0pt}%
    \kern 2.400in
  }%
}%

\centerline{\raisebox{2ex}{\box\graph}~.}}\vspace{1ex}
\noindent
Here $a$ is the action of receiving the signal $\bar a$ from the environment.
Will this process satisfy the liveness property $\Ge$, by reaching state 3?
When assuming the progress property for closed systems proposed in \Sec{progress}, the answer is positive.
However, when taking the behaviour of the environment into account, $\Ge$ is not guaranteed at all.
For the environment may fail to send the signal $\bar a$ that is received as $a$, in which case the
system will be stuck in its initial state.
\end{exam}
To properly model reactive systems, we distinguish \emph{blocking} and \emph{non-blocking} transitions.
A blocking transition---$a$ in our example---requires participation of the environment in which the system will be running,
whereas a non-blocking transition (e.g.\ $\tau$) does not. In \cite{Rei13}, blocking and non-blocking transitions
are called \emph{cold} and \emph{hot}. In many process algebras transitions are labelled with
actions, and whether a transition is blocking is entirely determined by its label.
Accordingly, we speak of blocking and non-blocking actions.
In CCS, the only non-blocking action is $\tau$.
However, for certain applications it makes sense to declare some other actions to be non-blocking
\cite{GH15b}; this constitutes a promise that we will never put the system in an environment that
can block these actions. In \cite{GH15a} we use a process algebra with broadcast communication:
a broadcast counts as a non-blocking action, because it will happen regardless whether anyone
receives it.

In the setting of reactive systems the progress assumption of \Sec{progress} needs to be
reformulated:
\begin{myquote}
A (transition) 
system in a state that admits a non-blocking transition will eventually progress, \ie perform a transition.
\end{myquote}

\begin{definition}
A path in a transition system is \emph{progressing} if either it is infinite or its last state is
the source of no non-blocking transition.
\end{definition}
Our justness assumption is adapted in the same vein:
\begin{myquote}%
  Once a non-blocking transition is enabled that stems from a set of parallel components,
  one (or more) of these components will eventually partake in a transition.
\end{myquote}

\begin{definition}
  A path $\pi$ in a transition system is \emph{just} if for each
  non-blocking transition $t$ with $s:=\source(t) \in \pi$, a transition $u$ occurs
  in $\pi$ past the occurrence of $s$, such that $t \naconc u$.
\end{definition}
The treatment of fairness is adapted to reactive systems by defining a task $T$ to be
\emph{enabled} in a state $s$ iff there exists a \emph{non-blocking} transition $t \in T$ with
$\source(t)=s$ (cf.\ \df{fair}). 
In \df{justice} $t$ is also required to be non-blocking.
The reason is that if sufficiently many states on a path merely
have an outgoing transition from $T$ that is blocking, it might very well be that the environment
blocks all those transitions, so that the system is unable to perform a transition from $T$. On the
other hand, performing a blocking transition from $T$ is a valid way to
satisfy the promise of fairness. To obtain a notion of fairness where an arbitrary non-blocking transition from
$T$ is required to occur in $\pi$, one can simply define a new task that only has the non-blocking
transitions from $T$.

With these amendments to the treatment of progress and fairness, the observation remains that
progress is exactly what weak or strong fairness prescribes for finite paths, provided that each
non-blocking transition occurs in a task.

Applied to CCS, where only $\tau$-actions are non-blocking, fairness of actions loses most of its
power. All other relations in the taxonomy remain unchanged. \thm{feasibility} remains valid as well.
\df{AGEF} needs to be reformulated as follows.

\begin{definition}
A liveness property $\Ge$, modelled as a set of states in a reactive transition system $G$, is an \emph{AGEF property}
iff (a state of) $\Ge$ is reachable along a non-blocking path\footnote{A path containing non-blocking transitions only.}
from every state $s$ that is reachable (by any means) from an initial state of $G$.
\end{definition}

With this adaptation, a non-AGEF property is never ensured, no matter which feasible fairness property is assumed.
Yet, \thm{ST-fairness} still holds, with the \emph{distance}, in the proof, being the length of the shortest non-blocking
path to $\Ge$.
We do not generalise probabilistic fairness to reactive systems.

Full fairness for reactive systems is what corresponds to \emph{Koomen's fair abstraction rule}
(KFAR), a widely-used proof principle in process algebra, introduced in \cite{BK86}; see also
\cite{BBK87a,BW90}. Any process-algebraic verification that shows a
liveness property of a system $P$ by proving $P$ weakly bisimilar to a system $Q$ where this
liveness property obviously holds, implicitly employs full fairness.

\section{Evaluating Notions of Fairness}\label{sec:eval}
The following table evaluates the notions of fairness reviewed in this paper against the criteria for
appraising fairness properties discussed in
Sections~\ref{sec:criteria},~\ref{sec:J-fairness},~\ref{sec:sw} and~\ref{sec:justness}.
\[\begin{array}{@{}l@{\;}|@{~}c@{~}c@{~}c@{~}c@{~}c@{~}c@{~}c@{~}c@{~}c@{~}c@{~}c@{~}c@{~}c@{~}c@{~}c@{~}c@{~}c@{~}c@{~}c@{~}c@{~}c@{}}
& \mbox{\footnotesize\Fu} & \mbox{\footnotesize\Ex} & \mbox{\footnotesize\St\Ac} & \mbox{\footnotesize\St\Ts} & \mbox{\footnotesize\St\Sy} & \mbox{\footnotesize\St\Gr} & \mbox{\footnotesize\St\In} & \mbox{\footnotesize\rm SWI} & \mbox{\footnotesize\St\Cp
           } & \mbox{\footnotesize\W\Ac} & \mbox{\footnotesize\W\Ts} & \mbox{\footnotesize\W\Sy} & \mbox{\footnotesize\W\Gr} & \mbox{\footnotesize\W\In} & \mbox{\footnotesize\W\Cp
           } & \mbox{\footnotesize\J\Ac} & \mbox{\footnotesize\J\Ts} & \mbox{\footnotesize\J\In} & \mbox{\footnotesize\J\Cp} & \mbox{\footnotesize\J} & \mbox{\footnotesize\Pr}\\
\hline
\textrm{feasibility}\rule{0pt}{11pt} & +&+&+&+&+&+&+&+&+&+&+&+&+&+&+&+&+&+&+&+&+
\\
\textrm{equivalence robustness} &      \mbox{\footnotesize(}\mathord+\mbox{\footnotesize)}&-&-&-&-&-&-&\pm&\pm&-&?&?&?&-&-&-&?&-&-&+&+
\\
\textrm{liveness enhancement} &        +&+&+&+&+&+&+&+&+&+&+&+&+&+&+&+&+&+&+&+&+
\\
\textrm{pres.\ under unfolding} &      +&+&+&-&+&+&+&+&+&+&-&+&+&+&+&+&-&+&+&+&+
\\
\textrm{pres.\ under action ref.} &    +&+&+&+&+&+&+&+&+&-&-&-&-&-&-&+&+&+&+&+&+
\\
\textrm{useful for queues} &           -&-&-&-&-&-&-&+&-&-&-&-&-&-&-&-&-&-&-&-&-
\\
\textrm{typically warranted} &         -&-&-&-&-&-&-&-&-&-&-&-&-&-&-&-&+&-&-&+&+
\end{array}\]
All notions reviewed satisfy the criterion of feasibility, from Section~\ref{sec:feasibility};
this is no wonder, since we disregarded any notion not satisfying this property.
The counterexamples given in Section~\ref{sec:robustness} against the equivalence robustness of
WC- and $S\Y$-fairness, for $y\in\{\text{A,T,Z,G,I}\}$,
can easily be adapted to cover the other notions adorned with a $-$ in the
table above; for SWI and SC this involves the use of $N$-way synchronisation, thereby leaving our
fragment of CCS\@. The notion is meaningless for full fairness (\Fu), and the
positive results for progress and justness are fairly straightforward.
It is trivial to find examples showing that JA and JT-fairness, and thus also all stronger notions,
are liveness enhancing w.r.t.\ the background assumption of `minimal progress' (cf.\ Section~\ref{sec:enhancement}).
As observed in Section~\ref{sec:unfolding}, only fairness of transitions fails to be preserved under
unfolding. Being not preserved under refinement of actions, as discussed in Section~\ref{sec:J-fairness},
is a problem that befalls only the notions of weak fairness. The next entry refers to the typical
application of fairness illustrated in \Sec{sw}; it is adequately handled only by
strong weak fairness. The last line reflects our observation that most notions of fairness are not
warranted in many realistic situations. This is illustrated by \ex{phone} in
Section~\ref{sec:justness} for weak and strong fairness, Example~\ref{ex:JA} for JA-fairness, and
Example~\ref{ex:JI} for JI- and JC-fairness.

\section{Conclusion, Related and Future work}\label{sec:conclusion}

\noindent
We compared and classified many notions of fairness found in the literature.
Our classification is by no means exhaustive. We skipped, for instance,
\emph{k-fairness} \cite{Be84} and \emph{hyperfairness} \cite{La00}---both laying
between strong fairness and full fairness. We skipped \emph{unconditional fairness}
\cite{Kw89}, introduced in \cite{LPS81} as \emph{impartiality}, since that notion does not satisfy
the essential criterion of feasibility. We characterised full fairness as the strongest possible
fairness assumption that satisfies this important requirement.

Comparisons of fairness notions appeared before in \cite{KdR83,AFK88,Jo01}.
{\sc Kuiper \& de Roever} \cite{KdR83} considered weak and strong fairness of components, instructions and
synchronisations, there called processes, guards and channels, respectively. 
Their findings, depicted in \fig{earlier}, agree mostly with ours, see \fig{taxonomy}. 
They differ in the inclusions between WZ, WI and WC, which are presented without proof and deemed ``easy''.
It appears that our Counterexamples~\ref{ex:WS} and~\ref{ex:WC} can be translated into the version
of CSP used in \cite{KdR83}.

\begin{figure}[ht]
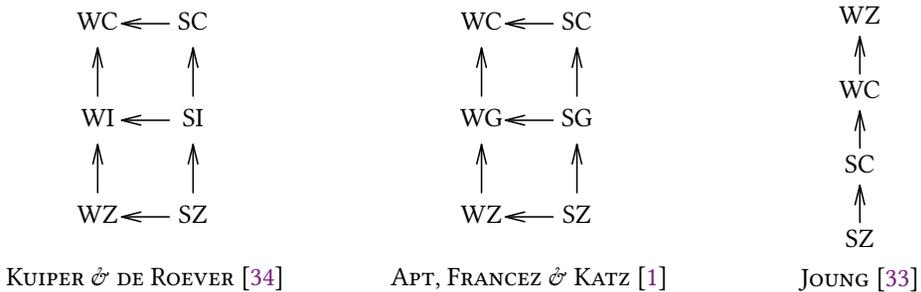

  \mbox{}\hfill
\input{latticeKdR}
{\raisebox{1ex}{\box\graph}}
  \hfill\hfill
\input{latticeAFK}
{\raisebox{1ex}{\box\graph}}
  \hfill\hfill
\input{latticeJOUNG}
{\raisebox{1ex}{\box\graph}}
  \hfill\mbox{}
\caption{Earlier classifications of fairness assumptions}\label{earlier}
\end{figure}
{\sc Apt, Francez \& Katz} \cite{AFK88} consider weak and strong fairness of components, groups of
components and synchronisations, there called processes, channels or groups, and communications, respectively.
The authors work with a restricted subset of CSP, on which the difference between
the weak  notions of fairness  and the corresponding notions of
J-fairness largely disappears.
Most of their strict inclusions (see \fig{earlier}) agree with ours,
except for the ones between WZ, WG and WC\@. Our counterexamples~\ref{ex:WS} and~\ref{ex:WC} cannot
be translated into their subset of CSP\@. In fact, it follows from 
\citep[Prop.\ 5]{AFK88}
that in their setting WZ, WG and WC are equally powerful.

{\sc Joung} \cite{Jo01} considers weak and strong fairness of components and synchronisations,
there called processes and interactions. He also considers many
variants of these notions, not covered here. Restricting his hierarchy to the four notions in
common with ours yields the third lattice of \fig{earlier}.
Joung makes restrictions on the class of modelled systems comparable with those of \cite{AFK88};
as a consequence weak fairness and J-fairness coincide. By this, his lattice
agrees with ours on the common elements.

In \Sec{fairness} we proposed a general template for strong and weak fairness assumptions, which
covers most of the notions reviewed here (but not probabilistic and full fairness).
We believe this template also covers interesting notions of fairness not reviewed here.
 In
\cite{CCP09}, for instance, two transitions belong to the same task in our sense iff they are both
$\tau$-transitions resulting from a $\pi$-calculus synchronisation of actions $\bar ay$ and $a(z)$
that both lay within the scope of the same restriction operator $(\nu a)$.
A more general definition of what counts as a fairness property, given in the form of a 
language-theoretic, a game-theoretic, and a topological characterisation, appears in~\cite{VV12}.

Using fairness assumptions in the verification of liveness properties amounts to saying that if one
tries something often enough, surely it will eventually succeed. Although for some applications this
is useful, in many cases it is unwarranted.
Progress assumptions, on the other hand, are warranted in many situations, and are moreover
indispensable in the verification of liveness properties.
Here we introduced the concept of \emph{justness}, a stronger version of progress, that we believe
is equally warranted, and, moreover, indispensable for proving many important liveness properties.
Special cases of justness are prevalent in the literature on fairness.
However, they often occur as underlying assumptions when studying fairness, rather than as 
alternatives. Moreover, we have not seen them occur in the general form we advocate here.

We showed that as a fairness assumption justness coincides with fairness of events.
This could be interpreted as saying that justness is not a new
concept. However, properly defining fairness of events requires a much more sophisticated machinery
than defining justness. More importantly, this machinery casts it as a fairness assumption, thus making it equally implausible
as other forms of fairness. By recognising the same concept as justness, the precondition of
an infinite series of attempts is removed, and the similarity with progress is stressed.
This makes justness an appealing notion.

As future work we plan to extend the presented framework---in particular justness---to formalisms 
providing some form of asymmetric communication such as a broadcast mechanism. Examples 
for such formalisms are AWN~\cite{FGHMPT12a} a (process-)algebra for wireless networks,
and Petri nets with read arcs~\cite{Vo02}.

\paragraph
{Acknowledgement:} We thank the referees for their careful reading and valuable feedback.

\bibliographystyle{ACM-Reference-Format}
\bibliography{justness}

\newpage
\appendix
\section{CCS}\label{app:CCS}

\begin{table*}[t]
\caption{Structural operational semantics of CCS}
\vspace{-2ex}
\label{tab:CCS}
\normalsize
\begin{center}
\framebox{$\begin{array}{c@{}c@{}c}
\alpha.E \goto{\alpha} E &
\displaystyle\frac{E\goto{\alpha} E'}{E+F \goto{\alpha} E'} &
\displaystyle\frac{F\goto{\alpha} F'}{E+F \goto{\alpha} F'} \\[4ex]
\displaystyle\frac{E\goto{\alpha} E'}{E|F \goto{\alpha} E'|F} &
\displaystyle\frac{E\goto{a} E' ,~ F \goto{\bar{a}} F'}{E|F \goto{\tau} E'| F'} &
\displaystyle\frac{F \goto{\alpha} F'}{E|F \goto{\alpha} E|F'}\\[4ex]
\displaystyle\frac{E \goto{\alpha} E'}{E\backslash a \goto{\alpha} E'\backslash a}~~(a{\neq} \alpha{\neq} \bar a) &
\displaystyle\frac{E \goto{\alpha} E'}{E[f] \goto{f(\alpha)} E'[f]} &
\displaystyle\frac{S(X)[\textbf{fix}_Y S/Y]_{Y\in \dom(S)} {\goto{\alpha}} E}{\textbf{fix}_XS\goto{\alpha}E}
\end{array}$}
\vspace{-2ex}
\end{center}
\end{table*}

CCS \cite{Mi80} is parametrised with a set ${\A}$ of {\em names}.
The set $\bar{\A}$ of {\em co-names} is $\bar\A:=\{\bar{a}
\mid a\mathbin\in {\A}\}$, and $\Lab:=\A \cup \bar\A$ is the set of \emph{labels}.
Finally, \plat{$Act := \Lab\dcup \{\tau\}$} is the set of
{\em actions}. Below, $a$, $b$, $c$, \ldots range over $\Lab$ and
$\alpha$, $\beta$ over $Act$.
A \emph{relabelling} is a function $f\!:\Lab\mathbin\rightarrow \Lab$ satisfying
$f(\bar{a})=\overline{f(a)}$; it extends to $Act$ by $f(\tau):=\tau$.
Let $\V$ be a set $X$, $Y$, \ldots of \emph{process variables}.
The set $\IT_{\rm CCS}$ of CCS expressions is the smallest set including:
\begin{center}
\begin{tabular}{@{}lll@{}}
${\bf 0}$ && \emph{inaction}\\
$\alpha.E$  & for $\alpha\mathbin\in Act$ and $E\mathbin\in\IT_{\rm CCS}$ & \emph{action prefixing}\\
$E+F$ & for $E,F\mathbin\in\IT_{\rm CCS}$ & \emph{choice} \\
$E|F$ & for $E,F\mathbin\in\IT_{\rm CCS}$ & \emph{parallel composition}\\
$E\backslash a $ & for $a\in \A$ and $E\mathbin\in\IT_{\rm CCS}$ & \emph{restriction} \\
$E[f]$ & for $f$ a relabelling and $E\mathbin\in\IT_{\rm CCS}$ & \emph{relabelling} \\
$X$ & for $X\mathbin\in\V$ & \emph{process variable}\\
$\textbf{fix}_XS$ & for $S\!:\V\mathord\rightharpoonup \IT_{\rm CCS}$
and $X\mathord\in \dom(S)$ & \emph{recursion}.
\end{tabular}
\end{center}
A partial function $S\!:\V\mathbin\rightharpoonup \IT_{\rm CCS}$ is called a
\emph{recursive specification},\vspace{1pt}
and traditionally written as \plat{$\{Y\defis S(Y)\mid Y\mathbin\in \dom(S)\}$}.
We often abbreviate $\alpha.{\bf 0}$ by $\alpha$, and $\textbf{fix}_XS$ by ``$X$ where $S$''.
A CCS expression $P$ is \emph{closed} if each occurrence of a process variable $Y$ in $P$ lays within a
subexpression $\textbf{fix}_XS$ of $P$ with $Y\mathbin\in\dom(S)$; $\cT_{\rm CCS}$ denotes the set of closed CCS
expressions, or \emph{processes}.

The traditional semantics of CCS is given by the labelled transition relation
$\mathord\rightarrow \subseteq \cT_{\rm CCS}\times Act \times\cT_{\rm CCS}$ between
closed CCS expressions. The transitions \plat{$p\goto{\alpha}q$} with $p,q\mathbin\in\cT_{\rm CCS}$
and $\alpha\mathbin\in Act$ are derived from the rules of \tab{CCS}.

To extract from each CCS process $P$ a transition system $(S,\Tr,\source,\target,I)$ as in \df{TS},
augmented with a labelling $\ell:\Tr\rightarrow Act$,
take $S$ to be the set of closed CCS expressions reachable from $P$, $I=\{P\}$, and $\Tr$ the set
of \emph{proofs} $\pi$ of transitions \plat{$p\goto{\alpha}q$} with $p\in S$. Here a \emph{proof} of
\plat{$p\goto{\alpha}q$} is a well-founded tree with the nodes labelled by elements of
$\cT_{\rm CCS}\times Act \times\cT_{\rm CCS}$, such that the root has label \plat{$p\goto{\alpha}q$}, and if
$\mu$ is the label of a node and $K$ is the set of labels of the children of this node then
\plat{$\frac{K}{\mu}$} is an instance of a rule of \tab{CCS}.
Of course $\source(\pi)\mathbin=p$, $\target(\pi)\mathbin=q$ and $\ell(\pi)\mathbin=\alpha$.

\section{Defining instructions and components on a fragment of CCS}\label{app:fragment}

In \Sec{taxonomy} we restrict our attention to the fragment of CCS where $\dom(S)$ is finite for all
recursive specifications $S$ and parallel composition does not occur in the expressions $S(Y)$.
Moreover, each occurrence of a process variable $X$ as well as each
occurrence of a parallel composition $H_1|H_2$ in an expression $E+F$ is \emph{guarded}, meaning
that it lays in a subexpression of the form $\alpha.G$.
Given a process $P$, let $\I$ be the set of all occurrences of action prefix operators $\alpha.\_$ in $P$.
To define the function $\instr$, give each such action a different name $n$, and carry these names
through the proofs of transitions, by using\vspace{-1ex}
\[\alpha_n.E \goto{\alpha}_n E \qquad\qquad
  \displaystyle\frac{E\goto{a}_n E' ,~ F \goto{\bar{a}}_m F'}{E|F \goto{\tau}_{n,m} E'| F'} \]
and keeping the same list of names in each of the other rules of \tab{CCS}.
If $\pi\in\Tr$ is a proof of a transition \plat{$p\goto{\alpha}_Z q$} with $Z$ a list of names
(seen as a set), then $\instr(\pi)=Z$.
In \App{unique} we show that property \ref{unique synchronisation} holds for this fragment of CCS\@.
As CCS expressions are finite objects, property \ref{finite} holds as well.

The set $\Ce$ of parallel components of a closed expression $P$ in our fragment of CCS is defined as
the arguments ({\sc left} or {\sc right}) of parallel composition operators occurring in $P$ (or the
entire process $P$). A component
$C\in\Ce$ can be named by a prefix-closed subset of the regular language $\{\Left,\R\}^*$. 
Each action occurrence can be associated to exactly one of those components.
This yields a function $\cmp:\I\rightarrow\Ce$.

\begin{exam}
The CCS expression $a.(P|b.Q)|U$ has at least five components---more if $P$, $Q$ or $U$ have
parallel subcomponents---namely $a.(P|b.Q)$ (named $\Left$), $P$ (named $\Left\Left$), $b.Q$ (named $\Left\R$), and
$U$ (named $\R$), and the entire expression (named $\varepsilon$).
Let $a_1$ and $b_1$ be the indicated occurrences of $a$ and $b$. Then $\cmp(a_1)=\Left$ and $\cmp(b_1)=\Left\R$.
\end{exam}
Now the function $\comp\mathop{:}\Tr\mathbin\rightarrow\Pow(\Ce)$ is given by
$\comp(t)=\{\cmp(I)\mid I\mathbin\in \instr(t)\}$, so that \ref{cmp} is satisfied.

\section{Proof of the Unique Synchronisation Property}\label{app:unique}

An expression $F$ is called a \emph{subexpression} of a CCS expression $E$ iff $F \preceq E$,
where $\preceq$ is the smallest reflexive and transitive relation on CCS expressions satisfying
$E \preceq \alpha.E$,
$E \preceq E+F$, $F \preceq E+F$,
$E \preceq E|F$, $F \preceq E|F$,
$E \preceq E\backslash a$,
$E \preceq E[f]$ and
$S(Y) \preceq \textbf{fix}_X S$ for each $Y \in \dom(S)$.
An \emph{extended subexpression} is defined likewise, but with an extra clause
$\textbf{fix}_Y S \preceq \textbf{fix}_X S$ for each $X,Y\in \dom(S)$.
We say that $F$ has an \emph{unguarded occurrence} in $E$ iff $F \leq E$, where $\leq$ is defined
as $\preceq$, except that the clause $E \preceq \alpha.E$ is skipped, and the clauses for \textbf{fix} are
replaced by $S(X) \leq \textbf{fix}_X S$ and
\begin{center}
if $Y \leq \textbf{fix}_X S$ and $Y \in\dom(S)$ then $S(Y)\leq \textbf{fix}_X S$.
\end{center}
If $\alpha.F\leq E$ then that occurrence of $\alpha$ in $E$ is called \emph{unguarded}.

A \emph{named} CCS expression is an expression $E$ in our fragment of CCS in which each action
occurrence is equipped with a name. It is \emph{well-named} if for each extended
subexpression $F$ of $E$, all unguarded action occurrences in $F$ have a different name, 
and moreover, every subexpression $F|G$ of $E$
satisfies $n(F)\cap n(G)=\emptyset$, where $n(E)$ is the set containing all names of action occurrences in $E$.
Clearly, if all its action occurrences have different names, $E$ is well-named.
\vspace{1ex}

\textsc{Claim 1:}
If $E$ is well-named, then so is any subexpression $F$ of $E$, and $n(F)\subseteq n(E)$.
\vspace{1ex}

\textsc{Proof of Claim 1:} Directly from the definitions.
\hfill \rule{1ex}{7pt}\vspace{1ex}

\textsc{Claim 2:} If $E:=\textbf{fix}_{X} S$ is well-named
then so is $H:=S(X)[\textbf{fix}_Y S/Y]_{Y\in \dom(S)}$, and $n(H)\subseteq n(E)$.%
\vspace{1ex}

\textsc{Proof of  Claim 2:}
Clearly $n(H) \subseteq n(S(X)) \cup \bigcup_{Y\in \dom(S)} n(\textbf{fix}_Y S) \subseteq n(E)$.

By the restrictions imposed on our fragment of CCS, $H$ has no subexpressions of the
form $F|G$. Let $F$ be an extended subexpression of $H$. We have to show that all unguarded action
occurrences in $F$ have different names. Since $F$ is an extended subexpression of $H$, it either
is an extended subexpression of $\textbf{fix}_Y S$ for some $Y\in \dom(S)$, or of the form
$F'[\textbf{fix}_Y S/Y]_{Y\in \dom(S)}$ for an extended subexpression $F'$ of $S(X)$.

In the first case $F$ is also an extended subexpression of $\textbf{fix}_X S$. Since the latter expression is
well-named, all unguarded action occurrences in $F$ have a different name.

In the second case we consider two subcases.
First suppose that no variable $Y\in\dom(S)$ has an unguarded occurrence in $F'$.
Then all unguarded action occurrences in $F$ are in fact unguarded action occurrences in $F'$.
Since $F'$ is an extended subexpression of $E$, they all have a different name.

Now suppose that a variable $Y\in\dom(S)$ has an unguarded occurrence in $F'$.
Then, by the restrictions imposed on our fragment of CCS, $F'$ has no occurrences of either choice,
parallel composition or action prefixing, and $Y$ is the only process variable occurring unguarded
in $F'$. So the unguarded action occurrences in $F$ are in fact unguarded action occurrences in
$\textbf{fix}_Y S$. Since the latter is an extended subexpression of $E$, they all have different names.
\hfill \rule{1ex}{7pt}\vspace{1ex}

\textsc{Claim 3:} If $E$ is well-named and \plat{$E\goto{\alpha}_Z F$}, then so is $F$.
\vspace{1ex}

\textsc{Proof of Claim 3:}
Using and Claims 1 and 2, a trivial structural induction on the proof of transitions shows that if $E$ is
well-named and \plat{$E\goto{\alpha}_Z F$}, then so is $F$, and $n(F)\subseteq n(E)$.
\hfill \rule{1ex}{7pt}\vspace{1ex}

We write $\pi:E {\rightarrow_n}$ if $\pi$ is a proof of $E \goto{\alpha}_Z F$ for some $\alpha$, $F$
and $Z$ such that $n\in Z$.
\vspace{1ex}

\textsc{Claim 4:} If $\pi:E{\rightarrow_n}$ then $E$ has an unguarded action occurrence named $n$.
\vspace{1ex}

\textsc{Proof of Claim 4:}
A straightforward induction on $\pi$.
\hfill \rule{1ex}{7pt}\vspace{1ex}

\textsc{Claim 5:}
If $E$ is well-named and $Z \subseteq \I$ then 
there is at most one proof of a transition \plat{$E \goto{\alpha}_Z F$}.
\vspace{1ex}

\textsc{Proof of Claim 5:}
A straightforward induction on the proof of \plat{$E\hspace{-1.1pt} \mathop{\goto{\alpha}_Z}\hspace{-1.1pt} F\!$}, using Claims 1,\! 2 and 4.%
\hfill \rule{1ex}{7pt}\vspace{1ex}

That property \ref{unique synchronisation} holds for the fragment of CCS of \App{fragment} follows from Claims 3 and 5.

\end{document}